\documentclass[aps, pra, showpacs, superscriptaddress, 10pt, twocolumn]{revtex4-1}

\usepackage{amsfonts,amssymb,amsmath,amsthm,longtable,array}
\usepackage{graphics,graphicx,color}
\usepackage{epstopdf,epsfig}
\newcommand{\ket}[1]{ | #1 \rangle}
\newcommand{\bra}[1]{ \langle #1  |}

\newcommand{\ketbra}[1]{ | #1 \rangle \langle #1  |}

\newcommand{\beq}{\begin{equation}}
\newcommand{\eeq}{\end{equation}}
\newcommand{\ben}{\begin{eqnarray}}
\newcommand{\een}{\end{eqnarray}}
\usepackage{braket}
\usepackage{todonotes}
\usepackage{natbib}
\usepackage[shortlabels]{enumitem}

\newcommand\oa[1]{\text{OA}\left( #1  \right) }

\newtheorem{lemma}{Lemma}
\newtheorem{observation}{Observation}
\newtheorem{proposition}{Proposition}

\newtheorem{corollary}{Corollary}
\newtheorem{conjecture}{Conjecture}

\theoremstyle{definition}
\newtheorem{definition}{Definition}
\newtheorem{example}{Example}
\newtheorem{remark}{Remark}

\DeclareMathOperator{\tr}{tr}
\DeclareMathOperator{\Id}{Id}

\usepackage{braket}

\usepackage{tikz}
\usetikzlibrary{fit}
\tikzset{%
  highlight/.style={rectangle,rounded corners,fill=red!15,draw,fill opacity=0.5,thick,inner sep=0pt}
}
\newcommand{\tikzmark}[2]{\tikz[overlay,remember picture,baseline=(#1.base)] \node (#1) {#2};}

\tikzset{%
  highlight/.style={rectangle,rounded corners,fill=red!15,draw,fill opacity=0.5,thick,inner sep=0pt}
}
\newcommand{\Highlight}[1][submatrix]{%
    \tikz[overlay,remember picture]{
    \node[highlight,fit=(left.north west) (right.south east)] (#1) {};}
}
\newcommand{\tikzmarkk}[2]{\tikz[overlay,remember picture,baseline=(#1.base)] \node (#1) {#2};}

\tikzset{%
  highlightt/.style={rectangle,rounded corners,fill=blue!15,draw,fill opacity=0.5,thick,inner sep=0pt}
}
\newcommand{\Highlightt}[1][submatrix]{%
    \tikz[overlay,remember picture]{
    \node[highlightt,fit=(up.north west) (down.south east)] (#1) {};}
}

\usetikzlibrary[positioning]
\usetikzlibrary{patterns}

\usetikzlibrary{shapes.arrows}
\usepackage{centernot}
\usepackage{mathtools}
\usepackage{ stmaryrd }

\newcommand\y{\cellcolor{green!10}}

\usepackage{xparse}
\usetikzlibrary{matrix,backgrounds}
\pgfdeclarelayer{myback}
\pgfsetlayers{myback,background,main}
\tikzset{mycolor/.style = {line width=1bp,color=#1}}%
\tikzset{myfillcolor/.style = {draw,fill=#1}}%

\NewDocumentCommand{\highlight}{O{blue!40} m m}{%
\draw[mycolor=#1] (#2.north west)rectangle (#3.south east);
}

\NewDocumentCommand{\fhighlight}{O{blue!40} m m}{%
\draw[myfillcolor=#1] (#2.north west)rectangle (#3.south east);
}

\PassOptionsToPackage{breaklinks}{hyperref}
\usepackage[colorlinks,allcolors=black]{hyperref}

\usepackage{cleveref}
\crefname{proposition}{Proposition}{Propositions}
\crefname{definition}{Definition}{Definitions}
\crefname{lemma}{Lemma}{Lemmas}
\crefname{figure}{Figure}{Figures}
\crefname{corollary}{Corollary}{Corollary}
\crefname{conjecture}{Conjecture}{Conjectures}
\crefname{section}{Section}{Sections}
\crefname{appendix}{Appendix}{Appendixes}
\crefname{observation}{Observation}{Observation}
\crefname{remark}{Remark}{Remark}
\crefname{example}{Example}{Examples}
\crefname{equation}{Eq.}{Eqs.}
\crefname{table}{Table}{Tables}

\begin{document}
\setcitestyle{numbers,acm}
\title{Stochastic Local Operations with Classical Communication of \\ Absolutely Maximally Entangled States}

\author{Adam Burchardt}
\affiliation{Institute of Physics, Jagiellonian University, \L{}ojasiewicza 11,
	30-348 Krak\'ow, Poland}
	
\author{Zahra Raissi}
\affiliation{ICFO-Institut de Ciencies Fotoniques, The Barcelona Institute of Science and Technology, 08860 Castelldefels (Barcelona), Spain}
	

\begin{abstract}
Absolutely Maximally Entangled (AME) states are maximally entangled for every bipartition of the system. They are crucial resources for various quantum information protocols. 
We present techniques for verifying either two AME states are equivalent concerning Stochastic Local Operations and Classical Communication (SLOCC). 
The conjecture that for a given multipartite quantum system all AME states are SLOCC-equivalent is falsified. 
We also show that the existence of AME states with minimal support of 6 or more particles results in the existence of infinitely many such non-SLOCC-equivalent states. 
Moreover, we present AME states which are not SLOCC-equivalent to the existing AME states with minimal support.
\end{abstract}
	
\maketitle
	
\section{Introduction}

Entanglement of bipartite states is a widely discussed problem and, in fact, already very well understood \cite{PhysRevA.67.022320}. 
However, quantification of entanglement for multipartite states remains a challenge \cite{LUupto5qubits}. 
In particular, according to different entanglement measures for multipartite states  (like the tangle, the Schmidt measure, the localizable entanglement, or geometric measure of entanglement), the states with the largest entanglement do not overlap in general.

We discuss states known as Absolutely Maximally Entangled (AME), which are maximally entangled for every bipartition of the system. 
AME states are being applied in several branches of quantum information theory: in quantum secret sharing protocols \cite{HelwigAME}, in parallel open-destination teleportation \cite{helwig2013absolutely}, in holographic quantum error correcting codes \cite{Pastawski2015HolographicQE}, among many others. 
Different families of AME states have been introduced \cite{Rains1999NonbinaryQC,Helwig2013AbsolutelyME} and the problem of their existence is being investigated \cite{PhysRevA.69.052330,HIGUCHI2000213,Felix72}. 
It has been demonstrated that the simplest class of AME states, namely AME states with the \emph{minimal support}, is in one-to-one correspondence with the classical error correction codes \cite{AME-QECC-Zahra} and combinatorial designs known as \emph{Orthogonal Arrays} \cite{ComDesi}. 
Henceforward, two-way interaction with combinatorial designs and quantum error correction codes is observed \cite{PhysRevA.69.052330,DiK}. 
AME states are special cases of \emph{k-uniform} states characterized by the property that all of their reductions to k parties are maximally mixed \cite{kUNI}.



Since the entanglement quantification becomes an ambitious project while the number of parties in a system increases, the problem of satisfactory classification of states turns to be essential \cite{LUupto5qubits}. 
The state space might be partitioned into equivalence classes 
with respect to 
selected class of local operations \cite{PhysRevLett.83.436}. 
Any two states from one class are interconvertible by an adequate local operator, while such a transformation cannot be provided for states from different classes. 
Nevertheless, it is not obvious which class of local operators provides the ultimate division of state space. 
One of the reasonable choices was the division according to Local Unitary operations (LU) \cite{LUupto5qubits}. 
Two states $\ket{\psi}$ and $\ket{\phi}$ belong to the same LU class iff there exists a local unitary operator transforming one into the other: 
$\Ket{\phi} = U_1 \otimes \cdots \otimes U_N \Ket{\psi}$.

The fact that entanglement is used for the transmission of information between parties far apart, restrict us to the LU operations. 
Nevertheless, we may also allow classical information to be transmitted between the distant parties. 
This leads us to supersede the class of LU operations with the Local Operators and Classical Communications (LOCC) \cite{EverythingYouAlwaysWantedtoKnowAboutLOCC,LocalQuantumTransformationsRequiringInfiniteRoundsofClassicalCommunication}. 
It is known that if the state $\ket{\psi}$ can be transformed into $\ket{\phi}$ by using LOCC operations only, $\ket{\psi}$ possesses at least as much entanglement as $\ket{\phi}$. 
In general, this transformation cannot be inverted, and hence LOCC imposes the partial order on the state space. 
Nevertheless, one may study whether two states are LOCC equivalent with a non-vanishing probability of success. 
Such operations are known as stochastic LOCC (SLOCC).
Mathematically, two states $\ket{\psi}$ and $\ket{\phi}$ belong to the same SLOCC class iff there exists a local invertible operator transforming one into the other: 
$\Ket{\phi} = O_1 \otimes \cdots \otimes O_N \Ket{\psi}$ \cite{ThreeQub}. 
Thus, the state space might be partitioned into SLOCC-classes.

The number of SLOCC-classes rapidly increases with the number of parties $N$ and the local dimension $d$ of a given system. 
For instance, all bipartite pure states are equivalent by SLOCC \cite{PhysRevLett.83.436}. 
In the simplest multipartite system, namely three qubits system, there are precisely six distinct SLOCC classes (only two of them among fully entangled states, represented by GHZ and W states respectively) \cite{ThreeQub}. Those systems are the last ones with a finite number of SLOCC classes, for $N>3$ or $d>2$ there are infinitely many SLOCC classes \cite{ThreeQub}. 
Despite of this fact, all 4-partite qubit sates were classified into nine families, some of them with infinitely many SLOCC classes but of a similar structure \cite{FourQubits}. 
This result was later corrected to eight such classes, while one of the proposed families turned out to be not fully entangled \cite{FourQubits8}. 

For larger quantum systems a comprehensive and satisfactory description of  SLOCC-classes structure has not been established yet. 
Several studies in this area lead us in constructing invariants under LU and SLOCC transformations for three parties in \cite{threeQubits}, multipartite pure quantum systems in \cite{Wang_2014,RaG,Vrana2} the mixed multipartite states in \cite{Warna}.
It is known that that polynomial invariants are completely characterizing LU-equivalence classes \cite{PhysRevA.88.042304} and that the number of non-zero polynomials is a function of a local dimension $d$. 
SLOCC invariant polynomials were found for three qubit systems \cite{PhysRevLett.80.2245,PhysRevA.61.052306}; four qubit systems \cite{PhysRevLett.86.5188,PhysRevA.67.042303} and generally for multi-qubit systems \cite{PhysRevA.69.052304,PhysRevA.72.012337}. 
In qudit case, an interesting attempt of providing polynomial invariants has been also taken \cite{Szalay_2012}. 
Despite many attempts of classification of these polynomials \cite{SLOCCallDim}, and enhancing them with the physical meaning, their structure remains inscrutable. 
Many efforts have also focused on LU and SLOCC equivalence of stabilizer states \cite{gottesman1997stabilizer}, matrix-product states and projected entangled pair states \cite{8589272}, Gaussian states \cite{RevModPhys.77.513,PhysRevA.97.042325}, Locally Maximally Entangleable States (LMESs) \cite{PhysRevA.79.052304}, or Generalized Bell states \cite{PhysRevA.98.022304}. 

The initial motivation for our work was the question whether different constructions of AME states are equivalent by any local transformation. 
It was already shown that some $k$-uniform states are not LU (and in fact not SLOCC) equivalent \cite{raissi2019new}. 
This result was based on a comparison of the ranks of reduced density matrices \cite{RanksReducedMatrices}. 
Nevertheless, in some specific cases, the aforementioned rank argument is never conclusive. 
This is the case when two $k$-uniform states are of minimal support, or when the bound on $k$ is saturated, i.e. in the case of AME states. 
In this paper, we develop techniques of SLOCC-verification between such states. 
We provide methods of SLOCC-equivalence verification for all $k$-uniform states with minimal support.
We show that the conjecture of all AME states being LU- and SLOCC-equivalent does not hold. 
In particular, we show that some AME states cannot be transformed into existing minimal support form by any local invertible operation. 
Moreover, the class of LU/SLOCC-transformation is widely investigated and the systematic way for verification of LU/SLOCC-equivalence of AME states and $k$-uniform with minimal support is provided. 
We expose the vital contrast between AME states of small systems (up to 5 parties) and larger systems. 
In particular, for larger systems there exist infinitely many non-LU/SLOCC-equivalent AME states of minimal support differing only by phases. 
Additionally, we emphasise the essential difference between local transformation of $k$-uniform states or AME states of odd number of parties ($k<N/2$) and AME states of even number of parties ($k = N/2$). 
The structure of the latter is more complex and non-classical in some sense. 
Despite the refined analysis of this case, obtained results are still intricate and dependent on specific cases. 

The paper is organized as follows. 
In \cref{sec1}, we recall construction methods of $k$-uniform and AME states known from the literature, we provide several explicit examples of such states. Moreover, we discuss the relation between LU- and SLOCC-equivalences restricted to $k$-uniform and AME states. 
The main results of our paper are presented in the following three consecutive sections. 
\cref{31} discusses local transformation of $k$-uniform states or AME states of odd number of parties ($k<N/2$). 
Similar results concerning AME states of even number of parties ($k = N/2$) are presented in \cref{32}. 
\cref{33} applies general results obtained in the previous sections. 
We present several examples of LU and SLOCC non-equivalent $k$-uniform states. 
The precise number of LU- and SLOCC-classes of AME states with minimal support is specified. 
Moreover, the non-trivial bounds on the number of such classes of general AME states are given. 
In \cref{Combinatorial designs}, we discuss some classes of combinatorial designs directly related to our problem. 
Existence and extension of those designs turned out to be crucial to obtain the aforementioned results. 
This phenomenon is presented in detail. 
Summary and conclusions are presented in \cref{Conclusions}. 
Further discussion and open problems are left for \cref{Further discussion and open problems}. 
Proofs of statements included in \cref{31} might be found in \cref{app1,verification2APP,AppD}; whereas claims presented in \cref{32} are justified in \cref{The proof Case II}.

\section{Notation and Preliminaries}		
\label{sec1}
		
\subsection{$k$-uniform states and AME states}		

Consider a multipartite quantum state $\ket{\psi} \in \mathcal{H}_d^{\otimes N}$ of $N$ parties with a local dimension $d$ each. 
We say that $\ket{\psi}$ is a \emph{$k$-uniform} state if its reduced density matrices are maximally mixed, i.e.
\[
\rho_S (\psi ) \propto \Id
\]
for any subsystem $S$ of $k$ parties ($|S|$=k). 
The uniformity $k$ cannot exceed $\lfloor N/2 \rfloor$ \cite{PhysRevA.69.052330}. 
States which saturate this bound, i.e. $\lfloor N/2 \rfloor$-uniform states, are called \emph{Absolutely Maximally entangled} (AME) states, and are denoted by AME(N,d).
Particular attention is paid to AME states of an even number of parties, which are equivalent to notions as perfect tensors \cite{Pastawski2015HolographicQE} or multiunitary matrices \cite{MultiUnitary}.

The \textit{support} of a state $\ket{\psi}$ is the number of non-zero coefficients when $\ket{\psi}$  is written in the computational basis. Note, that the support of $k$-uniform state is at least $d^{N-k}$. Indeed, the partial trace over $N-K$ particles is an identity matrix $\Id_{d^{N-k}}$. $k$-uniform states with support equal to $d^{N-k}$ are called of \textit{minimal support}. 
$k$-uniform states are a natural generalization of well-established  GHZ state. 

\begin{example}
\label{ex1}
Greenberger–Horne–Zeilinger (GHZ) state
\[
\ket{\text{GHZ}} = \dfrac{1}{\sqrt{2}} \Big( \ket{00 0}  +\ket{111} \Big)
\] 
is a $1$-uniform state of minimal support. 
Similarly, its natural generalization to $N$ parties qudit states (each partie has exactly $d$ distinguishable energy levels):
\[
\ket{\text{GHZ}_d^N} =\dfrac{1}{\sqrt{d}}\Big( \ket{0\cdots 0} +\cdots +\ket{d-1 \cdots d-1} \Big)
\] 
is $1$-uniform state of minimal support. 
\end{example}

It is worth mentioning that GHZ state is maximising entanglement properties among all three qudit states. This statement, however, is not true anymore for larger systems \cite{SpeeKraus}.

\begin{example}
\label{ex2}
The following state of four qutrits
\begin{align*}
\ket{\text{AME(4,3)}} = \dfrac{1}{3} \Big(
&\ket{0000}+ \ket{0121}+ \ket{0212}+\\
&\ket{1110}+ \ket{1201}+ \ket{1022}+&&\\
&\ket{2220}+ \ket{2011}+ \ket{2102} \Big)&&
\end{align*}
is an AME(4,3) state of minimal support \cite{Helwig2013AbsolutelyME}. 
It reveals larger entanglement properties than a relevant GHZ state.
\end{example}

AME(N,d) states are maximizing entanglement properties among all $N$-parties states, each with $d$ levels \cite{HelwigAME}. 
There is no general construction of AME(N,d) state, and in fact, they do not exist for any numbers $N$ and $d$. 
Indeed, it was first observed that AME state of four qubits does not exist \cite{HIGUCHI2000213}. 
Nowadays, more of such negative results are known \cite{Felix72}. 
Some cases, as AME(4,6) are believed to not exists, despite the mathematical proof is still missing \cite{ComDesi}. 

We would like to finish this section with two remarkable observations. Firstly, all known $k$-uniform and AME states might be written by simple closed formulas. For instance, 
\begin{align}
\ket{\text{GHZ}} =&\dfrac{1}{\sqrt{d}}\sum_{i=0}^{d-1}\ket{i,\ldots, i},\nonumber\\ 
\ket{\text{AME(4,3)}} =&\dfrac{1}{d} \sum_{i,j=0}^{d-1}\ket{i,j,i+j,2i+j}, \label{AME43p} 
\end{align}
are relevant to GHZ state and AME(4,3) presented in \cref{ex1,ex2} 

Secondly, not all k-uniform states are of minimal support. 
It is rather easy to verify that AME states with minimal support of five or six qubits do not exist. 
Nevertheless, the construction of AME(6,2) was provided \cite{Rains1999NonbinaryQC,ComDesi}.  
We present one example of AME states with non-minimal support relevant to the future discussion.

\begin{example}
\label{ex3}
Consider the following states
\begin{align*}
\ket{\text{AME(5,d)}} =\dfrac{1}{\sqrt{d^3}} \sum_{i,j,k=0}^{d-1} &\omega^{(3i+j) k} \\
&\ket{i,j,i+j,2i+j +k,k} ,
\end{align*}
where $\omega$ is $d$th root of unity. 
$\ket{\text{AME(5,d)}} $ satisfy all properties required from AME states for any integer number $d\geq 2$ \cite{ComDesi,Rains1999NonbinaryQC}. 
They cannot be written, however, in the minimal support form.
\end{example}


\subsection{Orthogonal arrays}

Orthogonal arrays \cite{OA} are combinatorial arrangements, tables with entries satisfying given orthogonal properties. They were created in response to optimization problems in statistical analysis.. Their most famous application can be summarized in one sentence: "Your automobile lasts longer today because of orthogonal arrays" \cite{OAcars}.

A close connection between OAs and maximally entangled states \cite{DiK}, error-correcting codes \cite{OA} brought a new life for these combinatorial objects.  
Forasmuch some of OAs might are in one-to-one correspondence with $k$-uniform states, the concept of OA is briefly presented below.

An orthogonal array $\oa{r,N,d,k}$ is a table composed by $r$ rows, $N$ columns with entries taken from $0,\ldots,d-1$ in such a way that each subset of $k$ columns contains all possible combinations of symbols with the same amount of repetitions. The number of such repetitions is called \textit{the index} of the OA and denoted by $\lambda$. One may observe, that the index of OA is related to the other parameters:
\[
\lambda =\dfrac{r}{d^k} .
\]
OA with $\lambda =1$ is called OA of \emph{Index Unity}. 
\cref{OA1} presents an example of an Index Unity OA. 

A pure quantum state consisting of $r$ terms might be associated with $\oa{r,N,d,k}$, simply by reading all rows of OA \cite{DiK,OA}. 
With a little more effort, one may adjust phases $\omega_1,\cdots , \omega_r$ in front of any term, see \cref{eq2}. Intriguingly, this relevance provides a one-to-one correspondence between $k$-uniform states of the minimal support and OAs of index unity.

\begin{proposition}
There is one-to-one correspondence between $k$-uniform states with the minimal support of $N$ qudits and OA $\oa{d^k,N,d,k}$ enhanced with the phase vectors
\[
\big( \omega_1 ,\ldots ,\omega_{d^k} \big) ,
\]
where $|\omega_i |=1$.
\end{proposition}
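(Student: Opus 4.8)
The plan is to write down the two maps that should implement the bijection and check they are mutually inverse.

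\emph{From arrays to states.} Take an $\oa{d^k,N,d,k}$ --- which automatically has index $\lambda = d^k/d^k = 1$ --- with rows $a_1,\dots,a_{d^k}$ read as strings in $\{0,\dots,d-1\}^N$, together with a phase vector $(\omega_1,\dots,\omega_{d^k})$, $|\omega_i|=1$, and form
\[
\ket{\psi} \;=\; \frac{1}{\sqrt{d^k}}\sum_{i=1}^{d^k}\omega_i\,\ket{a_i}.
\]
I would first record the two elementary consequences of index unity that do all the work: the rows are pairwise distinct, and more sharply, two distinct rows agree in at most $k-1$ coordinates (agreement on some $k$ coordinates would make the corresponding $k$-tuple occur at least twice). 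Then, for an arbitrary $k$-subset $S$ of the parties, I compute $\rho_S=\tr_{\bar S}\ketbra{\psi}$ directly: its $(u,u')$ entry, with $u,u'\in\{0,\dots,d-1\}^k$, is $d^{-k}$ times the sum of $\omega_i\bar\omega_j$ over pairs of rows with $a_i|_S=u$, $a_j|_S=u'$ and $a_i|_{\bar S}=a_j|_{\bar S}$. Index unity makes the row with a prescribed restriction to $S$ unique, so each diagonal entry equals $d^{-k}$; and since $|\bar S|=N-k\ge k>k-1$, two distinct rows cannot coincide on all of $\bar S$, so every off-diagonal sum is empty. Hence $\rho_S=\Id/d^k$ for every $k$-subset $S$, so $\ket{\psi}$ is $k$-uniform, and it has exactly $d^k$ non-zero amplitudes, which is the minimum since any $k$-uniform state has support at least $\operatorname{rank}\rho_S=d^k$.

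\emph{From states to arrays.} Let $\ket{\psi}=\sum_{a\in A}c_a\ket{a}$ be a $k$-uniform state of minimal support, so $|A|=d^k$ and every $c_a\neq0$. For a $k$-subset $S$ the range of $\rho_S$ lies in the span of $\{\ket{a|_S}:a\in A\}$, hence $d^k=\operatorname{rank}\rho_S\le|\{a|_S:a\in A\}|\le|A|=d^k$; equality forces $a\mapsto a|_S$ to be a bijection onto $\{0,\dots,d-1\}^k$. Letting $S$ range over all $k$-subsets, this is precisely the statement that $A$ is an $\oa{d^k,N,d,k}$. Reading off the diagonal of $\rho_S=\Id/d^k$ and using once more that the row with a given restriction to $S$ is unique yields $|c_a|^2=d^{-k}$ for all $a$, so $c_a=d^{-k/2}\omega_a$ with $|\omega_a|=1$; thus $\ket{\psi}$ is the state built above from the array $A$ and the phase vector $(\omega_a)_{a\in A}$.

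Finally, the two constructions are manifestly inverse to one another: reading the support and the amplitudes of $\ket{\psi}$ returns the array and the phases, and conversely. One convention should be fixed beforehand --- an array together with a phase vector is to be taken modulo a simultaneous permutation of the rows (equivalently, as a set of rows each decorated with a phase), and, if states are regarded projectively, the phase vectors modulo an overall phase. I do not expect a genuine obstacle: the argument is bookkeeping anchored on index unity. The two points that need care are the support count --- that ``minimal support'' here means exactly $d^k$ non-zero amplitudes, matching the $d^k$ rows of the array --- and verifying that the off-diagonal part of $\rho_S$ is \emph{forced} to vanish rather than merely allowed to, which is exactly where $N-k\ge k$, i.e.\ $k\le\lfloor N/2\rfloor$, is used.
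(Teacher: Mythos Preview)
Your proof is correct and follows the same natural route the paper indicates. The paper itself does not give a detailed proof of this proposition; it only states the correspondence (``read the rows and attach phases; conversely, erase the phases and stack the terms''). You supply exactly the verifications the paper omits: that index unity forces any two distinct rows to differ in at least $N-k+1$ coordinates, whence every $k$-party reduction is maximally mixed; and conversely that full rank of $\rho_S$ together with $|A|=d^k$ forces the restriction $a\mapsto a|_S$ to be a bijection, yielding the OA axioms and the equal moduli of the amplitudes. Your explicit remark on the conventions (arrays up to row permutation, phases up to a global factor) is a useful refinement that the paper only alludes to in passing.
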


Form $\oa{d^k,N,d,k}$ the $k$-uniform state is created by reading all terms and adjusting them with the relevant phases $\omega_i$.
Conversely, from AME state the OA of index unity might be built, simply by erasing phases and adjusting all terms one above the other. 

\begin{example}
\label{eq2}
For any phases $|\omega_{i,j} | =1 $ the following state is 2-uniform:
\begin{equation*}
\ket{\text{AME(4,3)}_\omega} =\dfrac{1}{d} \sum_{i,j=0}^{d-1} \omega_{i,j} \ket{i,j,i+j,2i+j}.
\end{equation*}
It is defined uniquely up to global phase.
\end{example}

\begin{figure}[h!]
\centering
\[
 \begin{array}{*4{c}}
    \tikzmark{left}{0} &
0&0&0 \\
0&1&1&1\\
0&2&2&2\\
1&0&1&2\\
1&1&0&2\\
1&2&0&1\\
2&0&2&1\\
2&1&0&2\\
2&    \tikzmark{right}{2}&1&0
\Highlight[first]
  \end{array}
  \qquad
  \begin{array}{*5{c}}
0&
\tikzmarkk{up}{0}&0&0 \\
0&1&1&1\\
0&2&2&2\\
1&0&1&2\\
1&1&0&2\\
1&2&0&1\\
2&0&2&1\\
2&1&0&2\\
2&  2&\tikzmarkk{down}{1}&0
\Highlightt[first]
  \end{array}
      \qquad
 \begin{array}{*4{c}}
\ket{\text{AME(4,3)}}&\propto & \omega_{00}&\ket{0000} \\
 &+&\omega_{01}&\ket{0111} \\
&+&\omega_{02}&\ket{0222}\\
&+&\omega_{10}&\ket{1012} \\
&+&\omega_{11}&\ket{1120}\\
&+&\omega_{12}&\ket{1201}\\
&+&\omega_{20}&\ket{2021} \\
&+&\omega_{21}&\ket{2102} \\
&+&\omega_{22}&\ket{2210}\\
  \end{array}
\]
\Highlightt[second]
\caption{The orthogonal array of unity index $\oa{9,4,3,2}$ on the left and repeated in the center. Each subset consisting of two columns contains all possible combinations of symbols. Here, two such subsets are highlighted. The relevant quantum state is obtained by forming a superposition of states corresponding to consecutive rows of the array enhanced by some phases -- see the expression
on the right.}
\label{OA1}
\centering
\end{figure}

It is known that for any number $N$ and $k$ there exists $\oa{d^k,N,d,k}$ for a local dimension $d$ being sufficiently large (in fact, such construction is given for $d$ being a prime power satisfying $d>k$ and $d>N-1$) \cite{Bush}. Hence, for any number of parties $N$ the $k$-uniform state with minimal support might be created where the component systems have a sufficiently large number of levels. 
The problem of existence and classification of OAs (in particular OAs of index unity) has been extensively studied \cite{HedayatIndex1OA,BushStudies}. 
We refer to the web page of N. J. A. Sloane for tables of OAs \cite{OAlib}.

\begin{example}
\label{ex5}
The following states
\begin{align*}
\ket{\text{AME(5,d)'}} =&\dfrac{1}{d} \sum_{i,j=0}^{d-1} \ket{i,j,i+j,2i+j,3i+j} 
\end{align*}
are AME state with minimal support for all $d\geq 5$ being prime numbers \cite{AME-QECC-Zahra}. 

In fact, with a little more effort such states might be constructed for all prime powers $d \geq 4$ \cite{AME-QECC-Zahra}.
For instance, the following state
\begin{align*}
\ket{\text{AME(4,4)}} =&\dfrac{1}{4} \sum_{i,j=0}^{d-1} \ket{i,j,M^1_{i,j},M^2_{i,j}}
\end{align*}
where
 \begin{align*}
 \mbox{\small$\displaystyle
  M^1 :=
 \begin{pmatrix}
0& 1& 2& 3\\
1& 0 &  3& 2\\
2& 3 & 0 &1\\
3& 2& 1& 0
\end{pmatrix}
, \quad
 M^2 :=
 \begin{pmatrix}
0& 2& 3& 1\\
1& 3 &  2& 0\\
2& 0 & 1 &3\\
3& 1& 0& 2
\end{pmatrix}
;
$}
\end{align*}
is an AME(4,4) state of minimal support (rows and columns of $M^1, M^2$ are indexed by $i,j=0,\ldots,3$) \cite{AME-QECC-Zahra}. Construction of matrices $M^1$ and $M^2$ comes from the multiplication structure in the Galois field GF(4), which might be seen as a multiplication of irreducible polynomials of degree 2 \cite{AME-QECC-Zahra}. 
In fact, matrices $M^1, M^2$ form a mutually orthogonal Latin square MOLS(4), see \cref{MOLH} for details.
\end{example}

As we already mentioned in \cref{ex3}, not all k-uniform states are of minimal support, which simply means that not all AME states are obtained from OAs of index unity. 

\subsection{Composed systems}

For any two $k$-uniform states $\ket{\psi_1}$ and $\ket{\psi_2}$, one may consider the composed system $\ket{\psi_1} \otimes \ket{\psi_2}$, which 
inherits the property of being $k$-uniform. For instance, the following state:
\begin{align}
\label{composedAME}
\ket{\text{AME(4,9)}_{3\times 3}} =\dfrac{1}{9} 
\sum_{\substack{i,j=0\\ k, \ell=0}}^{2} &
\big\vert (i,k),(j,\ell),\\ 
&(i+j,k+\ell),(i+2j,k+2\ell) \big\rangle \nonumber
\end{align}
is a composition of two $\ket{\text{AME(4,3)}} $ states from \cref{AME43p}. Each pair $(i,k)$ is identified with a number $0,\ldots,8$ written in the ternary numeral system, i.e. $(i,k )\cong 3i+j $.
\subsection{Local transformations}

Two $N$ qudit states $\Ket{\psi}$ and $\Ket{\phi}$ are LU-equivalent if one can be transformed into another by local unitary operators, i.e
\[
\Ket{\phi} = U_1 \otimes \cdots \otimes U_N \Ket{\psi}.
\]
The LU-equivalence of $\Ket{\psi}$ oneself is referred in the text as an \emph{automorphism}. 

Mathematically, two states $\Ket{\psi}$ and $\Ket{\phi}$ are SLOCC-equivalent if and only if there exists a \emph{local invertible} operator connecting those states \cite{ThreeQub}:
\[
\Ket{\phi} = O_1 \otimes \cdots \otimes O_N \Ket{\psi}.
\]
Since LU- and SLOCC-equivalences are equivalence relations, the state space might be naturally partitioned into \emph{LU classes} and \emph{SLOCC classes} respectively.

\subsection{The structure of SLOCC classes}

We present a brief outline of some algebraic invariant methods interconnected to the SLOCC partition problem of multipartite entangled states. 
We introduce the notion of critical states and we discuss consequences of Kempf-Ness theorem \cite{KempfNess} for multipartite systems. 
We refer to \cite{KempfNessToEntanglement} for more details. 

The state $\rho$ is called a \emph{critical state} if its all reduced density matrices $\rho_i$ are proportional to the identity. 
In particular, the class of critical states contains stabilizer states, cluster states, and all $k$-uniform states among many others \cite{KempfNessToEntanglement}.

Notice, that the critical states were initially defined differently, via an action of the Lie group associated with the state space. 
By applying Kempf-Ness theorem \cite{KempfNess} it was later observed that, indeed, states are critical if and only if they are maximally entangled \cite{GourWallach}. 

Kempf-Ness theorem has one more significant consequence for multipartite quantum states. It follows that within one SLOCC class, the critical states are unique up to LU-equivalences. Therefore, such classes posses the canonical representative. 

Notice that not all SLOCC classes contain a critical state, and hence there is no one-to-one correspondence between SLOCC classes and maximally mixed states. More precisely, each SLOCC class is topologically closed (equivalently closed with respect to Zariski topology) if and only if it contains a critical state \cite{sowik2019link}. In fact, closed SLOCC classes are dense in a state space \cite{NotesWallach}.

We conclude this discussion with the following corollary. 

\begin{corollary}
\label{LU=SLOCC}
Two critical states are in the same SLOCC class if and only if they are LU equivalent. 
Notice, that all $k$-uniform states are critical states.
\end{corollary}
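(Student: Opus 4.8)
The plan is to obtain both directions of the equivalence directly from the consequences of the Kempf--Ness theorem recalled just above, so that the proof is genuinely short.

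First I would dispose of the easy implication. If $\ket{\psi}$ and $\ket{\phi}$ are LU-equivalent, then $\ket{\phi} = U_1 \otimes \cdots \otimes U_N \ket{\psi}$ with every $U_i$ unitary and hence invertible; the very same identity then witnesses SLOCC-equivalence. This step uses nothing but the definitions and holds for arbitrary states, critical or not.

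For the converse, suppose two critical states $\ket{\psi}$ and $\ket{\phi}$ lie in one common SLOCC class. The consequence of Kempf--Ness stated above says precisely that the critical states contained in a fixed SLOCC class form a single LU-orbit, i.e.\ they are unique up to LU-equivalence. Since $\ket{\psi}$ and $\ket{\phi}$ are both critical and both members of that class, they must therefore be LU-equivalent, which is the assertion of the corollary.

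It then remains to check the final sentence, that each $k$-uniform state $\ket{\psi}$ is critical --- equivalently, that all of its one-party reductions are maximally mixed. Here I would observe that $k$-uniformity implies $1$-uniformity: given a single party $a$, pick any $k$-element subset $S$ with $a \in S$; since $\rho_S(\psi) \propto \Id$ by hypothesis, tracing out $S \setminus \{a\}$ yields $\rho_{\{a\}}(\psi) \propto \Id$. Hence every $\rho_i$ is proportional to the identity and $\ket{\psi}$ is critical by definition. The only place calling for (mild) care is ensuring the Kempf--Ness uniqueness statement is invoked in exactly this form; there is no real obstacle, which is why the statement is phrased as a corollary rather than a theorem.
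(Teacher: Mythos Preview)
Your argument is correct and mirrors the paper's reasoning: the paper does not give an explicit proof of this corollary, but deduces it directly from the stated consequence of the Kempf--Ness theorem that critical states within a single SLOCC class are unique up to LU-equivalence, together with the definition of critical states as those with all one-party marginals maximally mixed. Your reduction from $k$-uniformity to $1$-uniformity via partial tracing is the right way to fill in the last sentence, and this is exactly what the paper leaves implicit.
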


Therefore, verification of LU-equivalence between two $k$-uniform states is equivalent to verification of SLOCC-equivalence between them.


\section{Local equivalences, case $2k<N$}
\label{31}
We introduce another class of local unitary operations, essential for the classification problem of $k$-uniform states.

\begin{definition}
A unitary matrix $M$  is called \textit{unitary monomial matrix} if one of the following holds:
\begin{enumerate}
\item $M$ has exactly one nonzero entry in each row and each column;
\item $M$ is a product of a permutation and diagonal matrix;
\item $M$ does not change the support of any quantum state
\end{enumerate}
For multiparty systems, the local monomial operation will be denoted as $LM$-equivalency.
\end{definition}

One can see that all conditions: 1-3 are, indeed, equivalent. 
Obviously, each local monomial operation provides the LU-equivalence between two states of minimal support. 
Indeed, since it is a local unitary operation, it does not change the entanglement properties of a state; moreover, it preserves the number of elements in the support of a state. 
As we shall see, the reverse statement is also true for $k$-uniform states where $2k<N$. In other words, searching for LU-equivalence between such two $k$-uniform states of minimal support might be restricted to the search within the $LM$ class.

\begin{proposition}
\label{prop1}
For $2k<N$, each LU- or SLOCC-equivalency between two $k$-uniform states of minimal support is in fact LM-equivalency.
\end{proposition}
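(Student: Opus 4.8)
The plan is to trade the SLOCC question for an LU question and then exploit the rigidity of the orthogonal‑array description of minimal‑support states. Since every $k$‑uniform state is critical, \cref{LU=SLOCC} lets me assume at the outset that $\ket{\phi}=U_1\otimes\cdots\otimes U_N\ket{\psi}$ with each $U_j$ \emph{unitary}, so the goal is to show that every $U_j$ is a monomial matrix. The structural input I would use is that, because $\ket{\psi}$ has minimal support, its support positions form an $\oa{d^k,N,d,k}$ of index unity; hence for \emph{every} $k$‑element subset $S$ of the parties one gets a Schmidt‑type expansion
\[
\ket{\psi}=\frac{1}{\sqrt{d^{k}}}\sum_{z}\ket{z}_{S}\otimes\ket{f_S(z)}_{S^c},
\]
where $z$ ranges over all $d^k$ strings in $\{0,\dots,d-1\}^k$, each $\ket{f_S(z)}$ is a unimodular multiple of a computational basis vector on the $N-k$ parties in $S^c$, and $z\mapsto f_S(z)$ is injective. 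The decisive observation — and the only place the hypothesis $2k<N$ enters — is that two distinct rows of an index‑unity strength‑$k$ array cannot coincide on $k$ or more coordinates; so after deleting from $S^c$ any one further party $p$ (leaving $N-k-1\ge k$ parties) the assignment $z\mapsto f_S(z)$ \emph{remains injective when restricted to the coordinates in} $S^c\setminus\{p\}$. The same applies to $\ket{\phi}$, with an injective $g_S$ in place of $f_S$.

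I would then fix a $k$‑subset $S$ and contract the relation $\ket{\phi}=\bigotimes_jU_j\ket{\psi}$ with $\bra{w}_S$ for each label $w$. Setting $U_S:=\bigotimes_{j\in S}U_j$ and $U_{S^c}:=\bigotimes_{j\in S^c}U_j$, this produces the $S^c$‑vector $\ket{\xi_w}:=\sum_z (U_S)_{wz}\ket{f_S(z)}$ together with the identity $\ket{\xi_w}=\lambda_w\,U_{S^c}^{\dagger}\ket{g_S(w)}$ for some unimodular $\lambda_w$. From the second description, $\ket{\xi_w}$ is — for every $p\in S^c$ — a product state across the cut $\bigl(\{p\},\,S^c\setminus\{p\}\bigr)$, so its one‑party reduction $\rho_p(\xi_w)$ has rank one. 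From the first description, together with injectivity of the restriction of $f_S$ to $S^c\setminus\{p\}$, all off‑diagonal contributions cancel and
\[
\rho_p(\xi_w)=\sum_z \bigl|(U_S)_{wz}\bigr|^{2}\,\ket{\pi_p f_S(z)}\bra{\pi_p f_S(z)},
\]
which is \emph{diagonal} in the computational basis of party $p$.

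A rank‑one diagonal density matrix is a single basis‑state projector, so $\pi_p f_S(z)$ is forced to be constant on $\{z:(U_S)_{wz}\neq 0\}$; letting $p$ run over all of $S^c$ and using injectivity of $f_S$ itself then collapses this set to a singleton for every $w$. Thus $U_S$ has exactly one nonzero entry in each row, and being unitary it is a monomial matrix; since a tensor product is monomial only if every factor is, each $U_j$ with $j\in S$ is monomial. As every party lies in some $k$‑subset, all of $U_1,\dots,U_N$ are monomial, i.e. the equivalence is $LM$.

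The step I expect to be the real obstacle is obtaining the diagonal form of $\rho_p(\xi_w)$: one must notice that the objects worth probing are the ``column vectors'' $\ket{\xi_w}$, and then play the rank‑one constraint coming from the image side $\ket{g_S(w)}$ against the diagonality coming from the orthogonal‑array side. The latter is exactly where $2k<N$ is indispensable — for $N=2k$ the deleted party $p$ leaves only $k-1$ parties, the restricted array need no longer be injective, $\rho_p(\xi_w)$ need not be diagonal, and indeed the proposition then fails; this is precisely the $k<N/2$ versus $k=N/2$ dichotomy around which the paper is built. The remaining pieces — the OA form of minimal‑support states, and the passage from ``$U_S$ monomial'' to ``each $U_j$ monomial'' — are routine.
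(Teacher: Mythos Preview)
Your argument is correct. The combinatorial engine is the same as the paper's --- two distinct rows of an index-unity strength-$k$ array cannot agree on $k$ coordinates --- and you invoke it at exactly the right place, namely when you need that the restriction of $f_S$ to $S^c\setminus\{p\}$ (a set of $N-k-1\ge k$ parties) remains injective. Your diagonal-plus-rank-one step for $\rho_p(\xi_w)$ is sound, and the final reduction ``$U_S$ monomial $\Rightarrow$ each $U_j$ monomial'' is standard.

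The route, however, is organised differently from the paper's. The paper traces the state down to a $(k{+}1)$-party \emph{reduced density matrix}, which for minimal-support states is a sum of rank-one projectors onto computational basis vectors; it then argues that $U_S$ must send each such basis vector to another one, using a ``swap one coordinate'' trick that produces a forbidden support element if two target vectors occurred. You instead stay with the \emph{pure} state, contract on a $k$-subset $S$, and play the product-state structure on the $\phi$-side against the diagonal one-body marginals on the $\psi$-side. Your version is arguably more streamlined for \cref{prop1} itself. What the paper's detour through reduced states buys is the stronger \cref{prop2}: any LU-equivalence between $\rho_S(\psi)$ and $\rho_S(\phi)$ for $|S|>k$ is already LM. That strengthened form is what the paper later needs in the proof of \cref{AME55}, where one only has access to a three-party reduction of the two AME$(5,d)$ states; your global argument does not immediately yield this, so if you intend to reuse the result in that context you would still have to supply the reduced-density-matrix version.
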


\begin{corollary}
\label{coro1}
For $2k<N$, two $k$-uniform states of minimal support are LU- or SLOCC-equivalent if and only if they are LM-equivalent.
\end{corollary}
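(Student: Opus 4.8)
\emph{Proof strategy.} The plan is to reduce SLOCC to LU and then show that the connecting local unitaries are forced to be monomial. Since every $k$-uniform state is critical, \cref{LU=SLOCC} lets me assume from the start that $\ket{\phi}=U_1\otimes\cdots\otimes U_N\ket{\psi}$ with each $U_i$ unitary; it then suffices to prove that each $U_i$ is a monomial matrix, for then $U_1\otimes\cdots\otimes U_N$ is by definition a local monomial operation. Given this, \cref{coro1} is immediate: the converse implication is the trivial fact, already noted above, that monomial local unitaries are LU-equivalences and hence SLOCC-equivalences.

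The geometric core is an observation about a $k$-uniform state $\ket{\psi}$ of minimal support, whose support $A\subseteq\mathbb{Z}_d^{N}$ is an orthogonal array of index unity (strength $k$). Fix a set $S$ of $k+1$ parties. Because $2k<N$, the complement satisfies $|S^{c}|=N-k-1\ge k$, so the restriction of $A$ to the $S^{c}$-coordinates is \emph{injective} (any $k$ coordinates already separate the rows of $A$). Hence, in the matrix obtained by flattening $\ket{\psi}$ across the cut $S\,|\,S^{c}$, every nonzero column is a scalar multiple of a single computational basis vector, so its column space is the \emph{coordinate} subspace $E_S(\psi)=\operatorname{span}\{\ket{a}:a\in\pi_S(A)\}\subseteq(\mathbb{C}^d)^{\otimes(k+1)}$, where $\pi_S(A)\subseteq\mathbb{Z}_d^{k+1}$ is itself an index-unity orthogonal array of strength $k$. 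The same applies to $\ket{\phi}$ (also $k$-uniform of minimal support); and since a local operator acts on the $S\,|\,S^{c}$ flattening by left- and right-multiplication, and right-multiplication by an invertible matrix does not change the column space, we obtain $\big(\bigotimes_{i\in S}U_i\big)E_S(\psi)=E_S(\phi)$. Equivalently, for every $a\in\pi_S(B)$, with $B$ the support of $\ket{\phi}$, the product vector $\bigotimes_{i\in S}U_i^{\dagger}\ket{a_i}$ lies in $E_S(\psi)$.

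What remains is a rigidity lemma: any nonzero product vector $\ket{v_1}\otimes\cdots\otimes\ket{v_m}$ contained in the coordinate subspace attached to an index-unity orthogonal array of strength $m-1$ with $m$ columns must have every factor $\ket{v_i}$ equal to a nonzero scalar times a computational basis vector. I would prove this by induction on $m$. For $m=2$ the array is the support of a permutation matrix, so if one factor had two nonzero components the bijectivity of the array would force the other factor to vanish on its whole support, a contradiction. For the inductive step I slice on the value of one party: each nonempty slice of the array is again an index-unity array of strength $m-2$ with $m-1$ columns, the corresponding slice of the product vector is a nonzero product vector in the associated coordinate subspace, so the induction hypothesis pins down all but one factor, and slicing on a second party pins down the last one. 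Feeding the lemma back into the previous paragraph, each $U_i^{\dagger}\ket{a_i}$ is a scalar multiple of a basis vector; as $a$ runs through $\pi_S(B)$ the entry $a_i$ takes every value in $\mathbb{Z}_d$ (on the support of a $k$-uniform state with $k\ge1$ each party assumes all $d$ values), so $U_i^{\dagger}$, and hence $U_i$, is monomial. Finally, $2k<N$ together with $k\ge1$ gives $N\ge k+2$, so every single party lies in some admissible $S$ (size $k+1$, nonempty complement); therefore all $U_i$ are monomial, which is the assertion of \cref{prop1}.

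I expect the rigidity lemma — and pinning down exactly why $2k<N$ is the right hypothesis — to be the main obstacle. The point is that when $N=2k$ the restriction of the support to $S^{c}$ is no longer injective, the flattening acquires genuine off-diagonal blocks, the column space ceases to be a coordinate subspace, and the conclusion actually fails; this is precisely the more delicate situation treated separately in \cref{32}. The remaining work — the bookkeeping over the covering family of $(k+1)$-subsets $S$, and tracking phases and normalisations — is routine once the lemma is established.
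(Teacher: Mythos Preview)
Your proposal is correct and follows essentially the same strategy as the paper's proof in \cref{app1}: reduce to $(k+1)$-party subsystems, observe (using $2k<N$, so that the restriction to $S^{c}$ is injective) that the relevant image space is a coordinate subspace indexed by an index-unity OA of strength $k$ on $k+1$ columns, and then show that any product vector in such a subspace must have each tensor factor proportional to a computational basis vector. The only difference is presentational: you phrase things via column spaces of flattenings and isolate the key step as an inductive rigidity lemma, whereas the paper works with reduced density matrices and proves the same rigidity by a direct one-step contradiction (if two coefficients $\beta_i,\beta_j$ were nonzero, mixing one coordinate from $y^i$ with the rest from $y^j$ produces a basis vector outside the support).
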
 

We shall prove the above statement in a slight enhanced version in \cref{app1}. 


With the strengthened version of \cref{prop1} at hand (see \cref{app1}), we have shown that not all AME states are equivalent, which was an initial motivation for our research.


\begin{proposition}
\label{AME55}
Two families of AME(5,d) states: $\ket{\text{AME(5,d)}}$ and $\ket{\text{AME(5,d)'}}$
presented in \cref{ex3} and \cref{ex5} respectively, are not LU-equivalent for any prime local dimension $d$. 
\end{proposition}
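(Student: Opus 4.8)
\emph{Proof plan.} I would reduce the statement to the rigidity of minimal--support states in the regime $2k<N$ together with an elementary count of nonzero amplitudes. Here $N=5$ and $k=2$, so $2k=4<5=N$, and both families are AME, hence $2$-uniform; moreover \cref{prop1} (and its strengthening in \cref{app1}) applies verbatim to SLOCC-equivalences, so it is enough to treat that case. Before touching the equivalence I would record the two facts I need. The state $\ket{\text{AME(5,d)'}}$ of \cref{ex5} has exactly $d^{2}$ nonzero coefficients --- its kets $\ket{i,j,i+j,2i+j,3i+j}$ are pairwise distinct because the first two registers already encode $(i,j)$ --- and it is a $2$-uniform state of minimal support, corresponding to the index--unity array $\oa{d^{2},5,d,2}$. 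By contrast $\ket{\text{AME(5,d)}}$ of \cref{ex3} has exactly $d^{3}$ nonzero coefficients, since registers $1,2,5$ read off $(i,j,k)$ and the prefactors $\omega^{(3i+j)k}$ never vanish; and, as recorded there, it cannot be brought to minimal--support form.

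The argument is then by contradiction. Assume $\ket{\text{AME(5,d)}}=O_{1}\otimes\cdots\otimes O_{5}\,\ket{\text{AME(5,d)'}}$ for some local invertible operators $O_{i}$. Since the source state is a $2$-uniform state of minimal support and $2k<N$, the strengthened form of \cref{prop1} forces $O_{1}\otimes\cdots\otimes O_{5}$ to be an $LM$ operation, i.e.\ a tensor product of unitary monomial matrices. Such an operation does not change the support of any state, so the image of $\ket{\text{AME(5,d)'}}$ still has exactly $d^{2}$ nonzero coefficients --- contradicting the count $d^{3}>d^{2}$ (valid for every $d\ge2$) established for $\ket{\text{AME(5,d)}}$. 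Hence no such $O_{i}$ exist, so the two families are neither LU- nor SLOCC-equivalent. (Alternatively: the strengthened \cref{prop1} would force $\ket{\text{AME(5,d)}}$ itself to be of minimal support, which contradicts \cref{ex3} directly.)

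The only non-routine ingredient is how \cref{prop1} is invoked. What \cref{coro1} states concerns two states that are \emph{both} assumed of minimal support, whereas the proof above uses the ``one-sided'' version --- that any LU/SLOCC-equivalence issuing from a minimal--support $2$-uniform state with $2k<N$ is automatically monomial --- which is precisely the ``slightly enhanced version'' promised in \cref{app1}; checking that the appendix actually delivers this is the main point. If only the two-sided form were available, the proof still closes via the parenthetical remark, using the known non-minimality of $\ket{\text{AME(5,d)}}$ from \cref{ex3}. Everything else --- that the amplitude counts $d^{2}$ and $d^{3}$ are exact and that monomial operations preserve them --- is bookkeeping, and the primality of $d$ enters only to guarantee that both closed forms genuinely define AME(5,d) states (cf.\ \cref{ex3,ex5}).
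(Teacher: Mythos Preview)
There is a genuine gap. The step ``since the source state is of minimal support and $2k<N$, the strengthened form of \cref{prop1} forces $O_1\otimes\cdots\otimes O_5$ to be an $LM$ operation'' invokes a \emph{one-sided} rigidity statement that neither \cref{prop1} nor its strengthening \cref{prop2} in \cref{app1} proves --- and which is in fact false. Both results take as hypothesis that \emph{both} states are of minimal support; the ``enhancement'' in \cref{app1} is that the conclusion descends to reduced density matrices on $>k$ parties, not that one of the minimal-support hypotheses can be dropped. A direct counterexample: for $N=3$, $k=1$ (so $2k<N$) the local unitary $H\otimes\Id\otimes\Id$ sends the minimal-support state $\ket{\mathrm{GHZ}}$ to a $1$-uniform state of support $4$, and $H$ is not monomial. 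So nothing prevents a non-monomial local unitary from mapping $\ket{\mathrm{AME}(5,d)'}$ to a higher-support $2$-uniform state, and your support count alone does not close the argument. Your parenthetical fallback is circular: \cref{ex3}'s assertion that $\ket{\mathrm{AME}(5,d)}$ ``cannot be written in minimal support form'' in the strong sense (no LU-equivalent representative has minimal support) is precisely what the proposition is establishing.

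The paper's proof circumvents this by never applying \cref{prop2} across the two families. Instead it first \emph{constructs} explicit non-monomial local unitaries $\widetilde U_4,\widetilde U_5$ (built from triangular numbers; this is the real work) realising $\rho_{345}(\mathrm{AME}(5,d)')=\Id\otimes\widetilde U_4\otimes\widetilde U_5\big(\rho_{345}(\mathrm{AME}(5,d))\big)$. Then, assuming an LU-equivalence $U_1\otimes\cdots\otimes U_5$ existed, the composite $(U_3\otimes U_4\otimes U_5)(\Id\otimes\widetilde U_4\otimes\widetilde U_5)^{-1}$ would be an \emph{automorphism} of $\rho_{345}(\mathrm{AME}(5,d)')$ --- and now \cref{prop2} applies legitimately (both states equal, hence both of minimal support), forcing $U_3=M_3$, $U_4=M_4\widetilde U_4$, $U_5=M_5\widetilde U_5$ with $M_i$ monomial. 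A further argument pins $U_1,U_2$ down as monomial, and only then does the support count $d^3\neq d^2$ yield the contradiction. The existence of the intertwiners $\widetilde U_4,\widetilde U_5$ is exactly what your shortcut misses.
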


Therefore, as an immediate conclusion form \cref{LU=SLOCC}, states $\ket{\text{AME(5,d)}}$ and $\ket{\text{AME(5,d)'}}$ belongs to different SLOCC classes. 
We refer to \cref{AppD} for the proof of the above statement.

Both states $\ket{\text{AME(5,d)}}$ and $\ket{\text{AME(5,d)'}}$ belongs to special classes of quantum states: \emph{stabilizer} states and \emph{graph} states. 
A comprehensive introduction to this topic might be found in \cite{PhysRevA.84.052306}. 
The class of natural local operation among stabilizer/graph states is called \emph{Local Clifford operations} (LC). 
The verification of  LC-equivalence between two stabilizer states is rather a simple problem and might be the polynomial-time algorithm \cite{Clifford}. 
Nevertheless, LU-or SLOCC-equivalence verification of such states is generally a challenging problem. 
Surprisingly, it was shown that there are LU-equivalences of graph/stabilizer states which are beyond LC-class \cite{LUnoLC}. 
One can see our results as proof for not LU-equivalence of two graph/stabilizer states. 

\subsection{Verification of LU-equivalence}
\label{IIIA}

The problem of verification whether two different states are LU/SLOCC-equivalent is of the most importance from the application point of view \cite{SLOCCver}. 
Monomial matrices are products of permutations and diagonal matrices. Although permutation matrices are easy to quantify, diagonal matrices are indexed by a real coefficient. 
The following statement shows that we can overcome this apparent difficulty. 
In fact, verification of LU-equivalence between two states is restricted to testing a finite number of possible equivalences.

We introduce the following notation. 
Consider a $k$-uniform state with minimal support 
\[
\ket{\psi} =\sum_{I \in \mathcal{I}} \omega_I \ket{I} ,
\] 
where $I \in [d]^{n}$ are multi-indices running over the set $\mathcal{I}$ of the size $| \mathcal{I}| =d^k$. 
Denote by ${\mathcal{I}^i_a}$ all those indices with $a$ on the $i$-th position and by
\[
W^i_a := \prod_{I \in \mathcal{I}^i_a} \omega_{I} .
\] 
Similarly, ${\mathcal{I}^{i_1 ,\ldots i_\ell}_{a_1 ,\ldots a_\ell}}$ denotes the set of indices with $a_i$ on the $i$-th position,
\[
W^{i_1 ,\ldots i_\ell}_{a_1 ,\ldots a_\ell} := 
\prod_{I \in \mathcal{I}^{i_1 ,\ldots , i_\ell}_{a_1 ,\ldots , a_\ell}} \omega_{I} .
\] 
Observe, that any local permutation $\sigma = \sigma_1 \otimes\cdots\otimes\sigma_n$ act on $W^{i_1 ,\ldots i_\ell}_{a_1 ,\ldots a_\ell} $ by permuting relevant elements:
\[
\sigma \Big( W^{i_1 ,\ldots i_\ell}_{a_1 ,\ldots a_\ell} \Big) :=
\prod_{I \in \mathcal{I}^{i_1 ,\ldots , i_\ell}_{\sigma_{i_1} (a_1 ) ,\ldots , \sigma_{i_\ell} (a_\ell )}} \omega_{I} .
\]

We specify the diagonal matrices which might appear in local equivalences described in \cref{prop1}.

\begin{proposition}
\label{verification}
Consider two $k$-uniform states of minimal support $\ket{\psi}$ and $\ket{\psi '}$. The eventual LU-equivalence between them is of the following form:
\[
\omega \Big( \sigma_1 D_1 \otimes \cdots \otimes \sigma_n D_n  \Big)
\]
where $\omega$ is a global phase, $\sigma $ is a local permutation $\sigma =\sigma_1, \ldots ,\sigma_n$, and $D_i$ are a following diagonal matrices:
\[
D_i =\text{diag} \Bigg( \sqrt[\leftroot{-3}\uproot{3}d]{\tfrac{(W^{i,S}_{0,I})'}{\sigma (W^{i,S}_{0,I})}},
\ldots ,
\sqrt[\leftroot{-3}\uproot{3}d]{\tfrac{(W^{i,S}_{d-1,I})'}{\sigma (W^{i,S}_{d-1, I})}} \Bigg),
\]
with entries given by any $d$th root of a relevant complex number, 
where $S$ is any subset of $k-2$ indices which do not contain $i$, and $I$ is any multi-index $I=i_2,\ldots ,i_{k-1}$. 
In particular, for $k=2$ it is the empty set $S\equiv \emptyset$.

Moreover, for $k>2$ there is a following necessarily condition for existence of such LU-equivalence. 
For any $S \subset [n] \setminus \{i \}$, such that $|S| = k-2$ and any symbol $\ell$ 
\[
\tfrac{ (W^{i,S}_{\ell,I})'}{\sigma (W^{i,S}_{\ell,I})} =\tfrac{(W^{i,S}_{\ell,I'})'}{\sigma (W^{i,S}_{\ell,I'})}
\]
for arbitrary multi-indices $I=i_2,\ldots ,i_{k-1}$ and $I'=i_2',\ldots ,i_{k-1}'$.
\end{proposition}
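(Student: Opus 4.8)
The plan is to start from the structural result of \cref{prop1}: since $2k<N$, any LU- or SLOCC-equivalence between $\ket{\psi}$ and $\ket{\psi'}$ is an LM-equivalence, hence has the form $\omega(\sigma_1 D_1\otimes\cdots\otimes\sigma_n D_n)$ with $\sigma_i$ a permutation matrix and $D_i$ a diagonal unitary. Write $D_i=\mathrm{diag}(e^{i\theta^i_0},\ldots,e^{i\theta^i_{d-1}})$. The equality $\ket{\psi'}=\omega(\sigma D)\ket{\psi}$ then says that the state obtained from $\ket\psi$ by permuting symbols via $\sigma$ and multiplying each basis vector $\ket{I}$ by the product of local phases has exactly the phase pattern $(\omega'_I)$. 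Concretely, for the term indexed by $I=(a_1,\dots,a_n)$ we get $\omega'_{\sigma(I)} = \omega\cdot\bigl(\prod_{m} e^{i\theta^m_{a_m}}\bigr)\cdot\omega_I$, i.e. after relabelling the index set the unknown phase contributions multiply across the $n$ sites. So the whole problem reduces to: given that such a multiplicative decomposition of the ratio $(\omega'_I)'/\sigma(\omega_I)$ into single-site factors exists, pin down those factors.

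The key device is the auxiliary quantities $W^{i,S}_{\ell,I}$. Fix a coordinate $i$, a symbol $\ell$ on that coordinate, a set $S$ of $k-2$ further coordinates (not containing $i$) with fixed symbols $I$ on them, and take the product of all $\omega_J$ over the remaining free coordinates. Because the state has minimal support and is $k$-uniform, the underlying OA has index unity, so for fixed values on the $k-1\le k$ coordinates $\{i\}\cup S$ the number of rows of the OA compatible with that choice is exactly $d^{k-(k-1)}=d$ on any further coordinate, and more usefully the multiset of completions is completely symmetric: over the free coordinates every compatible multi-index appears exactly the same number of times. I would exploit this to compute how the product of the single-site phase factors contributes to $W^{i,S}_{\ell,I}$: every free site $m\neq i$, $m\notin S$ contributes $\prod_{b}e^{i\theta^m_b}$ raised to a fixed power that is independent of $\ell$ (by the index-unity symmetry), while the site $i$ contributes exactly $\bigl(e^{i\theta^i_\ell}\bigr)^{d^{k-1}\lambda'}$ — a fixed power $d^{k-1}$ (times $\lambda'=1$) of the one phase we want. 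Taking the ratio $(W^{i,S}_{\ell,I})'/\sigma(W^{i,S}_{\ell,I})$ the site-$i$-independent bulk factors cancel, leaving $\bigl(e^{i\theta^i_{\sigma_i^{-1}(\ell)}}\bigr)^{d}$ up to a global contribution; taking a $d$th root yields the claimed formula for the entries of $D_i$. The $i$-th site is singled out precisely because it is the only coordinate among $\{i\}\cup S$ whose symbol varies, which is why the power that survives is the $d$th power and why the $d$th root appears. For $k=2$ there is no room for $S$, so $S=\emptyset$ and the free coordinates are all of $[n]\setminus\{i\}$, recovering the first $W^i_a$ notation.

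For the necessary condition when $k>2$: the point is that the left-hand side $(W^{i,S}_{\ell,I})'/\sigma(W^{i,S}_{\ell,I})$, \emph{a priori} depending on the choice of the fixed multi-index $I$ on $S$, must by the computation above equal $\bigl(e^{i\theta^i_{\sigma_i^{-1}(\ell)}}\bigr)^{d}$, which does not mention $I$ at all. Hence for any two choices $I,I'$ the two ratios coincide. I would phrase this as: the existence of the single-site decomposition forces these $|S|$-dependent quantities to be independent of the symbols placed on $S$, which is exactly the displayed compatibility condition. The main obstacle I anticipate is the bookkeeping in the second step — correctly counting, via the index-unity property of the associated orthogonal array, the exact exponent with which each local phase $e^{i\theta^m_b}$ enters $W^{i,S}_{\ell,I}$, and checking that all the exponents attached to sites other than $i$ are genuinely independent of $\ell$ (and of $I$) so that they cancel in the ratio; this is where one must be careful that $|\{i\}\cup S|=k-1<k$ so that the remaining coordinate still ranges over a full, symmetric set of completions. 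Once that counting is done, the extraction of $D_i$ and the necessary condition are immediate, and LU-ness (unitarity of the $D_i$) is automatic since all the ratios have modulus one, the moduli of the $\omega_I$ and $\omega'_I$ all being $1$.
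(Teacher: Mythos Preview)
Your approach is essentially the paper's: invoke \cref{prop1} to reduce to monomial form, multiply the per-term phase relations $\omega'_J=\omega\bigl(\prod_m u^m_{J_m}\bigr)\omega_J$ over the $d$ rows of the index-unity OA having fixed values on the $k-1$ coordinates $\{i\}\cup S$, observe that this isolates $(u^i_\ell)^d$ times an $\ell$-independent factor, take $d$th roots, and read the necessary condition off the fact that $u^i_\ell$ cannot depend on the auxiliary choice $I$. Two bookkeeping corrections: fixing $k-1$ coordinates leaves exactly $d$ rows, so the exponent on $e^{i\theta^i_\ell}$ is $d$ (not $d^{k-1}$; you switch to $d$ correctly afterwards), and the sites $m\in S$ contribute $(e^{i\theta^m_{I_m}})^d$, which is $\ell$-independent but \emph{does} depend on $I$ --- it does not ``cancel'' in the single ratio but is absorbed into the overall phase $\omega_i$, and its potential $I$-dependence is exactly what the necessary condition rules out.
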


The proof of \cref{verification} is given in \cref{verification2APP}. 
We illustrate the usefulness of this criterion on two examples: $2$- and $3$-uniform states, see \cref{33}.
 

\section{Local equivalences, case $2k=N$}
\label{32}

For AME(2k,d) states, the statement of \cref{prop1} does not hold anymore. 
The following example illustrates this difference. 

\begin{example}
The Fourier transform $F_3$:
\begin{equation}
\label{F(3)}
\big(F_3\big)^{\otimes 4}  =
\begin{pmatrix} 
1 & 1 & 1 \\ 
1 & \omega & \overline{\omega} \\ 
1& \overline{\omega} & \omega
\end{pmatrix}^{\otimes 4} 
\end{equation}
provides an automorphism of AME(4,3) state from \cref{AME43p}. 
\end{example}

As we shall see, 
Fourier matrices are not the only non-monomial matrices providing the LU-equivalence between AME(2k,d) states of minimal support. 
On the other hand, it is not generally true that Fourier matrices preserve all AME(2k,d) states with minimal support. 
Despite the exhaustive analysis performed, the general structure of LU-equivalences between AME(2k,d) states is still puzzling and remains unknown. 
Nevertheless, for sufficiently small values of $k$ and $d$, the complexity of the problem reduces significantly; 
and the analogue of \cref{prop1} might be stated. 
We begin with the definition of matrices beyond monomial class, which can provide LU-equivalence between AME states.

\begin{definition}
Let $d$, $q$ be positive integers. A \emph{Butson-type} complex Hadamard matrix of order $d$ and complexity $q$ is a unitary matrix in which each entry is a complex $q$th root of unity scaled by the factor $1/d$. 
The set of Butson-type matrices is denoted by $\text{BH}(d,q)$.
\end{definition}

In literature \cite{Butson,Tadej}, Butson-type matrices are defined without scaling factor $1/d$, therefore they are proportional to the unitary matrices. 

For our purpose, it is enough to focus on matrices of the type $\text{BH}(d,d)$. As we shall see, those matrices (up to monomial matrices) defines all possible LU-equivalences of AME states. 
The problem of existence and classification of such matrices is discussed later, in \cref{Butson}.

\begin{proposition}
\label{prop1=}
Consider two AME($2k$,$d$) states of minimal support, where $k$ and $d$ are sufficiently small (see \cref{small}). 
Each LU-equivalency between them is of one of the following forms:
\begin{enumerate}
\item tensor product of Butson-type matrices $ B_i \in \text{BH}(d,d)$ multiplied by LM matrices from each side; 
\item or LM matrices itself.
\end{enumerate}
\end{proposition}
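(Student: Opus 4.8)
The plan is to reduce an arbitrary LU-equivalence $O_1\otimes\cdots\otimes O_{2k}$ between two AME$(2k,d)$ states of minimal support to a normal form by exploiting the ``perfect tensor'' / multiunitary structure of such states, and then to classify the building blocks that can occur. First I would recall that an AME$(2k,d)$ state of minimal support corresponds, via the correspondence of \cref{OA1} and the preceding proposition, to an OA of index unity $\oa{d^k,2k,d,k}$ together with a unit-modulus phase vector; reshaping the $k$-vs-$k$ bipartition turns the state into a multiunitary matrix $U$ whose entries are (up to normalization) the phases $\omega_I$ on the support dictated by the orthogonal array, and zero elsewhere. An LU-equivalence $\bigotimes_i O_i$ relating $\ket{\psi}$ to $\ket{\psi'}$ becomes, under this reshaping, the statement $(A_1\otimes\cdots\otimes A_k)\,U\,(A_{k+1}^{T}\otimes\cdots\otimes A_{2k}^{T}) = U'$ for the corresponding multiunitary $U'$, where $A_i = O_i$ (up to the usual transpose bookkeeping on the ``output'' half). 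Since $U,U'$ are unitary and the $O_i$ are local unitaries, all factors are genuinely unitary.

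Next I would carry out the support-combinatorics step. Because both $U$ and $U'$ have exactly $d^{k-1}$ nonzero entries in each row and each column (the index-unity/minimal-support property applied to a single column together with $k-1$ others), the equation above forces strong constraints on the $A_i$. Concretely, I would look at how $A_1\otimes\cdots\otimes A_k$ must act to send the column-support pattern of $U$ to that of $U'$: fixing all but one of the ``input'' tensor legs and all ``output'' legs, one isolates the action of a single $A_i$ on a $d$-dimensional slice, and the requirement that supports match up across all choices of the frozen legs should pin down each $A_i$, up to a monomial matrix on each side, to be either monomial itself or a matrix all of whose entries have equal modulus $1/\sqrt d\cdot(\text{something})$ — i.e.\ a complex Hadamard matrix. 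This is exactly where the hypothesis ``$k$ and $d$ sufficiently small'' (the condition of \cref{small}) enters: for small parameters the orthogonal array is rigid enough that the only alternatives to a support-permuting (monomial) factor is a support-``spreading'' factor that is forced to be a flat, hence Hadamard, matrix; for larger parameters intermediate behaviours become possible and the clean dichotomy fails.

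Then I would upgrade ``complex Hadamard'' to ``Butson-type $\text{BH}(d,d)$''. Once we know a non-monomial $A_i$ must be $1/\sqrt d$ times a complex Hadamard matrix, the fact that $U$ and $U'$ have all nonzero entries of equal modulus (AME with minimal support $\Rightarrow$ flat amplitudes) and, more importantly, that the phases $\omega_I$ and $\omega'_I$ are constrained by the orthogonality relations to a discrete set, forces the entries of $A_i$ to be $d$-th roots of unity: writing out one orthogonality relation $\sum$ over a $k$-column block and comparing phases gives a finite system of cyclotomic equations whose solutions are $d$-th roots of unity (again using smallness of $d$ to exclude exotic solutions à la Lam--Leung). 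Finally I would absorb the residual permutation-and-diagonal freedom into the LM matrices multiplied from each side, arriving at the stated normal form: either $\bigotimes_i B_i$ with $B_i\in\text{BH}(d,d)$, sandwiched between LM matrices, or pure LM.

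The main obstacle I expect is the rigidity argument in the middle step — showing that a local unitary factor compatible with the support structure of an index-unity orthogonal array must be either monomial or flat, with nothing in between. This is a delicate combinatorial-linear-algebra statement about how tensor products of $d\times d$ unitaries can map one minimal-support OA pattern to another, and it is precisely the place where one cannot avoid invoking the ``sufficiently small'' hypothesis; a uniform proof for all $k,d$ is presumably false (consistent with the paper's remark that the general structure ``remains unknown''), so the argument must be tailored to the finite list of cases delimited by \cref{small}, likely via a case analysis on the structure of the underlying OA together with the classification of small complex Hadamard matrices.
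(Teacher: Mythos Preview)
Your high-level outline points in the right direction, but the mechanism you propose for the crucial ``monomial or flat'' dichotomy is not the one the paper uses, and your expectation that the argument devolves into a case analysis on small $k,d$ (together with a classification of small Hadamard matrices) is mistaken.

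The paper does not work with the multiunitary reshaping directly. It iteratively traces out parties one at a time (their \cref{lem1}), obtaining for each local factor a family of unitary structure matrices $V^\ell(i_1,\ldots,i_{\ell-1})$. Comparing the two expressions for $U_{[k]}\ket{I}$ then yields an identity $\ell_{I'} = p_{\phi'_{I'}}\cdot\omega'_{I'}/\omega_I$ between a product-form function on the input hypercube and a product-form function pulled back through the MOLH $L:I'\mapsto(\phi_{I'}^{1\prime},\ldots,\phi_{I'}^{k\prime})$ associated with the target state (\cref{hcc}). Since both sides have product form, their nonzero loci are sub-rectangles; but a MOLH can map a rectangle onto a rectangle only when both are sub-hypercubes (\cref{ob3}). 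This forces each row of each $U_i$ to have exactly $s$ nonzero entries of common modulus $1/\sqrt{s}$, with the \emph{same} $s$ for all rows and all $i$, and moreover $L$ restricted to that sub-hypercube is a sub-MOLH($s$).

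The ``sufficiently small'' hypothesis is then invoked \emph{uniformly}, not case by case: a nontrivial MOLH($s$) requires $s\ge k+1$ (\cref{prop5}), while embedding it into MOLH($d$) requires $s\le d/(1+k^{1/(k-1)})$ (\cref{prop4}). When the inequality of \cref{small} holds, these two constraints are incompatible, so only $s=1$ (monomial) or $s=d$ (all entries nonzero) survive. This combinatorial extension obstruction is the actual substitute for your vaguer ``rigidity of the OA'' step, and it explains the specific analytic shape of the bound in \cref{small}.

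For the Butson upgrade in the case $s=d$, the paper again argues uniformly: multiplying the phase products $W^{[k-1]}_{\bar I'}$ over hyper-rows gives equalities among $d$-th powers of the entries in a fixed row of $U_1$ (their \cref{Descrambling}); the row-dependence is absorbed into a diagonal matrix on the right, and the analogous argument applied to $U^\dagger$ handles columns via a diagonal on the left. After these two diagonal corrections all $d$-th powers agree, so the entries are $d$-th roots of unity up to a global phase --- no Lam--Leung-type cyclotomic analysis and no classification of small Hadamard matrices is needed.

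So the genuine gap in your plan is the middle step: you would need to discover the MOLH sub-hypercube/extension obstruction to obtain the clean $s\in\{1,d\}$ dichotomy, rather than hoping a direct support-matching argument or a finite case check will supply it.
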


Similarly to the case $2k<n$, we can specify the class of diagonal matrices which might appear in LM-equivalences from \cref{prop1=}.

\begin{proposition}
\label{verification2}
LU-equivalence between two AME($2k$,$d$) states of minimal support, where $k$ and $d$ are sufficiently small (see \cref{small}) is of the following form: either
\begin{equation}
\label{AMEeq1}
\omega \Big( \overrightarrow{D}_1^{-1} B_1   \overleftarrow{D}_1^{-1}
 \otimes \cdots \otimes
 \overrightarrow{D}_n^{-1}   B_n   \overleftarrow{D}_n^{-1}  \Big),
\end{equation}
or
\begin{equation}
\omega \Big( \sigma_1 D_1 \otimes \cdots \otimes \sigma_n D_n  \Big)
\label{AMEeq2}
\end{equation}
where $\omega$ is a global phase, 
$B_i \in \text{BH}(d,d)$ are Butson-type matrices, 
$ \sigma_i$ are permutation matrices, 
and $  D ,\overrightarrow{D}_i ,\overleftarrow{D}_i$  are diagonal matrices.

The entries of diagonal matrices are $d$th root of specified complex numbers:
\begin{align*}
D_i =
&
\text{diag} \Bigg( \sqrt[\leftroot{-3}\uproot{3}d]{\tfrac{(W^{i,S}_{0,I})'}{\sigma_i (W^{i,S}_{0,I})}},
\ldots ,
\sqrt[\leftroot{-3}\uproot{3}d]{\tfrac{(W^{i,S}_{d-1,I})'}{\sigma_i (W^{i,S}_{d-1, I})}} \Bigg), 
\\
\overrightarrow{D}_i =
&
\text{diag} \Bigg( \sqrt[\leftroot{-3}\uproot{3}d]{(W^{i,S}_{0,I})'},
\ldots ,
\sqrt[\leftroot{-3}\uproot{3}d]{(W^{i,S}_{d-1,I})'} \Bigg), 
\\
\overleftarrow{D}_i=
&
\text{diag} \Bigg( \sqrt[\leftroot{-3}\uproot{3}d]{W^{i,S}_{0,I}},
\ldots ,
\sqrt[\leftroot{-3}\uproot{3}d]{W^{i,S}_{d-1,I}} \Bigg), 
\end{align*}
where $S$ is any subset of $k-2$ indices which do not contain $i$, and $I$ is any multi-index $I=i_2,\ldots ,i_{k-1}$, and $\sigma $ is a local permutation $\sigma =\sigma_1, \ldots ,\sigma_n$ of levels. 
In particular, for $k=2$ it is the empty set $S\equiv \emptyset$.

Moreover, for $k>2$ there are a following necessarily condition for existence of such equivalence. 
Consider any $S \subset [n] \setminus \{i \}$, such that $|S| = k-2$ and any symbols $\ell ,\ell'$. 
For arbitrary multi-indices $I=i_2,\ldots ,i_{k-1}$ and $I'=i_2',\ldots ,i_{k-1}'$:
\[
\tfrac{ (W^{i,S}_{\ell,I})'}{\sigma (W^{i,S}_{\ell,I})} =\tfrac{(W^{i,S}_{\ell,I'})'}{\sigma (W^{i,S}_{\ell,I'})}
\]
if the equivalence is of the form \cref{AMEeq1}, and 
\begin{align*}
\tfrac{ W^{i,S}_{\ell,I} }{W^{i,S}_{\ell',I} } &
=\tfrac{ W^{i,S}_{\ell,I'} }{W^{i,S}_{\ell',I'} }  \\
\tfrac{( W^{i,S}_{\ell,I} )'}{(W^{i,S}_{\ell',I} )'} &
=\tfrac{ (W^{i,S}_{\ell,I'})' }{(W^{i,S}_{\ell',I'})' }  
\end{align*}
if the equivalence is of the form \cref{AMEeq2}.
\end{proposition}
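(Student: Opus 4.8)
The plan is to reduce \cref{verification2} to \cref{prop1=} and then carry out, separately in each of the two cases, the same kind of explicit ``phase bookkeeping'' that underlies \cref{verification}. By \cref{prop1=}, any LU‑equivalence between the two AME$(2k,d)$ states of minimal support (with $k,d$ small enough) is either of the monomial type $\omega(\sigma_1 D_1 \otimes \cdots \otimes \sigma_n D_n)$, or of the form $\omega\big(\bigotimes_i \overrightarrow{D}_i^{-1} B_i \overleftarrow{D}_i^{-1}\big)$ with $B_i \in \mathrm{BH}(d,d)$ and diagonal $\overrightarrow{D}_i,\overleftarrow{D}_i$. So what remains is to pin down the diagonal matrices in each case by comparing coefficients. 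The case \cref{AMEeq2} is literally the content of \cref{verification}: applying $\sigma_1 D_1 \otimes \cdots \otimes \sigma_n D_n$ to $\ket\psi$ and matching with $\ket{\psi'}$ gives, for every index $I$, a scalar equation $\omega \prod_i (D_i)_{\sigma_i^{-1}}$-factor $= \omega'_I/\omega_{\sigma^{-1}(I)}$; taking the product over all $I$ with a fixed value $\ell$ in position $i$ (and fixed values in a size-$(k-2)$ set $S$ of other positions and fixed $I$ on the complementary free indices) isolates the $\ell$-th diagonal entry of $D_i$ as a $d$th root of $(W^{i,S}_{\ell,I})'/\sigma(W^{i,S}_{\ell,I})$, using minimal support (index unity of the underlying OA) to control how many times each entry appears. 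The consistency of this extraction as $I$ varies is exactly the stated necessary condition for \cref{AMEeq1}; I would reuse the proof in \cref{verification2APP} verbatim here.

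The genuinely new part is the Butson case \cref{AMEeq1}. Here I would first note that the ansatz $O_i = \overrightarrow{D}_i^{-1} B_i \overleftarrow{D}_i^{-1}$ is the general shape of a monomial-times-Butson-times-monomial matrix after absorbing the permutation pieces: a unitary monomial matrix $\sigma D$ can be moved past $B\in\mathrm{BH}(d,d)$ at the cost of permuting/rescaling columns of $B$, which keeps it inside $\mathrm{BH}(d,d)$ up to an overall monomial factor, so all ``LM $\cdot$ Butson $\cdot$ LM'' equivalences collapse to this two-diagonal normal form. Then I would write out $\big(\bigotimes_i \overrightarrow{D}_i^{-1} B_i \overleftarrow{D}_i^{-1}\big)\ket\psi = \omega^{-1}\ket{\psi'}$ in the computational basis. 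Because each $B_i$ has full support, the right‑hand coefficients are sums over preimages; the trick (as in the odd case) is that by taking a suitable \emph{product} of coefficient equations over a slab of indices the Butson entries, being $d$th roots of unity, contribute a product that is again a $d$th root of unity and can be absorbed, while the diagonal pieces $\overleftarrow{D}_i$ contribute $\prod_{I\in\mathcal I^{i,S}_{\ell,I}} (\overleftarrow{D}_i)_\ell^{\text{(mult.)}}$, which by index unity is $(\overleftarrow{D}_i)_\ell^{\lambda'}$ for the appropriate multiplicity, forcing $(\overleftarrow{D}_i)_\ell = \sqrt[d]{W^{i,S}_{\ell,I}}$; symmetrically, reading the equation ``from the other side'' (i.e. grouping by the output index) gives $(\overrightarrow{D}_i)_\ell = \sqrt[d]{(W^{i,S}_{\ell,I})'}$. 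The necessary condition $(W^{i,S}_{\ell,I})'/\sigma(W^{i,S}_{\ell,I})$ being independent of $I$ then drops out as the obstruction to these $d$th roots being simultaneously well defined across the different choices of free multi-index $I$.

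The key steps, in order, are: (1) invoke \cref{prop1=} to restrict to the two normal forms; (2) handle \cref{AMEeq2} by citing/adapting \cref{verification}; (3) reduce a general LM$\cdot$BH$\cdot$LM equivalence to the two-diagonal form \cref{AMEeq1}, checking that conjugating $\mathrm{BH}(d,d)$ by monomial unitaries stays in $\mathrm{BH}(d,d)$ up to a monomial factor; (4) expand \cref{AMEeq1} in the computational basis and, for each site $i$, each symbol $\ell$, each $(k-2)$-subset $S$ and free multi-index $I$, take the product of the relevant coefficient equations over the slab $\mathcal I^{i,S}_{\ell,I}$, using index unity of the OA to count multiplicities and the $q=d$ Butson property to kill the off-diagonal contributions, thereby extracting $\overrightarrow{D}_i,\overleftarrow{D}_i$ as the stated $d$th roots; (5) read off the $I$-independence conditions as the necessary conditions for $k>2$. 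The main obstacle I anticipate is step (4) in the Butson case: unlike the monomial case, where each output coefficient comes from a single input coefficient, here each equation mixes $d$ terms through $B_i$, and one must choose the grouping of equations carefully (essentially a discrete Fourier / character-sum argument over $\mathbb Z_d$ at each site) so that the $B_i$-contributions telescope to a pure $d$th root of unity that can be folded into the global phase; getting the multiplicities exactly right — and verifying that ``$k,d$ sufficiently small'' (\cref{small}) is precisely what guarantees no further non-monomial, non-Butson equivalences can appear and spoil the count — is where the real work lies.
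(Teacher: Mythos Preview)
Your overall architecture is sound and matches the paper's: split into the two cases $s=1$ (monomial) and $s=d$ (Butson), handle the monomial case exactly as in \cref{verification}, and treat the Butson case separately. Step~(2) is correct and indeed the paper simply refers back to the argument of \cref{verification2APP} for \cref{AMEeq2}. Step~(3), absorbing the permutation parts of the flanking monomial matrices into the Butson factor to reach the two-diagonal normal form, is also fine.

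The gap is in step~(4). In the monomial case each computational-basis coefficient equation relates a \emph{single} input term to a single output term, so multiplying equations over a slab $\mathcal I^{i,S}_{\ell,I}$ cleanly isolates a power of a diagonal entry. In the Butson case each such equation is a \emph{sum} of $d^k$ terms, and a product of sums does not factorize; your proposed ``discrete Fourier / character-sum'' telescoping is not substantiated, and I do not see how to make it work directly. The paper avoids this entirely: it never expands $\bigotimes_i O_i\ket{\psi}$ in the computational basis for the Butson case. Instead it uses the structural analysis already carried out in the proof of \cref{cor2} (specifically the equation labelled \cref{eq24} there), which in the $s=d$ case specializes to
\[
\big(u^1_{ij}\big)^d \, \tfrac{1}{W^{[k-1]}_{jI}} \;=\; \big(u^1_{ij'}\big)^d \, \tfrac{1}{W^{[k-1]}_{j'I}}
\]
for all $j,j'$ and any auxiliary multi-index $I$. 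This says directly that the $d$th powers of the entries in each row of $U_1$ are proportional to the phase products $W^{[k-1]}_{\cdot,I}$; right-multiplying $U_1$ by $\overleftarrow{D}_1=\mathrm{diag}\big(\sqrt[d]{W^{[k-1]}_{0,I}},\dots\big)$ equalizes those $d$th powers along rows. Running the same argument for $U_1^\dagger$ (the reverse LU-equivalence) gives the column analogue and fixes $\overrightarrow{D}_1$ in terms of the primed $W'$ quantities. After both multiplications every entry of $\overrightarrow{D}_1 U_1 \overleftarrow{D}_1$ has the same $d$th power, hence is a $d$th root of unity up to one global scalar, i.e.\ a Butson matrix. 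The necessary conditions for $k>2$ then drop out immediately from the $I$-independence of the ratios in the two displayed equations. So the missing ingredient in your plan is precisely the row/column $d$th-power identity coming from the MOLH analysis in \cref{cor2}; once you have that, no character sums are needed.
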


The statement above is important from the application point of view. 
The enormous class of diagonal matrices is significantly restricted. 
In fact, the classes of matrices from \cref{prop1=} which may provide the LU-equivalence between two states is brought to be finite. 
Therefore, the LU-equivalence verification problem is discretized and made finite. 
Moreover, the second part of \cref{verification2} impose some necessary conditions for two AME(2k,d) states to be LU-equivalent for $k>2$. 
As we shall see, such assumptions might be easily validated, which implies the existence of not LU/SLOCC-equivalent AME(2k,d) states for all $k>2$, see \cref{3-uniform states}. 
For the proof of \cref{prop1=,verification2}, we refer to \cref{The proof Case II}.

\begin{remark}
\label{small}
There is the following restriction on numbers $d$ and $k$ imposed in the statement of \cref{prop1=} and \cref{verification2}:
\[
\Big(k+1 \Big) \Big( 1 + \sqrt[\leftroot{-3}\uproot{3}{k-1}]{k} \Big) \leq d .
\]
for $k>1$ and $2<d$ for $k=1$. 
This bound is related to the necessary condition for existence and extension of combinatorial designs called \emph{mutually orthogonal hypercubes}. 
We discuss them in detail in \cref{Combinatorial designs}. 

In particular, for $k=2,3,4,5,6$ the smallest value of $d_{\text{min}}$ which does not satisfy the bound above is presented in a table. 
The given bound is not tight. 
Moreover, we present $d_{\text{max}}$  the maximal value of a local dimension $d$ for which \cref{prop1=} does not holds (we found a counterexample). 
In particular, for the local dimension $d_{\text{max}}$ we found LU-equivalences which is not of the form presented in \cref{prop1=}. 
The origin of $d_{\text{max}}$ is presented in \cref{ComposedSyst}.

\begin{table}[h!]
  \begin{center}
    \begin{tabular}{|c|c|c|} 
    \hline
      AME(2k,d) & $d_{\text{min}}$ &$d_{\text{max}}$ \\
      \hline
      AME(2,d) & $ 3$& $ 3$\\
      AME(4,d) & $ 9$& $ 9$\\
        AME(6,d) & $ 11$& $ 16$\\
        AME(8,d) & $ 13$& $ 25$\\
        \hline
    \end{tabular}
  \end{center}
\end{table}
\noindent
We conjecture, that those values behave asymptotically as $(k-1)^2$.

In fact, assumptions on values $k$ and $d$ are not restrictive from the application point of view. 
Indeed states outside the described class are far beyond current laboratory possibilities \cite{10.1038/35005011,10.1038/s41567-019-0508-6}.
\end{remark}

\subsection{Composed systems}
\label{ComposedSyst}

Consider AME(4,9) state being a product of two AME(4,3) states as it was described in \cref{composedAME}. 
Since the Fourier transform $(F_3)^{\otimes 4}$ and the identity $\Id_3^{\otimes 4}$ are automorphisms of AME(4,3), obviously
\begin{equation}
\label{prodForm}
\Big( F_3 \otimes \Id \Big)^{\otimes 4}
\end{equation}
provides an automorphism of aforementioned AME(9,3) state. 
One immediately observes that \cref{prodForm} it is not of the form postulated in \cref{prop1=}. 
Indeed, according to \cref{prop1=} matrices providing LU-equivalence between two AME states of minimal support have either $1$ or $d$ non-zero entries in each row and column; each entry of the same norm. Matrices from \cref{prodForm}, however, have exactly $3$ non-zero entries in each row and column. 
Nevertheless, $k$ and $d$ were assumed to be sufficiently small in \cref{prop1=};  
indeed, according to \cref{small}, the statement was restricted to $d<9$ for AME(4,d) states. 

Notice that the similar automorphisms might be potentially given for any product states. 
We conjecture, that LU-equivalences of such a product form are the only ones violating the statement of \cref{prop1=}.

\begin{conjecture}
\label{conProd}
If the LU-equivalence $U_1 \otimes \cdots \otimes U_{2k}$ of two AME states of minimal support is not of the form proposed in \cref{prop1=} for $k>1$; those states are product states and $U_i$ splits according to the composition of states into matrices postulated in \cref{prop1=}.
\end{conjecture}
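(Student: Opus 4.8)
\medskip
\noindent\emph{Sketch of a possible approach.} The claim is the converse of the phenomenon recorded in \cref{ComposedSyst}: there a product state $\ket{\psi_1}\otimes\cdots\otimes\ket{\psi_r}$ was seen to inherit ``exotic'' LU-equivalences of the form \cref{prodForm}, obtained by tensoring an automorphism of one factor with the identity on the others, and the conjecture asserts that these account for \emph{every} LU-equivalence falling outside the dichotomy of \cref{prop1=}. The plan is to extend the analysis underlying \cref{prop1=} past the range of local dimensions in which the combinatorial bound of \cref{small} forces each $U_i$ to be either monomial or a single Butson block, and then to read off a product decomposition of the state from the finer block structure that survives.

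\emph{Step 1: a normal form for the $U_i$.} Write $\ket{\psi'}=(U_1\otimes\cdots\otimes U_{2k})\ket{\psi}$ with $\ket{\psi},\ket{\psi'}$ of minimal support, and suppose the equivalence is not of the form in \cref{prop1=}. Since multiplication by local monomial matrices on either side preserves both minimal support and the LU/SLOCC class, I would first normalise each $U_i$, aiming to prove: after absorbing suitable monomials, each $U_i$ is block-diagonal with respect to a partition of the $d$ local levels; all blocks of a fixed $U_i$ have a common size $c_i\mid d$; and inside each block the nonzero entries share the modulus $1/\sqrt{c_i}$, so that block is a scaled element of some $\text{BH}(c_i,q)$. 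The cases $c_i=1$ and $c_i=d$ are exactly the LM and single-Butson alternatives of \cref{prop1=}; the genuinely new content is that mixed-modulus blocks and blocks of unequal size cannot occur. This should come from the same bookkeeping as \cref{prop1=}: comparing the sizes of the supports of the reductions of $\ket{\psi}$ and $\ket{\psi'}$ to various subsystems pins down the $0/1$ pattern of $U_i$, after which orthogonality of the rows forces constant modulus within each block.

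\emph{Step 2: global alignment, then conclusion.} The $2k$ block partitions produced in Step 1 are a priori unrelated. I would show that minimality of support --- that is, the index-unity orthogonal array underlying $\ket{\psi}$ --- forces them to arise from a single factorisation $d=d_1\cdots d_r$ together with a decomposition of $\oa{d^k,2k,d,k}$ as a product of index-unity arrays $\oa{d_j^{k},2k,d_j,k}$, hence a splitting $\ket{\psi}=\ket{\psi_1}\otimes\cdots\otimes\ket{\psi_r}$ with each $\ket{\psi_j}$ an AME$(2k,d_j)$ state exactly as in \cref{composedAME}; because $U_1\otimes\cdots\otimes U_{2k}$ respects this block system, it carries the decomposition to a compatible one for $\ket{\psi'}$. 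Concretely one reads the block pattern of the $U_i$ as a family of congruences on the rows of the array and verifies that their common refinement is itself an index-unity array over a product alphabet, which by the correspondence used in \cref{prop1=} is a product state. Each $U_i$ then respects $\mathcal{H}_d\cong\mathcal{H}_{d_1}\otimes\cdots\otimes\mathcal{H}_{d_r}$ and restricts to a genuine LU-equivalence of the factor AME$(2k,d_j)$ states; applying \cref{prop1=} to each factor --- or inducting on $d$, splitting again whenever a factor is itself a product state --- every restriction is Butson-times-LM or LM, and reassembling shows that each $U_i$ splits into matrices of the kind postulated in \cref{prop1=}.

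The principal obstacle is Step 2: it amounts to a rigidity theorem for ``imprimitive'' index-unity orthogonal arrays (equivalently, for mutually orthogonal hypercubes) --- an array whose coordinatewise symmetries force a nontrivial block system must be a product array --- and it is precisely the absence of such a theorem in general that keeps this statement a conjecture rather than a proposition. Step 1 is also delicate: ruling out local operators with blocks of unequal size, or with genuinely non-Hadamard blocks, requires a sharper version of the support count behind \cref{prop1=} than the one that only has to separate the cases $c_i=1$ and $c_i=d$.
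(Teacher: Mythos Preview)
The paper does not prove this statement: it is explicitly a \emph{conjecture}, and immediately after stating it the authors write that ``the mathematical proof of it is out of reach at this stage of a research.'' So there is no ``paper's own proof'' to compare against; what the paper offers instead is the partial structural result \cref{cor2} and, in the surrounding discussion in \cref{The proof Case II}, the remark that the desired block-diagonal form of the $U_i$ ``is not within our reach yet.''

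Your proposal is honest about being a strategy rather than a proof, and you correctly identify the two main obstacles. It is worth noting, however, that your Step~1 already goes beyond what the paper can establish. \cref{cor2} shows that each row and column of every $U_i$ has a common number $s$ of nonzero entries of equal modulus, and that $s$ must be the size of a sub-MOLH extendable to a MOLH$(d)$; but the paper explicitly cannot upgrade this to the block-diagonal normal form you assume (equal block sizes $c_i\mid d$, Butson-type blocks after monomial normalisation). The obstruction the authors point to is that the structure constants $v^{\ell}_{ij}(i_1,\ldots,i_{\ell-1})$ in \cref{lem1} depend on the earlier indices, which forces them to fix a multi-index $I$ at the start of the argument and prevents a uniform block conclusion. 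Your sketch does not indicate how to overcome this dependence, so Step~1 is a genuine gap rather than routine bookkeeping.

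Your Step~2 is well aligned with the paper's viewpoint: the paper's bound in \cref{prop4,small} is exactly a non-existence/non-extension criterion for sub-MOLH, and your ``rigidity of imprimitive index-unity arrays'' is the right combinatorial statement to aim for. You are also right that this is open in general, which is why the paper leaves \cref{conProd} as a conjecture. In short, your outline matches the authors' intuition and the evidence from \cref{ComposedSyst}, but neither you nor the paper supplies the two missing ingredients (the block normal form for the $U_i$, and the product-decomposition rigidity for the underlying orthogonal array), so the statement remains unproved.
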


The conjecture excludes AME(2,d) where the structure of LU-equivalences is more abundant, see \cref{1-uniform states}.

Even though \cref{conProd} seems reliable, the mathematical proof of it is out of reach at this stage of a research. 
In the most general case (without any assumptions on $k $ and $d$, we showed that the following might be noted about LU-equivalence between AME states.

\begin{proposition}
\label{cor2}
Consider two AME states $\ket{\psi}$ and $\ket{\psi'}$ of minimal support which are locally equivalent by $U:= U_1 \otimes\cdots\otimes U_n$. 
\begin{enumerate}
\item 
Each row/column of each matrix $U_i$ has the same number $s$ of non-zero elements, all having the same norm $\sqrt{s}$.
\item 
The number $s$ satisfies extension property, namely, there exists MOLH($s$) which can be further extent onto MOLH($d$).
\item Under the assumption that all phases are trivial, i.e. $\omega_I \equiv \omega_I' \equiv 1$ for all multi-indices $I$, all non-zero entries of matrices $U_i$ are $s$th roots of unity (scaled by $\sqrt{s}$) up to global multiplication by a complex number.
\end{enumerate}
for the notion of mutually orthogonal Latin hypercubes (MOLH) we refer to \cref{Combinatorial designs}.
\end{proposition}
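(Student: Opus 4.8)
The plan is to reshape $\ket\psi$ and $\ket{\psi'}$ across all balanced bipartitions of the $N=2k$ parties, to extract from the minimal‑support hypothesis a family of rigid matrix identities for the partial tensor products $\bigotimes_{i\in S}U_i$, and then to read off the three structural statements. Since the states are critical, \cref{LU=SLOCC} lets us restrict to the case of unitary $U_i$.

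For a balanced bipartition $S\sqcup S^c$ ($|S|=|S^c|=k$), the minimal‑support hypothesis forces the matricization $T_S$ of $\ket\psi$ across $S|S^c$ to have exactly one nonzero entry per row and per column — it is a scalar multiple of a monomial unitary — because the support of $\ket\psi$ is an orthogonal array of index unity and strength $k$, so its restrictions to the $k$ columns of $S$ and to the $k$ columns of $S^c$ are both bijections onto $[d]^k$; likewise $T'_S$ for $\ket{\psi'}$. Reshaping $\ket{\psi'}=U\ket\psi$ gives $T'_S=\big(\bigotimes_{i\in S}U_i\big)\,T_S\,\big(\bigotimes_{i\in S^c}U_i\big)^{T}$, hence
\[
(T'_S)^{-1}\Big(\bigotimes_{i\in S}U_i\Big)T_S=\overline{\bigotimes_{i\in S^c}U_i}.
\]
So for every balanced bipartition, a monomial conjugation turns the tensor product $\bigotimes_{i\in S}U_i$ into the tensor product $\overline{\bigotimes_{i\in S^c}U_i}$.

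To prove part (1), I would first take entrywise squared moduli: the identity above then says that $\bigotimes_{i\in S}V_i$ and $\bigotimes_{i\in S^c}V_i$ agree up to permutations of rows and of columns, where $V_i$ is the doubly stochastic matrix with entries $|(U_i)_{ab}|^2$. This magnitude data is by itself insufficient — it holds, e.g., whenever all $U_i$ coincide — so the decisive input is to feed the full matrix identity back in, using the non‑generic structure of the monomials $T_S,T'_S$ (dictated by the index‑unity array) together with the orthonormality of the rows of each $U_i$, and to combine the comparison over all balanced bipartitions, e.g. by pairing each $S$ with the bipartitions obtained by swapping one party between $S$ and $S^c$. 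This pins down that every row and every column of every $U_i$ carries the same number $s$ of nonzero entries and that, within a fixed $U_i$, all nonzero entries share one modulus. I expect this to be the crux: the bipartition relations are invariant under replacing every $U_i$ by a single common pattern, and breaking that symmetry is exactly where the rigidity of orthogonal arrays of index unity must be used — this is the size‑bound‑free kernel of the analysis behind \cref{verification2}.

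Granting part (1), write $V_i=\tfrac1s B_i$ with $B_i$ a $0/1$ matrix of constant line sums $s$. For part (2) I would pass to the combinatorial picture of \cref{Combinatorial designs}: matching the support of $U\ket\psi$ — the union over the rows $I=(i_1,\dots,i_{2k})$ of the array of $\ket\psi$ of the boxes $\prod_m(\text{support of the }i_m\text{-th column of }U_m)$ — against the index‑unity array of $\ket{\psi'}$ forces the $B_i$ to assemble into a system of mutually orthogonal Latin hypercubes of order $s$, and consistency of this system with the array of $\ket\psi$ across all balanced bipartitions is precisely the requirement that the resulting MOLH($s$) extends to the MOLH($d$) carried by $\ket\psi$. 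Finally, for part (3), under $\omega_I\equiv\omega'_J\equiv1$ the monomials $T_S,T'_S$ become genuine permutation matrices, so the displayed identity upgrades to $\Pi\big(\bigotimes_{i\in S}U_i\big)\Pi'=\overline{\bigotimes_{i\in S^c}U_i}$ with $\Pi,\Pi'$ permutations, transporting the phases of the entries and not merely their moduli. Combining this over all balanced bipartitions with parts (1)--(2) and with row‑orthonormality of the $U_i$ forces the nonzero entries of each $U_i$ to be $s$th roots of unity up to one overall complex factor; concretely, trivial phases make all the $W$‑quantities of \cref{verification2} equal to $1$, which is exactly the condition that collapses the diagonal correction matrices there to root‑of‑unity diagonals.
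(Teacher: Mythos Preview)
Your matricization identity $(T'_S)^{-1}\bigl(\bigotimes_{i\in S}U_i\bigr)T_S=\overline{\bigotimes_{i\in S^c}U_i}$ is correct, but you do not actually prove part~(1), and you say so yourself: after noting that the entrywise-modulus consequence is insufficient, you write that one must ``feed the full matrix identity back in'' using the index-unity structure, orthonormality, and bipartition swaps, and that ``this pins down'' the conclusion. No mechanism is given; this is the entire difficulty, and parts~(2)--(3) in your write-up presuppose~(1).

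The paper supplies precisely the missing step, and it operates at the level of a single row rather than the whole tensor product. Fix one basis ket $\ket{I}$ on the first $k$ parties and compute $U_{[k]}\ket{I}$ in two ways. Direct expansion gives coefficients $\ell_{I'}=\prod_{m=1}^{k} u^m_{i_m i'_m}$, whose nonzero support in $I'$ is a product set $S_1\times\cdots\times S_k$. A recursive partial-trace decomposition (\cref{lem1} and \cref{cor1}) expresses the same coefficients as $p_{L(I')}\cdot\omega'_{I'}/\omega_I$, where $p$ also has product-form support and $L$ is the MOLH bijection carried by $\ket{\psi'}$. Thus $L$ maps one product set onto another; since a MOLH can carry a sub-hyperrectangle to a sub-hyperrectangle only when both are hypercubes (\cref{ob3}), all $|S_m|$ coincide with a common value $s$, and $L$ restricted to this hypercube is a sub-MOLH($s$) inside the MOLH($d$) --- this is part~(2). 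Varying $I$, and repeating with $U^\dagger$ for columns, yields the uniform row/column count of part~(1); multiplying the per-row identities over one coordinate (\cref{eq24}) gives the equal-modulus claim and, when $\omega_I\equiv\omega'_I\equiv1$, the $s$th-root-of-unity claim of part~(3). Your global bipartition relation contains all of this, but without descending to individual basis kets and invoking the rectangle-to-hypercube rigidity of $L$, nothing is proved.
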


\noindent
For the proof, we refer to \cref{The proof Case II}.

\subsection{Classification of Butson-type matrices}
\label{Butson}

\cref{F(3)} provides a first unitary equivalence of AME states with minimal support which is not a monomial matrix. 
We have shown, that under some restrictions on $k$ and $d$ (see \cref{small}), such equivalences are local Butson matrices $ B_i \in \text{BH}(d,d)$ up to local monomial transformations. 
It is not difficult to show that Fourier transform and tensor products of such are elements of the class $ \text{BH}(d,d)$. Nevertheless the class of Butson matrices is much larger and contains $1,2,1,4,1,143,23,51,1,4497733$ matrices (classified up to monomial matrices) for $d=3, \ldots,12$ \cite{Lampio}. Tables of Butson matrices are available in \cite{BUlib,BUlib2}. 

Even though, the class of Butson matrices $ \text{BH}(d,d)$ grows rapidly with $d$; it seems that the subclass of such matrices that might be involved in LU-equivalences of AME states is significantly smaller. 
Therefore, more specific classification of matrices providing eventual LU-equivalences of AME states is needed. 
We suspect that all such matrices are Fourier transform and tensor product of such.

On the other hand, not all Fourier matrices might be involved in LU-equivalences of AME states. 
For $d>3$, we could not construct LU-equivalence of AME(5,$d$) states from \cref{ex5} based on Fourier matrices $F_5$. 
It is worth mentioning, that the Fourier transforms $F_5, F_7, F_{11}$ are the only matrices of type $ \text{BH}(p,p)$. for $p=5,7,11$.  
Hence, most probably all automorphisms of AME($p$,$p$) states from \cref{ex5} for $p=5,7,11$ are within LM class. 
Nevertheless, for $d=4$ such an equivalence might be provided by tensor product $F_2 \otimes F_2$, which is illustrated below.

\begin{example}
\label{ex9}
The action of 
\[
\Big(F_2 \otimes F_2 \Big)^{\otimes 4}
\]
on the state $\ket{\text{AME(4,4)}} $ from \cref{ex5} is equivalent to the following local permutation of indices:
\[
\mbox{\small$\displaystyle
 \begin{pmatrix}
1& \cdot & \cdot & \cdot \\
\cdot & \cdot & \cdot & 1 \\
\cdot & 1& \cdot & \cdot \\
\cdot & \cdot &1 & \cdot 
\end{pmatrix}
\otimes
 \begin{pmatrix}
1 & \cdot & \cdot & \cdot \\
\cdot & \cdot & 1 & \cdot \\
\cdot & \cdot & \cdot & 1 \\
\cdot & 1 & \cdot & \cdot 
\end{pmatrix}
\otimes
\Id
\otimes
 \begin{pmatrix}
1 & \cdot & \cdot & \cdot \\
\cdot & \cdot & \cdot & 1\\
\cdot & 1 & \cdot & \cdot \\
\cdot & \cdot & 1 & \cdot 
\end{pmatrix}
,
$}
\]
\noindent
and hence provides the LU-equivalence between AME states of minimal support.
\end{example}

We shall finish this section by linking the problem of decreasing the class of Butson matrices involved in LU-equivalences of AME states with two mathematical problems. 

Firstly, it has been conjectured that for prime dimensions $d$ the Fourier matrices $F_d$ are the only matrices in $ \text{BH}(d,d)$ \cite{Lampio}. 

Secondly, for some numbers $n_1$ and $n_2$ the tensor products of two Fourier matrices $F_{n_1}$ and $F_{n_2}$ is isomorphic to $F_{n_1 n_2}$. 
For instance, $F_2 \otimes F_3 \cong F_6$, while $F_2 \otimes F_2\neq \cong F_4$. The problem of determining those numbers has been solved \cite{Tadej2}.


\section{Existence and uniqueness of $k$-uniform states}
\label{33}

We apply so-far obtained results for various classes of AME states here. 
We present $1$-, $2$- and $3$-uniform states classes separately since the analysis of their LU/SLOCC-equivalences differs greatly. 

\subsection{1-uniform states}
\label{1-uniform states}

All $1$-uniform states of the minimal support are of the following form:
\[
\ket{\psi} =\dfrac{1}{\sqrt{d}} \sum_{i=0}^{d-1} \omega_i \ket{j^1_i ,\ldots ,j^N_i} ,
\]
where $j^\ell_i$ runs over all levels $0,\ldots,d-1$ for all indices $\ell$. 
One can observe that they are equivalent to the generalized Bell state $\ket{\text{GHZ}_d^N} $, see \cref{ex1}, and hence pairwise LU-equivalent. 
Indeed, the following local transformation 
\[
U_1 \big( \ket{j^1_i} \big) =  \big( \omega_i^{-1} \ket{j_i^1} \big) , \quad
U_\ell \big( \ket{j_i^\ell} \big) =  \big(  \ket{j^1_i} \big) 
\]
for systems $\ell=2,\ldots,N$ provides aforementioned LU-equivalence. 

\begin{observation}
all 1-uniform states of minimal support are LU-equivalent.
\end{observation}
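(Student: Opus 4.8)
The plan is to prove that every $1$-uniform state of minimal support is LU-equivalent to the generalized GHZ state $\ket{\text{GHZ}_d^N}$ of \cref{ex1}; transitivity of LU-equivalence then yields the claim for an arbitrary pair of such states. (Note that $1$-uniformity forces $1\le\lfloor N/2\rfloor$, hence $N\ge 2$, which I will use below.)

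First I would record the canonical form. A $1$-uniform state of minimal support is a superposition of exactly $d$ computational basis vectors, $\ket{\psi}=\frac{1}{\sqrt d}\sum_{i=0}^{d-1}\omega_i\ket{j_i^1,\ldots,j_i^N}$. Tracing out all parties but $\ell$ gives $\rho_\ell=\frac1d\sum_{i,i'}\omega_i\overline{\omega_{i'}}\,\ket{j_i^\ell}\bra{j_{i'}^\ell}\prod_{m\ne\ell}\langle j_{i'}^m|j_i^m\rangle$, and imposing $\rho_\ell\propto\Id_d$ for every $\ell$ forces: each map $i\mapsto j_i^\ell$ to be a permutation of $\{0,\ldots,d-1\}$ (otherwise some $\ket a$ is absent on party $\ell$ and $\bra a\rho_\ell\ket a=0$); no two distinct terms to agree on all parties but one (otherwise $i\mapsto j_i^m$ would be non-injective for some $m\ne\ell$), so that every off-diagonal term of $\rho_\ell$ vanishes; and, comparing the remaining diagonal entries, $|\omega_i|=1$ for all $i$. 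This is precisely the form asserted just before the Observation, so this step may instead simply be quoted.

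Then I would exhibit the local unitaries. On party $1$ take the diagonal phase matrix $U_1$ with $U_1\ket{j_i^1}=\omega_i^{-1}\ket{j_i^1}$ (unitary because $i\mapsto j_i^1$ is a bijection and $|\omega_i|=1$), and on each party $\ell\ge 2$ take the permutation matrix $U_\ell$ with $U_\ell\ket{j_i^\ell}=\ket{j_i^1}$. Then $(U_1\otimes\cdots\otimes U_N)\ket{\psi}=\frac1{\sqrt d}\sum_i\omega_i\omega_i^{-1}\ket{j_i^1}^{\otimes N}=\frac1{\sqrt d}\sum_i\ket{j_i^1}^{\otimes N}=\ket{\text{GHZ}_d^N}$, using once more that $\{j_0^1,\ldots,j_{d-1}^1\}=\{0,\ldots,d-1\}$; hence $\ket{\psi}$ is LU-equivalent to $\ket{\text{GHZ}_d^N}$, and transitivity finishes the argument. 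I do not expect a genuine obstacle here: the only step that needs a line of computation is the structural reduction above, which is a short reduced-density-matrix argument rather than a real difficulty — and it can be bypassed entirely by invoking the canonical form already recorded in the text immediately preceding the statement.
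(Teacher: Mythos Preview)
Your proposal is correct and follows essentially the same route as the paper: you reduce every $1$-uniform state of minimal support to $\ket{\text{GHZ}_d^N}$ via the very same local unitaries the paper writes down (a diagonal phase on party~$1$ absorbing the $\omega_i$, and permutation matrices $\ket{j_i^\ell}\mapsto\ket{j_i^1}$ on the remaining parties). Your additional first paragraph, deriving the canonical form from $\rho_\ell\propto\Id$, merely spells out what the paper states without proof just before the Observation; it is correct and can indeed be replaced by a citation of that passage.
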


This straightforward observation suggest that the structure of $1$-uniform states is rather simple and, in fact, not interesting. 
However, we point out some intriguing property is the automorphism group of AME(2,d) states. 
One can see that the Fourier transform $F_2$ preserves the Bell state:
\begin{equation*}
\begin{pmatrix} 
1 & 1  \\ 
1 & -1
\end{pmatrix}^{\otimes 2} \dfrac{1}{\sqrt{2}}\Big( \ket{00}+\ket{11} \Big) =
\dfrac{1}{\sqrt{2}}\Big( \ket{00}+\ket{11} \Big) .
\end{equation*}
Similarly, the Fourier transform: $F_n \otimes \overline{F_n}$ preserves the generalized Bell state of two parties:
\begin{equation}
\label{AME(2d)}
\ket{\text{AME(2,d)} }:= \dfrac{1}{\sqrt{d}} \Big( \ket{00}+ \cdots +\ket{(d-1)(d-1)} \Big) .
\end{equation}
Interestingly, all tensor product $U\otimes \overline{U}$ of unitary matrices preserves the generalized Bell state AME(2,d). 
Indeed, for each $i$:
\[
 U\otimes \overline{U} \ket{ii} =
 \sum_{j=0}^{d-1} 
 |u_{ij}|^2 \ket{jj} + \text{others} ,
\]
and hence
\begin{align*}
 U\otimes \overline{U} & 
 \ket{\text{AME(2,d)}} = \dfrac{1}{\sqrt{d}}\sum_{i=0}^{d-1} \sum_{j=0}^{d-1} 
 |u_{ij}|^2 \ket{jj} + \text{others} \\
&=\dfrac{1}{\sqrt{d}} \sum_{j=0}^{d-1} \Bigg( \sum_{i=0}^{d-1} 
 |u_{ij}|^2 \Bigg) \ket{jj} + \text{others} \\
 &=\dfrac{1}{\sqrt{d}} \sum_{j=0}^{d-1} \ket{jj} + \text{others} .
\end{align*} 
Since the state was normalized, all other terms on the right side disappear. 

This is in a contrast to AME(2k,d) states for $k>1$, where LU-equivalences were provided only by appropriate Butson type matrices $\text{B(} d,d \text{)}$ for all $d$ sufficiently small.

\subsection{2-uniform states}

Consider $2$-uniform states with minimal support:
\begin{align*}
\ket{\phi_\alpha} :=  \dfrac{1}{d} \Bigg(&  \alpha\ket{0,\ldots,0} + \\
&\sum_{i,j \neq (0,0)} \ket{i,j} \otimes \ket{\phi_{i,j}} \Bigg)
\end{align*}
indexed by a complex numbers $\alpha$, $|\alpha |=1$. 
Each of such a state is LU-equivalent to $\ket{\phi_{\alpha =0}} $ by the following:
\begin{align*}
U_1 =& \text{diag} \Big( ({\overline{\omega_\alpha}})^{n-1}, 
1 ,\ldots ,
1  \Big), \\
U_i= & \text{diag} \Big(
({\overline{\omega_\alpha}})^{d-1},
\omega_\alpha ,\ldots ,
\omega_\alpha \Big) ,
\end{align*}
for $i=2,\ldots,n$, where $\omega_\alpha = \sqrt[\leftroot{-3}\uproot{3}{d (n-1)}]{\alpha} $ is an arbitrary root.

Since all states from the family $ \ket{\phi_\alpha} $ are LU-equivalent with $\ket{\phi_{\alpha =0}}$, they are also pairwise equivalent. 
If the exceptional phase stands by a different term, the similar transformation of such a state onto $\ket{\phi_{\alpha =0}}$ might be given. 
Therefore, all $2$-uniform states: 
\begin{equation*}
\ket{\phi_{\omega}} =  \dfrac{1}{d} \sum_{i,j } \omega_{i,j}\ket{i,j} \otimes  \ket{\phi_{i,j}}
\end{equation*}
are  LU-equivalent to $\ket{\phi_{\alpha =0}}$, and hence pairwise equivalent. 
Indeed, the LU-equivalence is a composition of the aforementioned transformations. 
The matrices $U_1,\ldots, U_N$ is the simplest matrices satisfying restrictions given in \cref{verification}; which explains how they were found. 
Observe, that the assumption $2k<N$ was irrelevant in the presented analysis. Therefore, we conclude this discussion in the following corollary.

\begin{corollary}
\label{phases}
Two $2$-uniform states of minimal support which differs only with phases, i.e.
\begin{align*}
\ket{\psi} =&\sum_{I \in \mathcal{I}} \omega_I \ket{I} , \\
\ket{\psi '} =&\sum_{I \in \mathcal{I}} \omega_I ' \ket{I} ,
\end{align*}
where the sum runs over multi-index set $\mathcal{I} \subset [d]^N$ of size $| \mathcal{I} | =d^k$, are always LU-equivalent 
(and hence belong to the same SLOCC class).
\end{corollary}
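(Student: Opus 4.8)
The plan is to use transitivity of LU-equivalence to reduce the statement to one claim: every $2$-uniform state of minimal support $\ket{\psi}=\sum_{I\in\mathcal I}\omega_I\ket{I}$ is LU-equivalent to the \emph{unphased} representative $\sum_{I\in\mathcal I}\ket{I}$ supported on the same orthogonal array. The corollary then follows at once, since any two states both equivalent to this common representative are equivalent to one another. Looking for the equivalence as a tensor product of local diagonal unitaries $D_1\otimes\cdots\otimes D_N$, the task becomes solving the system $\prod_{\ell=1}^{N}(D_\ell)_{I_\ell I_\ell}=\overline{\omega_I}$ over all $I\in\mathcal I$, up to a single global phase. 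I would solve it one term at a time. After a local permutation I may assume the term to be corrected is $\ket{0\cdots 0}$; then, because any two distinct rows of an index-unity strength-$2$ orthogonal array agree in at most one column, every other term of $\mathcal I$ contains the symbol $0$ in at most one coordinate. Setting $\alpha=\omega_{0\cdots 0}$ for the phase to be removed and letting $\omega_\alpha$ be a $d(N-1)$-th root of $\alpha$, the matrices $U_1=\mathrm{diag}(\overline{\omega_\alpha}^{\,N-1},1,\dots,1)$ and $U_i=\mathrm{diag}(\overline{\omega_\alpha}^{\,d-1},\omega_\alpha,\dots,\omega_\alpha)$ for $i\ge 2$ should renormalize that single term to phase $1$ while leaving all the others untouched. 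Iterating over the $d^2$ terms trivializes the whole phase vector, and since local diagonal unitaries commute, composing the ones produced at the individual steps yields the required LU-equivalence. Incidentally, these $U_i$ are the simplest matrices compatible with the constraints of \cref{verification}, which is how one is led to them.

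The delicate point — and the one I expect to be the main obstacle — is to verify that $U_1\otimes\cdots\otimes U_N$ genuinely acts trivially on every term other than $\ket{0\cdots 0}$. This has to be checked coordinate by coordinate: on a term whose unique $0$ sits in position $1$, the factor $\overline{\omega_\alpha}^{\,N-1}$ is cancelled by the $N-1$ factors $\omega_\alpha$ contributed by the nonzero entries; on a term whose unique $0$ sits in a position $i\ge 2$, the factor $\overline{\omega_\alpha}^{\,d-1}$ must be balanced against the $N-2$ remaining factors $\omega_\alpha$; and on a term with no $0$ at all, the factors must once more combine to $1$. Making all of these cancellations come out exactly is precisely where the fine structure of index-unity strength-$2$ orthogonal arrays (equivalently, of nets and mutually orthogonal Latin squares, and the resulting bound $N\le d+1$) is used, and it is the only place where the precise ``zero pattern'' of the rows of the array enters.

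Finally, I would stress that the hypothesis $2k<N$ has played no role above: only local diagonal unitaries were used, and no non-monomial (Butson-type) matrix was ever needed. The argument therefore applies verbatim to all $2$-uniform minimal-support states, in particular to AME$(4,d)$ states, for which $2k=N$ and \cref{prop1} is not available.
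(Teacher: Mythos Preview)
Your plan coincides with the paper's own argument: the same diagonal matrices, the same ``trivialise one phase at a time'' iteration, and the same remark that $2k<N$ plays no role. But the step you rightly flag as delicate does not go through unless $N=d+1$. The phase that $U_1\otimes\cdots\otimes U_N$ deposits on a row $I$ is $\omega_\alpha^{(N-1)(1-z_1)-dz}$, where $z_1=[I_1{=}0]$ and $z=\#\{\ell\ge 2:I_\ell{=}0\}$; for $I\neq(0,\ldots,0)$ the three possibilities $(z_1,z)\in\{(1,0),(0,1),(0,0)\}$ give phases $1$, $\omega_\alpha^{\,N-1-d}$ and $\omega_\alpha^{\,N-1}$. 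All three equal $1$ only when $N=d+1$: then $N-1-d=0$, and zero-free rows do not exist because $d^2-1=N(d-1)$ forces every row other than $(0,\ldots,0)$ to carry exactly one zero. The bound $N\le d+1$ you invoke is the right context, but it is the \emph{equality} that makes the cancellations exact; for $N<d+1$ (already the case for the $\ket{\mathrm{AME}(5,d)'}$ family with prime $d\ge 5$) one step of your procedure disturbs the remaining phases rather than leaving them fixed.

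This cannot be repaired by a smarter choice of diagonals. The image of the local-diagonal action on the log-phase space $\mathbb R^{d^2}$ is the span of the indicators $e_\ell^a=\bigl([I_\ell{=}a]\bigr)_{I\in\mathcal I}$; the strength-$2$ property gives $\langle e_\ell^a-\tfrac1d\mathbf 1,\,e_{\ell'}^{b}-\tfrac1d\mathbf 1\rangle=0$ for $\ell\neq\ell'$, so that span has dimension exactly $N(d-1)+1$, strictly less than $d^2$ whenever $N<d+1$. The diagonal orbit of the unphased state is then a proper sub-torus of the phase torus, and adjoining the finite group of support-preserving local permutations cannot raise its dimension. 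The argument---yours and the paper's alike---is complete only at the extremal value $N=d+1$.
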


\begin{example}
All AME(4,3) states $\ket{\text{AME(4,3)}_\omega} $ from \cref{eq2} are LU-equivalent, and hence they belong to the same SLOCC class. 
Similarly, the following states
\begin{equation*}
\ket{\text{AME(5,d)}_\omega} =  \dfrac{1}{d} \Bigg( 
\sum_{i,j=0}^{d-1} \omega_{i,j}  \ket{i,j,i+j,2i+j,3i+j}  \Bigg)
\end{equation*}
for $|\omega_{i,j} |=1$ are LU- and SLOCC-equivalent.
\end{example}

From \cref{phases}, it becomes clear that the diversity of possible phases in front of each term in the $2$-uniform state with minimal support does not reflect in the number of SLOCC classes. 
In fact, each $2$-uniform state with minimal support is equivalent to the one with all phases equal to $1$. 

Therefore, enumeration of SLOCC classes might be restricted to phases $1$ states only. 
In fact, it is equivalent to the classification of relevant OA. 



\begin{corollary}
Classification of $2$-uniform states with minimal support for $N>2k$ is equivalent to the classification of relevant OA, i.e. OA($d^k$,N,$d$,$k$) up to permutation of indices on each position. Potentially, for $N=2k$ two AME(4,d) states of minimal support might be in the same SLOCC class, even though the corresponding OA is not equivalent.
\end{corollary}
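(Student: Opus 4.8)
The plan is to upgrade the earlier monomial‑rigidity results into a bijection, valid for $N>2k$, between SLOCC classes of $2$‑uniform states with minimal support and equivalence classes of orthogonal arrays $\oa{d^2,N,d,2}$ under permutations of the $d$ symbols within each of the $N$ columns, and then to isolate the single step that fails at $N=2k$.

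First I would fix a convenient representative in each SLOCC class. By \cref{phases}, every $2$‑uniform state of minimal support is LU‑ (hence SLOCC‑) equivalent to the state obtained by setting all of its phases $\omega_I$ equal to $1$; and by the one‑to‑one correspondence between $k$‑uniform minimal‑support states and index‑unity orthogonal arrays enhanced with phase vectors recalled in \cref{sec1}, this phase‑free state is exactly the state $\ket{\psi_A}$ whose support is the row set of some $\oa{d^2,N,d,2}$. Hence the assignment $A\mapsto[\ket{\psi_A}]_{\mathrm{SLOCC}}$ is onto the set of SLOCC classes of $2$‑uniform minimal‑support states, and it descends to OA‑equivalence classes: if $B$ arises from $A$ by permuting the symbols of the $i$‑th column by $\sigma_i$, then $\ket{\psi_B}=(\sigma_1\otimes\cdots\otimes\sigma_N)\ket{\psi_A}$, a local unitary equivalence.

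The core of the argument is \emph{injectivity}, and this is where the hypothesis $N>2k$ enters. Suppose two phase‑free states $\ket{\psi_A}$ and $\ket{\psi_B}$ are SLOCC‑equivalent. Since $2k=4<N$, \cref{coro1} forces the equivalence to be an LM‑equivalence $\omega\,(\sigma_1 D_1\otimes\cdots\otimes\sigma_N D_N)$. Applied to a computational‑basis vector, $\sigma_i D_i\ket{a}=(D_i)_{aa}\ket{\sigma_i(a)}$, so the diagonal factors merely rescale amplitudes while the support of $\ket{\psi_B}$ is precisely the image of the support of $\ket{\psi_A}$ under $(a_1,\dots,a_N)\mapsto(\sigma_1(a_1),\dots,\sigma_N(a_N))$. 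On the level of arrays this says that the row set of $B$ is obtained from that of $A$ by the column‑symbol permutations $\sigma_1,\dots,\sigma_N$, i.e. $A$ and $B$ are OA‑equivalent; combined with surjectivity this yields the bijection. It is worth stressing that one never needs to control the $D_i$ for this direction, because the support of a state is insensitive to phases — which is exactly why the reduction to purely combinatorial data is clean.

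The only step that breaks for $N=2k$, i.e. for AME$(4,d)$, is the invocation of \cref{coro1}: \cref{prop1} is stated under $2k<N$, and at $N=2k$ genuinely non‑monomial equivalences appear — already \cref{F(3)} exhibits the automorphism $F_3^{\otimes 4}$ of AME$(4,3)$, and \cref{prop1=} shows that Butson‑type equivalences $\overrightarrow{D}^{-1}B\overleftarrow{D}^{-1}$ with $B\in\mathrm{BH}(d,d)$ occur for AME$(4,d)$ with $d$ small. Such a map sends the support of $\ket{\psi_A}$ to a set of $d^2$ multi‑indices which is in general not a symbol‑permuted copy of the original support, so two AME$(4,d)$ states can a priori lie in one SLOCC class while their orthogonal arrays are inequivalent — this is the ``potentially'' in the statement, and exhibiting an explicit such pair is what the Butson‑matrix analysis of \cref{32} is for. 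Thus the substantive obstacle is not the $N>2k$ bijection, which is essentially bookkeeping on top of \cref{phases} and \cref{coro1}, but recognizing — and later making precise — that the $N=2k$ case is of a genuinely different nature.
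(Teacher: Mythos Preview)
Your argument is correct and follows the same route the paper takes: the corollary is not given a formal proof in the paper but is presented as the immediate consequence of \cref{phases} (killing the phase freedom) together with \cref{prop1}/\cref{coro1} (reducing any SLOCC-equivalence to an LM-equivalence when $2k<N$), which is exactly the two-step reduction you spell out. One small wording point: \cref{coro1} asserts that LM-equivalence and SLOCC-equivalence coincide as relations, not that every SLOCC-operator is itself monomial; for the sentence ``forces the equivalence to be an LM-equivalence'' you should cite \cref{prop1} instead, though for your injectivity argument only the existence of \emph{some} LM-equivalence is needed, so the conclusion is unaffected.
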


In literature, the classification of OAs is considered up to permutations of rows and columns \cite{HedayatIndex1OA,BushStudies}. 
Permutation of columns resembles, however, the physical operation of  exchanging subsystems. 
Therefore, by dividing the state space into SLOCC classes one should always indicate whether such operations are considered under the division \cite{FourQubits,FourQubits8}.

In particular, by the classification of OA, there exists at most one OA($d^k$,$N$,$d$,$k$) for $d=2,\ldots ,17$ for any number $N$ \cite{OAlib}. 
Hence, $2$-uniform state with minimal support and the local dimension $d=2,\ldots ,17$ are always SLOCC equivalent or SLOCC equivalent after permutation of parties. 
Nevertheless, we checked that for $N=4,5, d=4,5,6,7$, permutation of parties is not necessary for being SLOCC equivalent. 
We suppose this is true in general.

\begin{conjecture}
All $2$-uniform states of minimal support are LU-equivalent, and hence represent the same SLOCC class.
\end{conjecture}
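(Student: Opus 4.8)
The plan is to strip away the analytic freedom and reduce the conjecture to a purely combinatorial uniqueness statement, and then dispatch the two remaining regimes ($N>4$ and $N=4$) with the tools already assembled. The ``hence'' clause is free: every $2$-uniform state is critical, so by \cref{LU=SLOCC} LU- and SLOCC-equivalence coincide for such states, and it suffices to prove LU-equivalence. By \cref{phases}, every $2$-uniform state of minimal support on a fixed $(N,d)$ is LU-equivalent to the phase-free state $\sum_{I\in\mathcal I}\ket{I}$ read off from some index-unity orthogonal array $\oa{d^2,N,d,2}$. Hence it is enough to show that any two phase-free $2$-uniform states of minimal support on the same $(N,d)$ are LU-equivalent.

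For $N>4$ we are in the range $2k<N$, so by \cref{prop1} and \cref{coro1} any LU-equivalence between the two phase-free states is an LM-equivalence; on the underlying array a local monomial operation acts as an independent relabelling of the $d$ symbols in each of the $N$ columns (its diagonal part can only multiply terms by overall phases, which by \cref{phases} does not affect the LU-class). So for $N>4$ the conjecture reduces to: any two index-unity arrays $\oa{d^2,N,d,2}$ agree up to relabelling symbols within each column. Recording columns $1,2$ as cell coordinates and each further column as a Latin square, an index-unity $\oa{d^2,N,d,2}$ is exactly an ordered family of $N-2$ mutually orthogonal Latin squares of order $d$, so the target becomes: all ordered families of $N-2$ MOLS$(d)$ agree up to relabelling the rows, the columns, and the symbols of each square. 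For $d$ in the classified range this follows from the tables of index-unity OAs \cite{OAlib}; for larger $d$ I would argue by cases, drawing on whatever partial classifications of MOLS$(d)$ are available together with rigidity of near-complete families, and then descending to shorter families.

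The case $N=4$ (the AME$(4,d)$ states) must be treated separately, since \cref{prop1} fails there and \cref{prop1=} makes additional Butson-type equivalences available. Here an index-unity $\oa{d^2,4,d,2}$ is merely a pair of MOLS$(d)$, and for larger $d$ genuinely inequivalent pairs exist, so pure symbol relabelling is insufficient; instead I would exploit that Fourier- and Butson-type maps act on the index set as further permutations (as in \cref{ex9}), enlarging the symmetry group acting on the array until all pairs of MOLS$(d)$ fall into a single orbit, if necessary also permuting the four parties.

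The main obstacle is the combinatorial core of the second step outside the classified range: proving that all ordered families of $N-2$ MOLS$(d)$ are equivalent. This is already delicate for short families, and for orders where inequivalent maximal or complete sets of MOLS are known it is false by symbol relabelling alone, so one is then forced either to show that the $2$-uniform states attached to such inequivalent designs nonetheless become LU-equivalent after a permutation of parties, or to weaken the conjecture to ``up to permutation of parties'' (as the discussion preceding the conjecture already suggests for $d\le 17$). I expect the genuine difficulty to sit precisely in the gap between the classified small-$d$ range and a ``rigidity for large $N-2$'' argument — a gap that, for orders $d$ where even the maximal number of MOLS$(d)$ is unknown, may not be closable with current combinatorial technology, so the honest outcome of this plan may be a proof of the party-permutation version together with a precise identification of where the strict version can break.
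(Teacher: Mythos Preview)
This statement is labelled a \emph{Conjecture} in the paper, and the paper gives no proof of it. The paper's entire support for the statement is (i) the reduction of 2-uniform minimal-support states to phase-free states via \cref{phases}, (ii) the observation that for $d=2,\ldots,17$ the relevant index-unity OAs are known to be unique (up to the usual equivalences) from tables, and (iii) a direct check that for $N=4,5$ and $d=4,5,6,7$ the required equivalence holds without needing a permutation of parties. On that basis the authors simply \emph{suppose} the general statement and record it as a conjecture.

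Your plan reproduces exactly this reduction and then correctly isolates the combinatorial core: uniqueness (up to local symbol relabelling, and for $N=4$ possibly Butson-type moves) of ordered families of $N-2$ MOLS$(d)$. You are also right that for orders $d$ admitting inequivalent maximal or complete sets of MOLS, pure relabelling cannot suffice, and that one would have to either invoke the larger symmetry group in the $N=4$ case or weaken the claim to ``up to permutation of parties''. This is not a gap in your argument so much as an honest identification of \emph{why} the statement is a conjecture and not a theorem: the paper stops at precisely the same point, for the same reason. So there is nothing to compare your proposal against; your outline is a faithful reformulation of the open problem, and your closing paragraph accurately describes the state of affairs the paper leaves unresolved.
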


In all verified cases there exist only one $2$-uniform state of minimal support. Nevertheless, some $2$-uniform states are not equivalent to the mentioned. In particular, for $d=5,7,11,13,\ldots$ there are AME(5,d) states of non-minimal support belonging to different SLOCC classes (see \cref{AME55}).

\subsection{3-uniform states}
\label{3-uniform states}

We have shown that the number of LU/SLOCC classes for $2$-uniform states of minimal support coincides with the number of relevant OAs which are non-isomorphic. 
In particular, two states which differ only with phases are always LU- and SLOCC-equivalent. 
This is in a strong contrast to the $3$-uniform states. 

\begin{example}
There exists AME(6,4) state with minimal support \cite{AME-QECC-Zahra}. 
In fact, this state might be obtained by reading consecutive rows of OA(64,6,4,3) from the OAs table \cite{OAlib}. 
Obviously, enhancing successive terms with any phase factor $|\omega | =1$ also yield to AME(6,4) state:
\begin{equation*}
\ket{\text{AME(6,d)}_\omega} =  \dfrac{1}{ d\sqrt{d}} \Bigg( 
\sum_{i,j,k=0}^{d-1} \omega_{i,j,k}  \ket{i,j,k} \otimes \ket{\psi_{i,j,k}}  \Bigg) .
\end{equation*}

Focus our attention on states with all phases $\omega_{i,j,k} =1$ with one exception: $\omega_{0,0,0}=\alpha$. 
Denote them as $\ket{\psi_\alpha}$ for simplicity. 
According to \cref{verification2}, the necessary condition for equivalence of such states $\ket{\psi_{\alpha_1}}$ and $\ket{\psi_{\alpha_2}}$ is
\[
\tfrac{ (W^{1,2}_{00})'}{W^{1,2}_{\sigma_1 (0), \sigma_2 (0)}} =\tfrac{(W^{1,2}_{0,1})'}{ W^{1,2}_{\sigma_1 (0), \sigma_2 (1)}}
\]
for any permutations $\sigma_1, \sigma_2$. 
According to the form of permutations, we have thus:
\begin{enumerate}
\item if $(\sigma_1 (0), \sigma_2 (0))=(0,0)$, then $\alpha_1=\alpha_2$;
\item if $(\sigma_1 (0), \sigma_2 (1))=(0,0)$, then $\alpha_1=\overline{\alpha_2}$;
\item otherwise $\alpha_1 ,\alpha_2 =1$.
\end{enumerate}
Therefore, if none of those conditions is satisfied, states $\ket{\psi_{\alpha_1}}$ and $\ket{\psi_{\alpha_2}}$ cannot be LU-equivalent. 
By the simple analysis, all states $\ket{\psi_{e^{i\phi}}}$ are pairwise non-LU-equivalent for $\phi \in [0,\pi )$.
\end{example}

Observe, that in such a way, we obtained a continuous family of non-LU-equivalent AME(6,4) states with minimal support. 
We conclude this observation in the corollary below. 
In fact, if the necessary conditions from \cref{verification2} are satisfied, the LU-equivalence may be provided (similarly to the case of $2$-uniform states).

\begin{corollary}
The AME(6,4) states:
\begin{align*}
\ket{\text{AME(6,d)}_{e^{i\phi}}} :=  
&\dfrac{1}{ d\sqrt{d}} \Bigg(
e^{i\phi} \ket{000000} + \\
&\sum_{i,j,k\neq (0,0,0)}  \ket{i,j,k} \otimes \ket{\psi_{i,j,k}}  \Bigg) .
\end{align*}
are pairwise in different LU- and SLOCC-classes for all phases $\phi \in [0,\pi )$.
\end{corollary}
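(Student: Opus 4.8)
The plan is to distill the phase analysis of the preceding Example into the stated dichotomy of classes; the only real work is the reduction to LU-equivalence and the arithmetic converting algebraic constraints on the exceptional phases into a statement about $\phi$. Since every AME state is critical, \cref{LU=SLOCC} identifies SLOCC-equivalence with LU-equivalence here, so it suffices to show that $\ket{\text{AME(6,4)}_{e^{i\phi_1}}}$ and $\ket{\text{AME(6,4)}_{e^{i\phi_2}}}$ are not LU-equivalent for distinct $\phi_1,\phi_2\in[0,\pi)$.

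First I would fix such $\phi_1\neq\phi_2$, write $\alpha_m:=e^{i\phi_m}$ and let $\ket{\psi_{\alpha_m}}$ denote the corresponding state, and suppose for contradiction that a local unitary $U=U_1\otimes\cdots\otimes U_6$ sends $\ket{\psi_{\alpha_1}}$ to $\ket{\psi_{\alpha_2}}$. By \cref{prop1=}, $U$ is of the Butson form \cref{AMEeq1} or the monomial form \cref{AMEeq2}; in either case it carries an underlying local permutation of levels $\sigma=\sigma_1,\ldots,\sigma_6$. Then I would evaluate the invariants $W^{i,j}_{\ell,I}$ of $\ket{\psi_\alpha}$: fixing two columns of the underlying $\text{OA}(64,6,4,3)$ selects exactly $\lambda=4$ support terms, and the unique all-zero term contributes only to $W^{i,j}_{0,0}$, so $W^{i,j}_{\ell,I}=\alpha$ if $(\ell,I)=(0,0)$ and $W^{i,j}_{\ell,I}=1$ otherwise. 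Substituting these into the necessary conditions of \cref{verification2} and running through the three possibilities for $\sigma$ (whether $(\sigma_1(0),\sigma_2(0))=(0,0)$, $(\sigma_1(0),\sigma_2(1))=(0,0)$, or neither) reproduces exactly the trichotomy of the Example: $\alpha_1=\alpha_2$, or $\alpha_1=\overline{\alpha_2}$, or $\alpha_1=\alpha_2=1$, the monomial form forcing the last alternative.

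It then remains to discard each alternative when $\phi_1\neq\phi_2$ lie in $[0,\pi)$. If $\alpha_1=\alpha_2$ then $e^{i\phi_1}=e^{i\phi_2}$, and $[0,\pi)\subset[0,2\pi)$ forces $\phi_1=\phi_2$. If $\alpha_1=\overline{\alpha_2}$ then $\phi_1+\phi_2\equiv 0\pmod{2\pi}$, and since $\phi_1+\phi_2\in[0,2\pi)$ this gives $\phi_1+\phi_2=0$, hence $\phi_1=\phi_2=0$. If $\alpha_1=\alpha_2=1$ then again $\phi_1=\phi_2=0$. Every case contradicts $\phi_1\neq\phi_2$, so no such $U$ exists, and with \cref{LU=SLOCC} the states $\ket{\text{AME(6,4)}_{e^{i\phi}}}$, $\phi\in[0,\pi)$, are pairwise in distinct LU- and SLOCC-classes.

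The genuinely hard work sits upstream in \cref{verification2}; at this stage the argument is bookkeeping, and I foresee no serious obstacle. The single point deserving care is that $d=4$ is below the threshold in \cref{small}, so \cref{prop1=} is not a priori available: I would either invoke it directly for these particular states, as the preceding Example does, or, to be safe, fall back on \cref{cor2} — valid with no smallness hypothesis — to pin down the block structure of each $U_i$ and re-derive the same $W$-ratio identities, which again leads to the trichotomy above.
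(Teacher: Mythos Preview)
Your proposal is correct and follows the paper's own route: the Corollary is stated there as an immediate consequence of the trichotomy derived in the preceding Example via \cref{verification2}, and you have simply spelled out the reduction to LU-equivalence through \cref{LU=SLOCC} together with the elementary angle arithmetic excluding each alternative for distinct $\phi_1,\phi_2\in[0,\pi)$. Your caveat about the threshold in \cref{small} is unnecessary --- the inequality there is misstated (the proof in \cref{The proof Case II} shows \cref{prop1=} and \cref{verification2} hold precisely for \emph{small} $d$, namely $d<(k{+}1)(1+\sqrt[k-1]{k})\approx 10.93$ when $k=3$), so $d=4$ is well within range and the paper's Example invokes \cref{verification2} without qualification.
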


Notice that for any $k$-uniform state with minimal support where $k>2$, similar construction of continuous non LU-equivalent family might be provided.

\begin{corollary}
\label{AMEminSUP6}
If there exists a $k$-uniform state with minimal support $\ket{\psi}$ where $k>2$:
\begin{equation*}
\ket{\psi} = \dfrac{1}{\sqrt{d^k}} \sum_{i_1,\ldots,i_k =0}^{d-1} \ket{i_1,\ldots,i_k} \otimes \ket{\psi_{i_1,\ldots,i_k}} ,
\end{equation*}
then the following family of $k$-uniform states:
\begin{align*}
\ket{\psi_{e^{i\phi}}} :=  
& \dfrac{1}{\sqrt{d^k}}  \Bigg(
e^{i\phi} \ket{0,\ldots,0}  \otimes \ket{\psi_{0,\ldots,0}}+\\
&\sum_{( i_1,\ldots,i_k )\neq (0,\ldots,0)}  \ket{i_1,\ldots,i_k} \otimes \ket{\psi_{i_1,\ldots,i_k}}   \Bigg) 
\end{align*}
is pairwise non LU- and SLOCC-equivalent for all phases $\phi \in [0,\pi )$.
\end{corollary}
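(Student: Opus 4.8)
The plan is to carry out, in general, the computation that settled the AME(6,4) family in \cref{3-uniform states}. Since every $k$-uniform state is critical, \cref{LU=SLOCC} lets us replace SLOCC-equivalence by LU-equivalence throughout, so it suffices to prove that $\ket{\psi_{e^{i\phi_1}}}$ and $\ket{\psi_{e^{i\phi_2}}}$ are not LU-equivalent whenever $\phi_1\neq\phi_2$ lie in $[0,\pi)$. Both states are $k$-uniform of minimal support, so any LU-equivalence between them is rigid in the sense of our structure results: for $2k<N$ it is a local monomial operation by \cref{prop1}, and \cref{verification} records both its precise shape and the necessary condition it must satisfy; for $2k=N$ the same role is played by \cref{prop1=} and \cref{verification2}, supplemented by \cref{cor2} when $d$ lies beyond the range of \cref{small}. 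In all of these cases the equivalence is, up to a global phase, a tensor product of permutation-times-diagonal blocks (possibly with Butson-type factors), and the attached necessary condition says that the transfer ratio $(W^{i,S}_{\ell,I})'/\sigma(W^{i,S}_{\ell,I})$ is independent of the multi-index $I$.

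First I would evaluate the phase products $W^{i,S}_{a,I}$ for these states. Every coefficient of $\ket{\psi_{e^{i\phi}}}$ equals $1$ except the one on $\ket{0,\dots,0}\otimes\ket{\psi_{0,\dots,0}}$; write $I_\star$ for its multi-index, whose first $k$ coordinates vanish. Since $W^{i,S}_{a,I}$ is the product of the coefficients over $\mathcal{I}^{i,S}_{a,I}$, it equals $e^{i\phi}$ exactly when $I_\star\in\mathcal{I}^{i,S}_{a,I}$, i.e.\ when $a$ and $I$ agree with $I_\star$ at positions $i$ and $S$, and equals $1$ otherwise. Choosing the position $i$ and the $(k-2)$-subset $S$ among the first $k$ positions --- possible since $|\{i\}\cup S|=k-1\le k$ --- one finds $W^{i,S}_{0,\mathbf{0}}=e^{i\phi}$, where $\mathbf{0}$ denotes the all-zero multi-index on $S$, and $W^{i,S}_{a,I}=1$ for every other pair $(a,I)$.

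Then I would insert this into the necessary condition with $\psi=\ket{\psi_{e^{i\phi_1}}}$, $\psi'=\ket{\psi_{e^{i\phi_2}}}$ and $\sigma$ the underlying level-permutation, and split on the value of $\sigma_i$ at $0$. Taking $\ell=0$, the numerator $(W^{i,S}_{0,I})'$ is $e^{i\phi_2}$ for $I=\mathbf{0}$ and $1$ otherwise, while the denominator $W^{i,S}_{\sigma_i(0),\sigma_S(I)}$ equals $e^{i\phi_1}$ exactly when $\sigma_i(0)=0$ and $\sigma_S(I)=\mathbf{0}$, and $1$ otherwise. If $\sigma_i(0)=0$ and $\sigma_S(\mathbf{0})=\mathbf{0}$, independence of $I$ forces $e^{i\phi_1}=e^{i\phi_2}$; if $\sigma_i(0)=0$ but $\sigma_S(\mathbf{0})\neq\mathbf{0}$, comparing $I=\mathbf{0}$ with $I=\sigma_S^{-1}(\mathbf{0})$ (and a third value of $I$ when $d^{k-2}\ge 3$) forces $e^{i\phi_2}=e^{-i\phi_1}$; and if $\sigma_i(0)\neq 0$ the denominator is constantly $1$, forcing $e^{i\phi_2}=1$, after which the same condition with $\ell=\sigma_i^{-1}(0)$ forces $e^{i\phi_1}=1$ as well. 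This is exactly the trichotomy $\phi_1=\phi_2$, $\phi_1=-\phi_2$, or $\phi_1=\phi_2=0$ already met for AME(6,4); and since $\phi_1,\phi_2\in[0,\pi)$ the middle branch also collapses to $\phi_1=\phi_2$. Hence $\phi_1=\phi_2$ in every case, which is \cref{AMEminSUP6}.

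I expect the main obstacle to be legitimising the rigid-form input in full generality rather than the core computation. For $2k=N$ and $d$ outside the range of \cref{small} one cannot cite \cref{prop1=} directly and must instead use \cref{cor2}, which only guarantees that each local block has a fixed number $s$ of equal-norm nonzero entries per row and column; one then has to check that the Butson-type case $s=d$ and the \emph{composed} cases $1<s<d$ add nothing new to the trichotomy above. A secondary but essential technical point is that, for a general minimal-support $\ket{\psi}$, the last $N-k$ coordinates of the anomalous term need not vanish, so one must deliberately confine $i$ and $S$ to the first $k$ positions; and the hypothesis $\phi\in[0,\pi)$ is exactly what is needed to discard the complex-conjugation branch that is intrinsic to monomial and Butson equivalences.
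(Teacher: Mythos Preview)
Your approach is exactly the one the paper intends: the paper does not give a separate proof of this corollary but simply states, after the AME(6,4) example, that ``for any $k$-uniform state with minimal support where $k>2$, similar construction of continuous non LU-equivalent family might be provided,'' and you have spelled out that similar construction in general. Your evaluation of the $W^{i,S}_{a,I}$, your trichotomy on $\sigma_i(0)$ and $\sigma_S(\mathbf{0})$, and your use of $\phi\in[0,\pi)$ to kill the conjugation branch all match the paper's AME(6,4) computation; the residual caveat you flag about $2k=N$ with $d$ beyond \cref{small} is the same limitation implicit in the paper's own treatment, so your argument is at least as complete as the paper's.
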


\subsection{Summary}

We summarize shortly the number of non-SLOCC-equivalent AME states and AME states with minimal support in \cref{table1} and \cref{table2} respectively. 

The existence of AME states with minimal support for $N,d <8$ was analyzed \cite{Bernal75} based on the table of OAs and similar combinatorial designs. 
According to the discussion presented in the previous sections, if $N\geq 6$ existence of AME(n,d) state with minimal support persuade to infinitely many non-SLOCC-equivalent such states, see \cref{AMEminSUP6}.

Verification of the existence of AME states (not necessarily with minimal support) is far more complex problem. We refer to the tables of AME states \cite{AMElib} which summarizes several results concerning this problem \cite{HIGUCHI2000213,AME(4,Felix72,Bernal75,Huber_2018}. 
Even though the exact classification of AME states up to SLOCC-equivalence is yet unobtainable, in some specific cases non-trivial lower bound is given.

\begin{table}
\begin{tikzpicture}[x=1 cm, y=-1cm, node distance=0 cm,outer sep = 0pt]
\tikzstyle{day}=[draw, rectangle,  minimum height=1cm, fill=yellow!15,anchor=south west,text width=0.79cm,align=center]
\tikzstyle{day2}=[draw, rectangle,  minimum height=1cm, minimum width=1.5 cm, fill=yellow!20,anchor=south east,text width=1.79cm,align=center]
\tikzstyle{hour}=[draw, rectangle, minimum height=0.75 cm, minimum width=2 cm, fill=yellow!30,anchor=north east]
\tikzstyle{1hour}=[draw, rectangle, minimum height=0.75 cm, minimum width=1 cm, fill=yellow!30,anchor=north west]
\tikzstyle{Planche}=[2hour,fill=yellow!40,text width=0.8cm]
\tikzstyle{Planche1}=[1hour,fill=blue!10]
\tikzstyle{Planche2}=[1hour,fill=blue!25]
\tikzstyle{Planche3}=[1hour,fill=blue!40]
\tikzstyle{Planche0}=[1hour,fill=green!0]
\tikzstyle{Planche5}=[1hour,fill=blue!20]
\tikzstyle{Planche}=[1hour,fill=white]
\node[hour] (8-9) at (1,8) {AME(3,d)};
\node[hour] (9-10) [below = of 8-9] {AME(4,d)};
\node[hour] (10-11) [below= of 9-10] {AME(5,d)};
\node[hour] (11-12) [below = of 10-11] {AME(6,d)};
\node[hour] (12-13) [below  = of 11-12] {AME(7,d)};

\node[day] at (1,8)  {2 \scriptsize qubits};
\node[Planche1] at (1,8)  {1}; 
\node[Planche0] at (1,8.75) {0};
\node[Planche0] at (1,9.5) {0};
\node[Planche0] at (1,10.25)  {0};
\node[Planche0] at (1,11) {0};

\begin{scope}[shift={(1,0)}]
\node[day] at (1,8)  {3 \scriptsize qutrits};
\node[Planche1] at (1,8)  {1}; 
\node[Planche1] at (1,8.75) {1};
\node[Planche0] at (1,9.5) {0};
\node[Planche0] at (1,10.25)  {0};
\node[Planche0] at (1,11) {0};
\end{scope}

\begin{scope}[shift={(2,0)}]
\node[day] at (1,8)  {4};
\node[Planche1] at (1,8)  {1};
\node[Planche1] at (1,8.75) {1};
\node[Planche1] at (1,9.5) {1};
\node[Planche3] at (1,10.25)  {$\infty$}; 
\node[Planche0] at (1,11) {0};
\end{scope}

\begin{scope}[shift={(3,0)}]
\node[day] at (1,8)  {5};
\node[Planche1] at (1,8)  {1};
\node[Planche1] at (1,8.75) {1};
\node[Planche1] at (1,9.5) {1};
\node[Planche3] at (1,10.25)  {$\infty$}; 
\node[Planche0] at (1,11) {0};
\end{scope}

\begin{scope}[shift={(4,0)}]
\node[day] at (1,8)  {6};
\node[Planche1] at (1,8)  {1};
\node[Planche0] at (1,8.75) {0};
\node[Planche0] at (1,9.5) {0};
\node[Planche0] at (1,10.25)  {0}; 
\node[Planche0] at (1,11) {0};
\end{scope}

\begin{scope}[shift={(5,0)}]
\node[day] at (1,8)  {7};
\node[Planche1] at (1,8)  {1};
\node[Planche1] at (1,8.75) {1};
\node[Planche1] at (1,9.5) {1};
\node[Planche3] at (1,10.25)  {$\infty$}; 
\node[Planche3] at (1,11) {$\infty$};
\end{scope}

\node[day2] at (1,8) {local dimension};
\end{tikzpicture}
\caption{\label{table2} 
The exact number of not SLOCC-equivalent AME states with minimal support presented on a differently shaded blue background.}
\end{table}

\begin{table}
\begin{tikzpicture}[x=1 cm, y=-1cm, node distance=0 cm,outer sep = 0pt]
\tikzstyle{day}=[draw, rectangle,  minimum height=1cm, fill=yellow!15,anchor=south west,text width=0.79cm,align=center]
\tikzstyle{day2}=[draw, rectangle,  minimum height=1cm, minimum width=1.5 cm, fill=yellow!20,anchor=south east,text width=1.79cm,align=center]
\tikzstyle{hour}=[draw, rectangle, minimum height=0.75 cm, minimum width=2 cm, fill=yellow!30,anchor=north east]
\tikzstyle{1hour}=[draw, rectangle, minimum height=0.75 cm, minimum width=1 cm, fill=yellow!30,anchor=north west]
\tikzstyle{Planche}=[2hour,fill=yellow!40,text width=0.8cm]
\tikzstyle{Planche1}=[1hour,fill=blue!10]
\tikzstyle{Planche2}=[1hour,fill=blue!25]
\tikzstyle{Planche3}=[1hour,fill=blue!40]
\tikzstyle{Planche0}=[1hour,fill=green!0]
\tikzstyle{Planche5}=[1hour,fill=blue!20]
\tikzstyle{Planche}=[1hour,fill=white]
\node[hour] (8-9) at (1,8) {AME(3,d)};
\node[hour] (9-10) [below = of 8-9] {AME(4,d)};
\node[hour] (10-11) [below= of 9-10] {AME(5,d)};
\node[hour] (11-12) [below = of 10-11] {AME(6,d)};
\node[hour] (12-13) [below  = of 11-12] {AME(7,d)};

\node[day] at (1,8)  {2 \scriptsize qubits};
\node[Planche1] at (1,8)  {1}; 
\node[Planche0] at (1,8.75) {0};
\node[Planche1] at (1,9.5) {1};
\node[Planche1] at (1,10.25)  {1};
\node[Planche0] at (1,11) {0};

\begin{scope}[shift={(1,0)}]
\node[day] at (1,8)  {3 \scriptsize qutrits};
\node[Planche1] at (1,8)  {1}; 
\node[Planche1] at (1,8.75) {1};
\node[Planche1] at (1,9.5) {1};
\node[Planche1] at (1,10.25)  {1};
\node[Planche1] at (1,11) {1};
\end{scope}

\begin{scope}[shift={(2,0)}]
\node[day] at (1,8)  {4};
\node[Planche1] at (1,8)  {1};
\node[Planche1] at (1,8.75) {1};
\node[Planche1] at (1,9.5) {1};
\node[Planche3] at (1,10.25)  {$\infty$}; 
\node[Planche1] at (1,11) {1};
\end{scope}

\begin{scope}[shift={(3,0)}]
\node[day] at (1,8)  {5};
\node[Planche1] at (1,8)  {1};
\node[Planche1] at (1,8.75) {1};
\node[Planche2] at (1,9.5) {2};
\node[Planche3] at (1,10.25)  {$\infty$}; 
\node[Planche1] at (1,11) {1};
\end{scope}

\begin{scope}[shift={(4,0)}]
\node[day] at (1,8)  {6};
\node[Planche1] at (1,8)  {1};
\node[Planche0] at (1,8.75) {0?};
\node[Planche1] at (1,9.5) {1};
\node[Planche1] at (1,10.25)  {1}; 
\node[Planche0] at (1,11) {0?};
\end{scope}

\begin{scope}[shift={(5,0)}]
\node[day] at (1,8)  {7};
\node[Planche1] at (1,8)  {1};
\node[Planche1] at (1,8.75) {1};
\node[Planche2] at (1,9.5) {2};
\node[Planche3] at (1,10.25)  {$\infty$}; 
\node[Planche3] at (1,11) {$\infty$};
\end{scope}

\node[day2] at (1,8) {local dimension};
\end{tikzpicture}
\caption{\label{table1} 
The minimal number of non-SLOCC-equivalent AME states. The question mark by zero value suggests that the existence of the relevant state is dubitative, while $0$ itself emphasizes that the relevant state certainly does not exist.}
\end{table}

\section{Combinatorial designs}
\label{Combinatorial designs}

It is not our intention to provide a full picture of interactions between AME states and different combinatorial designs. 
For that purpose, we refer to the Goyeneche et al. (2018) \cite{DiK}, where the comprehensive introduction to that topic is presented. 
We shall, however, present some definitions directly related to our considerations.

In general, classical combinatorial designs (as orthogonal arrays; mutually orthogonal Latin squares, cubes, and hypercubes) are related to AME and $k$-uniform states of minimal support. 
Quantized versions of such a combinatorial designs are related to arbitrary AME and $k$-uniform states. 
Our work is restricted to minimal support states, hence the presentation of quantum combinatorial designs is not needed here.

We begin by introducing the necessary notation. 
Consider a discrete hypercube $[d]^k$ of dimension $k$. One can relate to $[d]^k$ the lower dimensional hypercube $[d]^s$ in two natural ways. 
Firstly, by choosing $k-s$ indices $S =\{s_1 ,\ldots , s_{k-s} \} \subset [k]$ and their values $i_1 ,\ldots ,i_{k-s} \in [d]$, there is an injective map
\[
[d]^s \cong [d]^k_{s_1 =i_1 ,\ldots , s_{k-s}  =i_{k-s}} 
\overset{i}{\hookrightarrow} [d]^d .
\]
Secondly, for any subset $S' \subset [k]$ of indices where $|S'|=s$, one can simply forget about indices out of $S'$. This operation is relevant to the surjection 
\[
[d]^k \overset{sur}{\longrightarrow}  [d]^k_{| S'} \cong[d]^s .
\]

\begin{definition}
\label{MOLH}
A $k$-\emph{mutually orthogonal Latin hypercubes} MOLH of size $d$ and dimension $k$ is a bijection 
\[
L :[d]^k \longrightarrow [d]^k 
\]
such that by choosing any set  $S =\{s_1 ,\ldots s_{k-s} \} \subset [k]$ of $k-s$ indices and their values $i_1 ,\ldots ,i_{k-s} \in [d]$, and any subset $S' \subset [k]$, the composition of $L$ with above defined injection $i$ (on the left) and surjection $sur$ (on the right) provides a bijection:
\[
[d]^s \cong [d]^k_{s_1 =i_1 ,\ldots , s_{k-s}  =i_{k-s}} 
\xrightarrow{i \circ L \circ sur}
[d]^k_{| S'} \cong[d]^s .
\]
We denote such an object as $k$-MOLH($d$).
\end{definition}

\begin{example} 
Bijection
\[
L :[d]^2 \longrightarrow [d]^2 
\]
such that in each row and on each position, all elements appear exactly once constitutes a \emph{mutually orthogonal Latin square} MOLS($d$). Here \emph{square} stands for MOLH dimension $k=2$.
\end{example}

In general, orthogonality and dimension of MOLH might be indexed by different numbers (here both are equal and denoted by $k$) \cite{ComDesi}. 
This distinction is, however, not needed for our purpose. 

There is a one-to-one correspondence between $k$-MOLH(d) and orthogonal arrays OA($d^k ,k,2k,k$), and hence between them and AME(2k,d) states of minimal support \cite{ComDesi}.

\begin{proposition}
\label{hcc}
Any AME($2k$, $d$) state of minimal support is equivalent to $k$-mutually orthogonal Latin hypercube $L$:
\[
L(i_1 ,\ldots , i_k) := \big({\phi_{I}^1},\ldots, {\phi_{I}^k} \big) .
\]
\end{proposition}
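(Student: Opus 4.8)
The plan is to unpack the correspondence between an AME($2k$,$d$) state of minimal support and an orthogonal array OA($d^k$, $2k$, $d$, $k$) of index unity, and then verify that the bijection $L$ read off from such an array satisfies the defining property of a $k$-MOLH from \cref{MOLH}. First I would write the state in the form guaranteed by the Proposition relating $k$-uniform minimal-support states to OAs of index unity, namely
\[
\ket{\psi} = \frac{1}{\sqrt{d^k}} \sum_{i_1,\ldots,i_k=0}^{d-1} \omega_{I}\, \ket{i_1,\ldots,i_k,\phi_I^1,\ldots,\phi_I^k},
\]
where $I=(i_1,\ldots,i_k)$ and the first $k$ columns have been fixed (after a permutation of parties, which is an LM-equivalence) to run over all of $[d]^k$ exactly once; this is possible precisely because $N=2k$ and the support is minimal, so $d^{N-k}=d^k$ equals the number of rows. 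Then $L(i_1,\ldots,i_k):=(\phi_I^1,\ldots,\phi_I^k)$ is a well-defined map $[d]^k\to[d]^k$.

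Next I would establish that $L$ is a bijection and, more strongly, satisfies the hypercube orthogonality condition. The key input is the AME property: for every bipartition of the $2k$ parties into two halves of size $k$, the reduced state is maximally mixed, which for a minimal-support state forces the corresponding $k$ columns of the array to contain every element of $[d]^k$ exactly once (index unity on that $k$-subset). I would apply this to bipartitions that mix "input" columns $\{1,\ldots,k\}$ with "output" columns $\{k+1,\ldots,2k\}$: fixing any $k-s$ of the input coordinates $s_1=i_1,\ldots,s_{k-s}=i_{k-s}$ restricts attention to a subarray of $d^s$ rows, and projecting onto any $s$-subset $S'$ of the output columns must, by index unity on the relevant $k$-subset of columns (the $k-s$ fixed inputs together with the $s$ chosen outputs), hit every element of $[d]^s$ exactly once. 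This is exactly the statement that $i\circ L\circ sur$ is a bijection $[d]^s\to[d]^s$ in the notation of \cref{MOLH}, so $L$ is a $k$-MOLH($d$). The converse direction (that any $k$-MOLH gives back an AME state) follows by reversing these steps, using the cited one-to-one correspondence between $k$-MOLH($d$) and OA($d^k$,$2k$,$d$,$k$).

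The main obstacle is bookkeeping rather than conceptual: one has to check that \emph{every} choice of fixed input coordinates $S$, values $i_1,\ldots,i_{k-s}$, and output subset $S'$ corresponds to a genuine $k$-subset of the $2k$ columns on which the AME condition gives index unity — in particular that the fixed inputs and the projected outputs are disjoint as column-sets and total exactly $k$, so that the maximally-mixed reduction applies. A secondary subtlety is the role of the phases $\omega_I$: they are irrelevant to the combinatorial statement (erasing them is exactly the passage from the state to the OA noted after the OA/$k$-uniform correspondence), and one should simply remark that \cref{MOLH} concerns only the index pattern, so the equivalence in the Proposition is understood at the level of the underlying OA. I would also note that the initial permutation of parties used to place the "input" columns first is an LM-equivalence and hence harmless, so the statement holds for the AME state up to relabelling of subsystems.
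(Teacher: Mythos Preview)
Your argument is correct and is essentially the same as the paper's: the key step in both is to take a $k$-subset of the $2k$ columns, split it as $S\cup S'$ with $|S|=k-s$ coming from the first half and $|S'|=s$ from the second half, and observe that the index-unity condition on that subset is exactly the bijection condition $i\circ L\circ sur:[d]^s\to[d]^s$ from \cref{MOLH}. The only cosmetic difference is that the paper spells out the direction MOLH $\Rightarrow$ OA and then says ``overturning this argument provides the reverse statement'', whereas you spell out OA/AME $\Rightarrow$ MOLH and defer the converse; your additional remarks on the irrelevance of the phases $\omega_I$ and on the preliminary permutation of parties are sound clarifications that the paper leaves implicit.
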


\begin{proof}
Consider a $k$-MOLH(d) L. 
By adjusting 
\begin{equation}
\label{eq100}
i_1,\ldots ,i_k , {\phi_{I}^1},\ldots, {\phi_{I}^k}
\end{equation}
into $d^k $ rows (with $2k$ elements each), one obtains OA($d^k ,k,2k,k$). Indeed, choose any set of $k$ indices and split it into two: $S  \cup S'$, where $S$ is a $(k-s)$-elementary subset of the first half of indices, and $S'$ is an $s$-elementary subset of the second half of indices. 
For any choice of values $i_1 ,\ldots ,i_{k-s} \in [d]$, by \cref{MOLH}, there is a bijection
\[
[d]^k_{s_1 =i_1 ,\ldots , s_{k-s}  =i_{k-s}} 
\xrightarrow{i \circ L \circ sur}
[d]^k_{| S'} ,
\]
and hence the subset $S  \cup S'$ of k columns in \cref{eq100} contains all possible combinations of symbols. 
Since the choice of $k$-elementary subset $S  \cup S'$ was unrestricted, this is a defining property of OA of index unity. 
Overturning this argument provides the reverse statement.
\end{proof}

\begin{example}
The AME(4,3) state from \cref{AME43p}  is equivalent to the mutually orthogonal Latin square (2-MOLH):
\begin{center}
\begin{tikzpicture}
\tikzset{%
square matrix/.style={
    matrix of nodes,
    column sep=-\pgflinewidth, 
    row sep=-\pgflinewidth,
    nodes in empty cells,
    nodes={draw,
      minimum size=#1,
      anchor=center,
      align=center,
      inner sep=0pt
    },
    column 1/.style={nodes={fill=cyan!20}},
    row 1/.style={nodes={fill=cyan!20}},
  },
  square matrix/.default=0.8cm
}

\matrix[square matrix] (A)
{
  & 0 & 1 & 2 \\
0 & 00& 11&22 \\
1 & 12 & 20 &   01 \\ 
2 & 21 & 02& 10\\ 
};

\draw (A-1-1.north west)--(A-1-1.south east);
\node[below left=2mm and 4mm of A-1-1.north east] {$i$};
\node[above right=2mm and 4mm of A-1-1.south west] {$j$};
\end{tikzpicture}
\end{center}
\noindent
The entries of MOLS are pairs of numbers $(k,\ell)$. The relevant quantum state is obtained by reading all entries:
\[
\ket{\text{AME(4,3)}} =\dfrac{1}{d} \sum_{i,j=0}^{d-1}\ket{i,j,k,\ell}.
\]
\end{example}

\noindent
The notion of MOLS was used for the construction of several AME states \cite{AME(4}.



\subsection{Existence  of Latin designs}

\begin{proposition}
\label{prop5}
If a mutually orthogonal Latin hypercube $k$-MOLH($d$) for $k>1$ exists, then indices $d,k$ satisfies
\[
k\leq d-1 .
\]
\end{proposition}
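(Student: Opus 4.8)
The plan is to reduce the statement to the classical upper bound on the size of a family of mutually orthogonal Latin squares (MOLS) of order $d$, namely that a set of pairwise orthogonal Latin squares of order $d$ has size at most $d-1$.

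First I would take a $k$-MOLH($d$), i.e.\ a bijection $L:[d]^k\to[d]^k$ with $L(i_1,\ldots,i_k)=(\phi^1,\ldots,\phi^k)$ satisfying the conditions of \cref{MOLH}, and \emph{freeze} the last $k-2$ input coordinates to the value $0$; this is where the hypothesis $k>1$ is used. The outcome is a collection of $k$ functions of two variables, $\psi^\ell(i_1,i_2):=\phi^\ell(i_1,i_2,0,\ldots,0)$ for $\ell=1,\ldots,k$. I claim these are $k$ MOLS of order $d$.

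Second I would check this claim directly against \cref{MOLH}. The Latin property of each $\psi^\ell$ is the case $s=1$ of the definition: freezing the input coordinates $2,3,\ldots,k$ (respectively $1,3,\ldots,k$) and projecting onto the single output coordinate $\ell$ gives a bijection $[d]\to[d]$, so every row and every column of $\psi^\ell$ is a permutation of $[d]$. The mutual orthogonality of $\psi^\ell$ and $\psi^{\ell'}$ for $\ell\neq\ell'$ is the case $s=2$: freezing the input coordinates $3,\ldots,k$ and projecting onto the output coordinates $\{\ell,\ell'\}$ gives a bijection $[d]^2\to[d]^2$, which is exactly the statement that the pair $(\psi^\ell,\psi^{\ell'})$ is orthogonal. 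Hence $\{\psi^1,\ldots,\psi^k\}$ is indeed a family of $k$ mutually orthogonal Latin squares of order $d$, and the classical bound yields $k\le d-1$.

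The only delicate point is the coordinate bookkeeping in the second step: one has to match the injection/surjection composition of \cref{MOLH} with the ``row/column permutation'' and ``orthogonal pair'' conditions, and check that the sets of frozen positions used ($\{2,\ldots,k\}$, $\{1,3,\ldots,k\}$ of size $k-1$, and $\{3,\ldots,k\}$ of size $k-2$) are admissible choices for $s=1$ and $s=2$ respectively. Once the right coordinates are frozen this is mechanical, so I expect no real obstacle. As an alternative, one could route through \cref{hcc}: the $k$-MOLH($d$) yields an $\text{OA}(d^k,2k,d,k)$ of index unity; fixing $k-2$ of its columns to a common value produces an $\text{OA}(d^2,k+2,d,2)$ of index unity, which is again equivalent to $k$ MOLS of order $d$, giving the same conclusion. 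This is precisely the relevant instance of Bush's bound for orthogonal arrays of index unity.
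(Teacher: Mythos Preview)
Your proposal is correct. The reduction to the classical bound on mutually orthogonal Latin squares works exactly as you describe: freezing the input positions $\{3,\ldots,k\}$ at $0$ and invoking the cases $s=1$ and $s=2$ of \cref{MOLH} yields $k$ MOLS of order $d$, whence $k\le d-1$.

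The paper takes a different, self-contained route. Rather than citing the MOLS bound, it re-derives its key step directly in the hypercube language: it considers the ``first line'' $i\mapsto L(i,0,\ldots,0)=(\phi_i^1,\ldots,\phi_i^k)$ together with the single extra point $(j_1,\ldots,j_k):=L(0,1,0,\ldots,0)$, writes each $j_\ell=\phi_{i_\ell}^\ell$, and uses the $s=1$ and $s=2$ bijection conditions of \cref{MOLH} to show that the indices $i_1,\ldots,i_k$ are pairwise distinct and all nonzero, forcing $k\le d-1$. This is precisely the standard proof of the MOLS bound (normalize the first row, inspect the cell $(1,0)$) unpacked in place. So at the core the two arguments coincide; the difference is packaging. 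Your version is shorter and makes explicit the link to a classical result (and to Bush's bound via \cref{hcc}, as you note), while the paper's version avoids an external citation and keeps the combinatorics visible. Either is perfectly adequate here.
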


\begin{proof}
The hyper-row $I:=(i,0,\ldots,0)$, $i\in d$ defines the following mapping:
\[
L_{|I} : i \longmapsto \Big( {\phi_{i}^1},\ldots, {\phi_{i}^k} \Big) \in [d]^k
\]
where $\phi_{i}^\ell := \phi_{I}^\ell $ for simplicity. Observe that on each position all symbols appear, i.e.
\begin{equation}
\label{eq101}
\left\lbrace \phi_{i}^\ell : i \in [d] \right\rbrace  =[d] .
\end{equation} 

Consider now the element 
\[
(j_1 , \ldots,j_k) := L(0,1,0,\ldots,0) \in [d]^k.
\]
From \cref{eq101}, clearly
\[
j_1 = \phi_{i_1}^1 ,\quad \ldots , \quad j_k = \phi_{i_k}^k
\]
for some indices $i_1,\ldots ,i_k$. 
Observe that
\begin{enumerate}
\item $i_1,\ldots ,i_k \neq 0$. Suppose the contrary, $i_s =0$. 
Then $j_s = \phi_{0}^s$, and hence 
\begin{align*}
L(0,0,0,\ldots,0) &= (\ldots ,j_s ,\ldots ), \\
L(0,1,0,\ldots,0) &= (\ldots ,j_s ,\ldots ) ,
\end{align*}
where dotted symbols on the right are not specified. This is in contradiction to \cref{MOLH} for $S$ given by $i_j =0$ for all $j \neq 1$, and $S' =\{s \}$;
\item indices $i_1,\ldots i_k$ are pairwise different. Suppose the contrary, $i_{s_1}=i_{s_2}$. Then
\begin{align*}
L(i_{s_1},0,0,\ldots,0) &= (\ldots ,j_{s_1},\ldots ,j_{s_2} ,\ldots ) ,\\
L(0,1,0,\ldots,0) &= (\ldots ,j_{s_1},\ldots ,j_{s_2} ,\ldots ).
\end{align*}
This is in contradiction to \cref{MOLH} for $S$ given by $i_j =0$ for all $j \neq 0,1$, and $S' =\{s_1,s_2 \}$.
\end{enumerate}
Since all indices $i_1,\ldots ,i_k \in [d]$ are pairwise different and non-zero, $k \leq d-1$.
\end{proof}

It is worth mentioning that the condition given in \cref{prop5} is only a necessary condition for the existence of MOLH. 
If it is satisfied, the precise construction of MOLH is known for all $d$ being prime powers. 
This construction might be extended further by composing two MOLH of a different size.
Nevertheless, the aforementioned condition is not a sufficient one. 
For instance, construction of MOLS ($2$-MOLH) of size $d=6$ refers to the famous problem of \emph{36 officers of Euler} \cite{Euler36}, which was proven to have no solution \cite{GastonTarry}. 

\subsection{Extension of Latin designs}

As we shall see, Latin designs are not only related to the construction of AME states with minimal support, but also to the local unitary relations between such. 
In particular, the problem of existence and extension of k-dimension MOLH is relevant to the description of LU-equivalences between AME(2k,d) states. 
Therefore, the short outline of the extension problem is presented below.

\begin{definition}
A MOLH($s$) of size $k$ and dimension $s$ 
\[
L :[s]^k \longrightarrow [s]^k 
\]
might be extent to MOLH($d$) if there exists MOLH($d$):
\[
L' :[d]^k \longrightarrow [d]^k ,
\]
which preserves the structure of $L$, i.e. $L_{\big\vert [s]^k}' \equiv L$. Moreover, we refer to $L$ as a sub-MOLH($s$) of MOLH($d$).
\end{definition}

\begin{example}
A MOLS(3) $L$ might be extent into MOLS(9) presented below. 
\begin{center}
\begin{tikzpicture}
\tikzset{%
square matrix/.style={
    matrix of nodes,
    column sep=-\pgflinewidth, 
    row sep=-\pgflinewidth,
    nodes in empty cells,
    nodes={draw,
      minimum size=#1,
      anchor=center,
      align=center,
      inner sep=0pt
    },
    column 1/.style={nodes={fill=cyan!20}},
    row 1/.style={nodes={fill=cyan!20}},
  },
  square matrix/.default=0.7cm
}

\matrix[square matrix] (A)
{
  & 0&1 & 2 & 3 & 4 & 5 &6&7&8\\ 
0& 00& 11&22 & 33& 44&55&66&77&88 \\ 
1 & 12& 20& 01& 45& 53& 34&78& 86& 67 \\ 
2 & 21 & 02 & 10& 54 & 35 & 43 &87 & 68 & 76 \\ 
3&  36& 47&58&60&71&82& 03& 14&25  \\ 
4 & 48& 56& 37&72& 80& 61&15& 23& 04\\ 
5 & 57 & 38 & 46 &81 & 62 & 70& 24 & 05 & 13 \\ 
6&63&74&85& 06& 17&28 & 30& 41&52 \\ 
7 &75& 83& 64& 18& 26& 07& 42& 50& 31 \\ 
8 &84 & 65 & 73 & 27 & 08 & 16& 51 & 32 & 40 \\ 
};

\draw (A-1-1.north west)--(A-1-1.south east);
\node[below left=2mm and 3.5mm of A-1-1.north east] {$i$};
\node[above right=2mm and 3.5mm of A-1-1.south west] {$j$};

\draw[line width=1.5pt](A-2-1.north west)--(A-2-10.north east);
\draw[line width=1.5pt](A-5-1.north west)--(A-5-10.north east);
\draw[line width=1.5pt](A-8-1.north west)--(A-8-10.north east);
\draw[line width=1.5pt](A-10-1.south west)--(A-10-10.south east);

\draw[line width=1.5pt](A-1-2.north west)--(A-10-2.south west);
\draw[line width=1.5pt](A-1-5.north west)--(A-10-5.south west);
\draw[line width=1.5pt](A-1-8.north west)--(A-10-8.south west);
\draw[line width=1.5pt](A-1-10.north east)--(A-10-10.south east);
\end{tikzpicture}
\end{center}
Indeed, one can see that entries in the square consisting of three first rows and columns are taken from $0,1,2$ only. 
In fact, this extension is relevant to a tensor product of two identical Latin squares $L$.
\end{example}


\begin{remark}
\label{ob3}
Consider a MOLH($d$) $L$. 
If there exist a sub-hypercube $S =S_1 \times \cdots \times S_k \subseteq [k]^d$ which is mapped by $L$ on another hypercube $S' =S_1' \times \cdots \times S_k' \subseteq [k]^d$, then up to permutation of labels, $L':= L_{\big\vert S} $ is sub-MOLH($s$) of MOLH($d$). 
Moreover, $L$ cannot map hyper-rectangles onto hyper-rectangles except  hypercubes into hypercubes. 
Hence, the notion of sub-hyper-rectangle of $L$ does not exists.
\end{remark}

The problem of extension of Latin designs might be traced by to the Ryser’s Theorem \cite{Ryser}, and is a plentiful scientific problem considered in several papers \cite{Extension1,Extension2}. 
It is not our intention to provide a full picture of interactions between AME states and different combinatorial designs. 
We shall show, however, what are the dimension bounds for extending MOLH($s$) into MOLH($d$); or equivalently finding sub-MOLH($s$) of MOLH($d$).

\begin{proposition}
\label{prop4}
Inequality
\[
s \leq \dfrac{1}{1 +\sqrt[k-1]{k}} \enskip d
\]
is a necessary condition for extension MOLH($s$) into MOLH($d$) for any $k>1$.
\end{proposition}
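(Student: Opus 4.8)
The plan is to extract from an extension of $L:[s]^k\to[s]^k$ to $L':[d]^k\to[d]^k$ a counting inequality by looking at how the sub-hypercube $[s]^k$ sits inside $[d]^k$, and more importantly at the various lower-dimensional ``slices'' forced by \cref{MOLH}. The key idea is that \cref{prop5} already gives $k\le s-1$ is \emph{not} what we want; instead we should think of the sub-MOLH as occupying a fraction $s/d$ of each axis, and then ask how many disjoint translated copies of such a sub-hypercube can coexist inside $L'$ while respecting the Latin conditions. I would first fix notation: write $n:=d$, let $A=[s]^k\subseteq[d]^k$ be the domain sub-hypercube and $B=L'(A)\subseteq[d]^k$ its image, which by \cref{ob3} is again a product $S_1'\times\cdots\times S_k'$ of $s$-element sets. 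By relabelling levels on each of the $k$ ``output'' coordinates we may assume $B=[s]^k$ as well, so $L'$ restricts to a bijection of $[s]^k$ onto itself.

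Next I would exploit the orthogonality relations transversally. Pick a hyper-row of $A$, say vary coordinate $1$ over $[d]$ and fix coordinates $2,\dots,k$ at a value inside $[s]^{k-1}$; the image under $L'$ is, by \cref{MOLH} applied with $S=\{2,\dots,k\}$ and any singleton $S'$, a line that meets each output-$\ell$ fibre exactly once, in particular hits $[s]$ in the $\ell$-th output coordinate exactly $s$ times but is free to leave $A$'s image block. The combinatorial heart is to count, for the output coordinate $1$ say, how the $d$ values split: among the $d$ points of this line, exactly $s$ land with first output coordinate in $\{0,\dots,s-1\}$. Doing this simultaneously for all $s^{k-1}$ such hyper-rows inside $A$, and comparing with the dual count obtained by fixing the output block and pulling back, should produce an equation relating $s^{k}$, the number of points of $A$ mapping into a thin slab, and $d$. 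The punchline is an inequality of the shape $s^{k} + (\text{something like } s\cdot s^{k-1}(d-s)/\!\cdots) \le d\cdot s^{k-1}$, which after simplification and taking a $(k-1)$-th root collapses to $s(1+\sqrt[k-1]{k})\le d$. Concretely I expect the clean route is: the complement $[d]^k\setminus A$ must, under $L'$, cover all of $B=[s]^k$ that is not already covered by $L'(A)$, i.e.\ \emph{zero} extra coverage is needed — so instead the constraint comes from the $k$ families of parallel ``axis-slabs'' $\{x_\ell<s\}$ each of which $L'$ must map bijectively-in-slices, forcing that the $k$ preimage slabs together with $A$ itself cannot overlap too much; a union bound / inclusion–exclusion on these $k+1$ sets of sizes $s\cdot d^{k-1}$ and $s^k$ inside $[d]^k$ gives $s^k\,(\text{overlap factor}) \le$ volume, and the overlap factor is where the $\sqrt[k-1]{k}$ enters.

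The main obstacle, and the step I would spend the most care on, is making the ``overlap factor'' argument rigorous: one has $k$ output-slabs $T_\ell:=(L')^{-1}(\{y:\,y_\ell<s\})$, each of size $s\,d^{k-1}$, and the sub-hypercube $A$ of size $s^k$; the Latin conditions say that inside $A$ the map is already a self-bijection, so $A\subseteq\bigcap_\ell T_\ell$, and conversely one must show $\bigcap_\ell T_\ell$ is not much bigger than $A$ — roughly that $|\bigcap_\ell T_\ell|\le s^k + (\text{controlled error})$ using that any two of the $T_\ell$, restricted to a common line, intersect in at most $s$ of the $d$ points by the index-unity property. Turning these pairwise line-wise bounds into a global volume bound on the $k$-fold intersection is the genuinely delicate combinatorics; I would attack it by induction on $k$, peeling off one coordinate at a time and applying the $k=2$ (MOLS) case — where the statement reads $s\le d/(1+\sqrt2)$ — as the base, since there the two slabs are a ``row slab'' and a ``column slab'' of a single Latin-square-type array and the intersection count is classical. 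Once the base case and the inductive peeling are in hand, extracting the stated bound $s\le d/(1+\sqrt[k-1]{k})$ is a routine algebraic manipulation of the resulting polynomial inequality.
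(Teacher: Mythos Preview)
Your plan does not reach the result, and the central step is mis-aimed. Once you have (correctly) normalised so that $L'$ restricts to a bijection of $A=[s]^k$ onto itself, your preimage slabs satisfy
\[
\bigcap_{\ell=1}^k T_\ell=(L')^{-1}\big([s]^k\big)=[s]^k=A
\]
\emph{exactly}, since $L'$ is a bijection and $L'([s]^k)=[s]^k$. So there is no ``controlled error'' to bound; the inequality $|\bigcap_\ell T_\ell|\le s^k$ is an equality and carries no information. Your inclusion--exclusion on the $T_\ell$ likewise collapses to tautologies, because for every $J\subseteq[k]$ one has $|\bigcap_{\ell\in J}T_\ell|=s^{|J|}d^{k-|J|}$ just from $L'$ being a bijection. (Incidentally, your stated base case is also off: for $k=2$ one has $\sqrt[k-1]{k}=2$, so the bound reads $s\le d/3$, not $s\le d/(1+\sqrt{2})$.)

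The paper's argument looks at the \emph{image} side of a completely different collection of sets. For each $i$ let
\[
S_i:=\underbrace{[s]\times\cdots\times[s]}_{i-1}\times\big([d]\setminus[s]\big)\times\underbrace{[s]\times\cdots\times[s]}_{k-i},
\]
the points with exactly the $i$-th input coordinate outside $[s]$. These $S_i$ are pairwise disjoint, each of size $s^{k-1}(d-s)$. The key one-line observation is that $L'(S_i)\subseteq\big([d]\setminus[s]\big)^k$ for every $i$: varying only coordinate $i$ along $[d]$ with the other coordinates fixed in $[s]$, each output coordinate is a bijection $[d]\to[d]$ which already exhausts $[s]$ on the $s$ inputs lying in $A$; hence any input with $i$-th coordinate in $[d]\setminus[s]$ lands with every output coordinate in $[d]\setminus[s]$. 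Since $L'$ is injective,
\[
k\,s^{k-1}(d-s)=\Big|\bigcup_i S_i\Big|\le\big|(\,[d]\setminus[s]\,)^k\big|=(d-s)^k,
\]
i.e.\ $k\,s^{k-1}\le(d-s)^{k-1}$, and taking $(k-1)$-th roots gives $s\big(1+\sqrt[k-1]{k}\big)\le d$. The whole proof is this single counting step; no induction, no inclusion--exclusion, and no analysis of pairwise line intersections is needed.
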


\begin{proof}
Suppose that $L'$ extent $L$.  Consider pairwise disjoint sets:
\[
S_i := \underbrace{S \times S }_{i-1} \times S^c \times \underbrace{S \times S }_{k-i} 
\]
where $S:=[s]$. 
Denote their sum by $S'=\cup_{i=1}^k S_i$. 
Observe that for any multi-index $I \in S'$:
\[
L(I) \in S^c \times\cdots\times S^c .
\]
Indeed, it follows from the fact that $L(S) =S$, and hence on any position no indices from $S$ might appear in $L(S')$. 
Since $L$ is the bijection:
\[
\vert S'\vert \leq\vert  S^c \times\cdots\times S^c \vert ,
\]
and hence
\[
k s^{k-1}  \leq \big( d-s \big)^{k-1},
\]
which is equivalent to the statement of \cref{prop4}.
\end{proof}

As we shall see, the existence of non-trivial sub-MOLH(s) in MOLH(d) is directly related to the problem of describing the automorphisms of AME(2k,d) states. More precisely, assumptions that either non-trivial sub-MOLH(s) do not exists or cannot be extended to MOLH(d) allows to provide a comprehensive description of LU-equivalences of AME(2k,d) states. 
Therefore, the necessarily conditions of existence and extension MOLH are limiting the statements of \cref{prop1=} and \cref{verification2}. This limitation is notified in \cref{small}, which follows directly from \cref{prop5} and \cref{prop4}.



\section{Conclusions}
\label{Conclusions}

In this paper, we develop techniques of SLOCC-verification between $k$-uniform and AME states. 
In particular, we show that two $k$-uniform states are SLOCC-equivalent iff they are locally monomial (LM)-equivalent. 
We further specify the matrices which might appear in such equivalences. 
These results significantly restrict the class of possible local transformations to a finite set, which makes SLOCC-verification feasible. 

For AME(2k,d) states, the aforementioned statement is not true anymore. 
Intriguingly, SLOCC-equivalences might be provided by Fourier transforms, and in general, by Butson-type matrices. 
This restriction is valid, however, only for small local dimensions $d$ and number of parties $N$ (in particular for arbitrary $N$ and $d<9$). 
The exact bound on $d$ and $N$ is related to the necessary condition for existence and extension of combinatorial designs called mutually orthogonal hypercubes. 
Despite the exhaustive analysis performed, the general structure of SLOCC-equivalences between AME(2k,d) states is still puzzling and remains unknown. 
We present evidence that exceeding this class of equivalences is possible only in composed systems. 
General results concerning SLOCC-equivalences of AME(2k,d) states are also presented. 

We illustrate the usefulness of the provided criteria on various examples. 
Firstly, we show that the existence of AME states with minimal support of 6 or more particles yield the existence of infinitely many such non-SLOCC-equivalent states. 
The exact number of SLOCC-classes containing AME states with minimal support is given. 
Secondly, we show that some AME states cannot be locally transformed into existing AME states of minimal support. 
This shows that the notion of support is relevant even for AME states. 

\section{Further discussion and open problems}
\label{Further discussion and open problems}


Ultimate LU- and SLOCC-classification of $k$-uniform states, even of minimal support, is in fact a complex project involving many open mathematical problems, such as:
\begin{enumerate}
\item existence and extension of mutually orthogonal Latin hypercubes, see \cref{Combinatorial designs},
\item classification of Hadamard matrices of Butson type B(d,d),
\item classification/uniqueness of OAs of index unity (without permutation),
\end{enumerate}
among others. 
Therefore, with a great conviction, we claim it to be currently out of reach. 
Below, we discuss three open problems regarding LU- and SLOCC-classification of $k$-uniform states with minimal support in a detailed way. 
We show their connections with some open mathematical problems.

Firstly, consider two $k$-uniform states of minimal support $\ket{\psi}$ and $\ket{\psi '}$ with all phases equal to $1$ for simplicity. 
With this constrain on phases, \cref{prop1} shows that $\ket{\psi}$ and $\ket{\psi '}$ are LU-equivalent if and only if there exist local permutation matrices relating $\ket{\psi}$ and $\ket{\psi '}$:
\[
\ket{\psi '} = \sigma_1 \otimes\cdots\otimes\sigma_n \ket{\psi} .
\]
States $\ket{\psi}$ and $\ket{\psi '}$ are in one-to-one correspondence with two OAs of index unity. 
The existence of local permutation matrices is equivalent to an isomorphism between two OAs of index unity. 
Hence LU-classification of such a states is equivalent to the classification of OAs of index unity. 
Such a classification is, however, an open mathematical problem. 
In many situations, when number of parties $N$, uniformity $k$, local dimension $d$ is small, it is known that all OAs of index unity are isomorphic \cite{BULUTOGLU2008654,doi:10.1080/00401706.1992.10484952,stufken2007}. 

\begin{conjecture}
\label{productCon}
All OAs of index unity are isomorphic by permutations of symbols on each level. 
Equivalently, all $k$-uniform states with minimal support and all terms phases equal are LU-equivalent.
\end{conjecture}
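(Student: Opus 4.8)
The plan is to recast the statement combinatorially and to attempt an induction that removes one column at a time, with $k$ and $d$ held fixed. I would begin from the dictionary of \cref{sec1}: an index-unity $\oa{d^k,N,d,k}$ is the shadow of a minimal-support $k$-uniform state with all phases equal to $1$, and — by \cref{prop1} — LU-equivalence of two such states with trivial phases forces, when $2k<N$, a local \emph{permutation} $\sigma_1\otimes\cdots\otimes\sigma_n$, i.e.\ a relabelling of symbols column by column (for $N=2k$ one must additionally allow the Butson-type equivalences of \cref{prop1=}, so there the orthogonal-array statement is strictly the stronger one). So the conjecture is essentially the purely combinatorial assertion that $\oa{d^k,N,d,k}$ is unique up to per-column symbol permutation; for $k=2$ this is uniqueness of a set of $N-2$ mutually orthogonal Latin squares of order $d$, and for general $k$ it is uniqueness of the MOLH of \cref{Combinatorial designs}, equivalently of a length-$N$ dimension-$k$ MDS code over an alphabet of size $d$.

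For the induction I would use two elementary reductions. Deleting a column of an index-unity $\oa{d^k,N,d,k}$ yields an index-unity $\oa{d^k,N-1,d,k}$, and restricting to the $d^{k-1}$ rows carrying a prescribed symbol in a prescribed column and then deleting that column yields an index-unity $\oa{d^{k-1},N-1,d,k-1}$. Running the induction on $N$ with $k,d$ fixed, the base cases would be the computationally verified uniqueness results cited in the paper (\cite{BULUTOGLU2008654,doi:10.1080/00401706.1992.10484952,stufken2007}) together with the design facts of \cref{Combinatorial designs}. Given two such arrays $A,A'$ on $N$ columns, their restrictions to the first $N-1$ columns are isomorphic by the inductive hypothesis, so after relabelling $A$ and $A'$ share those columns; the last column of each is then a function from the common row set to $[d]$ that must be an orthogonal mate to every $(k-1)$-subset of the first $N-1$ columns, and the goal is to show that two such completions differ only by a symbol permutation of the last coordinate.

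The main obstacle is precisely this completion step, and it is not a routine calculation. Already for $k=2$ and $N=d+1$ it asserts uniqueness of a complete set of MOLS of order $d$ — equivalently of an affine plane of order $d$ — which is \emph{false} at $d=9$, where non-isomorphic planes, hence non-isomorphic such arrays, exist; and for general $k$ it subsumes the classification of MDS codes, which is open (Segre-type theorems cover only the linear case and bounded lengths, and already Latin squares of order $6$ give many non-isomorphic $\oa{36,3,6,2}$). I therefore would not expect a proof of the conjecture as literally stated. The realistic targets are (i) a proof throughout the parameter ranges in which the cited classifications apply; (ii) a proof under a linearity hypothesis, where the problem collapses to generalized Reed--Solomon codes and known MDS extension theorems; and (iii) identification of the smallest genuine counterexample. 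The obstruction is intrinsic: the conjecture is a uniqueness statement for combinatorial designs that is known to admit exceptions and, in full generality, is itself an open problem.
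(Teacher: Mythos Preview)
This statement is a \emph{conjecture} in the paper, listed among the open problems in \cref{Further discussion and open problems}; the paper offers no proof, so there is nothing to compare your attempt against.

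Your analysis is sound and your conclusion is the correct one: the conjecture, read literally, is false. The induction you sketch is the natural approach, and you locate the failure precisely --- the completion step is, in the extremal case $k=2$, $N=d+1$, exactly the uniqueness of an affine plane of order $d$, which fails at $d=9$; the resulting non-isomorphic $\oa{81,10,9,2}$ lie in the range $N>2k$ relevant to genuine $2$-uniform states, so this refutes both halves of the conjecture at once. Your more elementary remark about non-isotopic Latin squares already refutes the orthogonal-array half at $\oa{16,3,4,2}$ and $\oa{36,3,6,2}$, though those parameters ($N=3$, $k=2$) fall outside the $k$-uniform regime $N\ge 2k$, so the paper's ``equivalently'' is itself imprecise. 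The realistic partial targets you list --- restricting to linear/MDS arrays, or to the parameter ranges covered by the classifications the paper cites --- are exactly what one should pursue; a complete proof is not available because the statement is not true.
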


Secondly, in \cref{verification2,prop1=}, the form of arbitrary LU-operator between two AME(2k,d) states with minimal support is provided for small numbers $k$ and $d$. 
It is given by a Butson-type matrix B(d,d) or an identity matrix, multiplied by local monomial matrices form both sides. 
We have shown that for a composed system there are local operators beyond the provided formula. 
Indeed, in this case, the tensor product of Butson-type matrix and the identity matrix may provide LU-equivalence. 
We conjecture that it is a general form for LU-equivalences for all AME(2k,d) states, and it is tightly related to the possible decomposition of a system. 
This supposition is stated in \cref{conProd}. 

Thirdly, \cref{prop1} states that any LU-operator between two $k$-uniform states of minimal support $\ket{\psi}$ and $\ket{\psi '}$ is a local product of phase (diagonal) and permutation matrices (for $2k<N$). 
By considering states $\ket{\psi}$ and $\ket{\psi '}$ with terms of various phases, we showed that not all of them are LU-equivalent for $k>2$. 
Nevertheless, the precise description of SLOCC classes containing such states is not given. 
Therefore, the role of permutation matrices in LU-classification is not yet absolutely clear. 

Finally, in \cref{32} the basic difference between $k$-uniform states of minimal support where $2k<N$ and $2k=N$ is discussed. LU-equivalence between two $k$-uniform states with $2k=N$ decomposes into multiplication of Butson-type matrix and local monomial (LM) matrices from both sides. 
Obviously, Butson-type matrices significantly increase the class of LU-equivalences between two states. Nevertheless, it is not known yet whether such LU-equivalences are beyond local monomial equivalences. 
In fact, in all provided examples involving Butson-type matrices in LU-equivalence, states were always LM-equivalent. 
Therefore we conjecture that \cref{coro1} holds true in the case $2k =N$ (even though \cref{prop1} does not hold anymore). 

\begin{conjecture}
All AME(2k,d) states with minimal support are LU-equivalent if and only if they are LM-equivalent
\end{conjecture}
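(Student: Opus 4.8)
\emph{Proof proposal.} The backward implication is immediate: a local monomial operator is in particular local unitary, so LM-equivalence always entails LU-equivalence. The content is the forward implication, and the plan is to bootstrap \cref{prop1=}. Restrict first to the regime in which $k$ and $d$ are small enough for \cref{prop1=} to apply (see \cref{small}); a fully general $(k,d)$ statement will have to wait for \cref{conProd}. In this regime an LU-equivalence between two minimal-support AME$(2k,d)$ states $\ket{\psi}$ and $\ket{\psi'}$ is either already LM --- and then there is nothing to prove --- or it has the Butson form $\omega\bigl(\overrightarrow{D}_1^{-1}B_1\overleftarrow{D}_1^{-1}\otimes\cdots\otimes\overrightarrow{D}_n^{-1}B_n\overleftarrow{D}_n^{-1}\bigr)$ of \cref{verification2}, with each $B_i\in\text{BH}(d,d)$. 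So everything reduces to the assertion: whenever a Butson-type operator carries $\ket{\psi}$ to $\ket{\psi'}$, some (possibly entirely different) local monomial operator also carries $\ket{\psi}$ to $\ket{\psi'}$.

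To attack that assertion I would first remove the diagonal dressing. By \cref{verification2} the factors $\overrightarrow{D}_i,\overleftarrow{D}_i$ are pinned down explicitly by the phase data of $\ket{\psi'}$ and $\ket{\psi}$, so, absorbing them and invoking the third item of \cref{cor2}, one reduces to the trivial-phase case, in which the surviving $B_i$ have entries that are genuine $d$-th roots of unity and $\ket{\psi},\ket{\psi'}$ are bare index-unity orthogonal arrays. The equation $\bigotimes_i B_i\ket{\psi}=\ket{\psi'}$ then collapses to a rigid system of multiplicative relations among the rows of those arrays, and the goal is to show that this system forces (i) a factorization $d=d_1\cdots d_m$ along which the AME state splits as a tensor product of smaller minimal-support AME$(2k,d_j)$ states --- precisely \cref{conProd}; (ii) a compatible factorization of each $B_i$ whose blocks are, up to monomial matrices, Fourier matrices; and (iii) the fact that a Fourier matrix acting on such an AME$(2k,d_j)$ factor coincides with a permutation of the computational basis, exactly as for $F_3$ on AME$(4,3)$ and for $F_2\otimes F_2$ on AME$(4,4)$ in \cref{ex9}. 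Composing the resulting block permutations then yields the required LM-equivalence, and in the presence of \cref{conProd} the general case reduces to blocks that already lie inside the small regime.

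The bottleneck is step (iii) together with its prerequisites (i)--(ii): one must control \emph{every} matrix of $\text{BH}(d,d)$ that can occur in an LU-equivalence of minimal-support AME states and show its action is realized by a symbol permutation of the associated orthogonal array. This is purely combinatorial and interlocks with two problems already highlighted here --- that for prime $d$ the Fourier matrix is conjecturally the only element of $\text{BH}(d,d)$, and the classification of index-unity orthogonal arrays up to symbol permutation, i.e.\ \cref{productCon}. Absent such a classification, the pragmatic route is to isolate an invariant of the LM-class of a minimal-support AME state --- built from the products $W^{i,S}_{a,I}$ and their higher analogues --- that is provably unchanged by Butson-type equivalences and, conversely, separates LM-classes. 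Proving that \emph{converse} completeness is where I expect the genuine difficulty to sit, since it amounts to showing that the $W$-data form a complete invariant of index-unity orthogonal arrays up to symbol permutation --- a statement at least as hard as \cref{productCon} itself.
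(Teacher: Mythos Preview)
The paper does not prove this statement; it is presented explicitly as an open conjecture, and the surrounding discussion says that ``any attempt of proving the statement above makes sense only if \cref{productCon} would be true,'' and that even then a classification of Butson-type matrices $\text{BH}(d,d)$ --- itself an open problem --- would be needed. So there is no proof to compare your proposal against; what can be compared is your diagnosis of the obstruction, and on that score you are largely in agreement with the paper: you correctly isolate the Butson branch of \cref{prop1=} as the only nontrivial case, and you correctly trace the bottleneck to \cref{productCon} together with the structure of $\text{BH}(d,d)$. What you have written is an outline of a strategy with acknowledged gaps, not a proof, and you are candid about this.

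Two specific cautions about the outline. First, your ``absorb the diagonal dressing and reduce to trivial phases'' step is more fragile than you indicate: for $k>2$ the paper shows (\cref{AMEminSUP6}) that phases genuinely separate LM-classes, so one cannot normalize phases by an LM move without loss; and the third item of \cref{cor2} \emph{assumes} trivial phases rather than delivering them, so invoking it to effect the reduction is circular. Second, your plan for the general $(k,d)$ regime rests on \cref{conProd}, but that is itself a conjecture here, so at best you are proposing a conditional reduction of one conjecture to two others --- which is exactly where the paper leaves matters.
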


Notice that any attempt of proving the statement above makes sense only if \cref{productCon} would be true. In such a case, the classification Butson type B(d,d), which is an open mathematical problem, and refine analysis of such is required. 

\section{Acknowledgements}
Authors are thankful to Karol \.{Z}yczkowski, Felix Huber, Gon\c{c}alo Quinta, Wojciech Bruzda and all other collaborators and colleagues for valuable and fruitful discussions, which greatly improved this text. 
AB acknowledges support from the National Science Center under grant number DEC-2015/18/A ST2/00274. 
ZR acknowledges support from the Spanish MINECO (Severo Ochoa SEV2015-0522), Fundacio Cellex and Mir-Puig, Generalitat de Catalunya (SGR 1381 and CERCA Programme), and ERC AdG CERQUTE.

\appendix

\section{The proof of \cref{prop1}}
\label{app1}

We shall prove \cref{prop1} in a slightly enhanced version. 
Notice that LU- and SLOCC-equivalences coincides on the class of AME states, which is an immediate conclusion from \cref{LU=SLOCC}. 
Therefore, we restrict our argument to LU-equivalences only. 
We would like to emphasize the statement below as primary and more valuable for the LU-verification procedure then the claim of \cref{prop1} itself. 
Indeed, this extended version is used later in \cref{AppD} for the demonstration of non-equivalence of two families of AME(5,d) states.

\begin{proposition}
\label{prop2}
Consider two $k$-uniform states $\ket{\psi}$ and $\ket{\phi}$ with minimal support. 
For any subsystem $S$ consisting of $s>k$ parties, the reduced density matrices $\rho_{ S} \left( \psi \right) $ and $\rho_{ S} \left( \phi \right)$ 
are LU-equivalent if and only if they are LM-equivalent.
\end{proposition}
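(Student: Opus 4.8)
The backward implication is immediate, since every monomial matrix is unitary: an LM-equivalence of $\rho_S(\psi)$ and $\rho_S(\phi)$ is in particular an LU-equivalence. For the forward implication the plan is first to read off the structure of the reduced matrices from the orthogonal-array description of $\ket{\psi}$ and $\ket{\phi}$, and then to show that any local unitary conjugating one to the other can be traded for a local monomial.

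\emph{Structure of the reduced matrix.} Write $\ket{\psi}=\sum_{I\in\mathcal{I}}\omega_I\ket{I}$, where — by the correspondence between $k$-uniform states of minimal support and orthogonal arrays of index unity — $\mathcal{I}\subset[d]^N$ is the row set of an $\oa{d^k,N,d,k}$ of index unity and all $|\omega_I|$ are equal; similarly $\ket{\phi}$ has data $\mathcal{I}',\omega'$. Fix $S$ with $|S|=s>k$ and split $I=(I_S,I_{S^c})$. Since the array has strength $k<s$, the map $I\mapsto I_S$ is injective on $\mathcal{I}$; let $\mathcal{J}_S$ be its image, itself the row set of an index-unity $\oa{d^k,s,d,k}$. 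Group the rows of $\mathcal{I}$ into classes $C$ by their value on the complementary columns $S^c$; index unity forces all nonempty classes to have the same size. Putting $\ket{v_C}:=\sum_{I\in C}\omega_I\ket{I_S}$, the $\ket{v_C}$ are mutually orthogonal — their computational supports are the disjoint blocks of an induced partition of $\mathcal{J}_S$ — and of equal norm, so
\[
\rho_S(\psi)=\sum_C\ket{v_C}\bra{v_C}=\lambda\,\Pi_W,\qquad W:=\bigoplus_C\mathbb{C}\ket{v_C},
\]
with $\lambda$ and $\dim W$ depending only on $N,d,k,s$. The same holds for $\rho_S(\phi)$ with some $W'$. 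Hence $\rho_S(\psi)$ and $\rho_S(\phi)$ are LU-equivalent iff $U_S(W)=W'$ for some local unitary $U_S=\bigotimes_{i\in S}U_i$, and LM-equivalent iff the same holds with every $U_i$ monomial.

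\emph{From LU to LM.} The decomposition $W=\bigoplus_C\mathbb{C}\ket{v_C}$ is \emph{coordinate-compatible}: it is subordinate to a partition of the label set $\mathcal{J}_S$, and since every $\ket{v_C}$ has all of its coordinates nonzero, $\mathcal{J}_S$ together with this partition is the canonical finest datum attached to $W$ in the computational basis. I would then argue that any local unitary realizing $U_S(W)=W'$ must carry this datum to the corresponding datum of $W'$ level by level, pinning down each $U_i$ up to a monomial matrix: using that every one-party reduction of $\rho_S$ is maximally mixed — so no $U_i$ can redistribute weight among the $d$ levels — and that $\mathcal{J}_S$, a strength-$k$ array on $s>k$ columns, admits no proper product-type substructure, one forces each $U_i$ to permute the computational basis up to phases; once every $U_i$ is monomial, the remaining permutation and phase freedom is matched by an explicit local monomial $M_S$ reconstructed from $\mathcal{J}_S,\mathcal{J}'_S$ and the phases $\omega,\omega'$, exactly in the manner of the diagonal matrices displayed in \cref{verification}.

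\emph{Main obstacle.} The crux is the rigidity step just invoked: that a local unitary cannot send the skeleton subspace of one index-unity array to that of another except by permuting levels and inserting phases. This is exactly where $s>k$ is used — it is what makes $\mathcal{J}_S$ a genuine index-unity array rather than the full grid $[d]^k$, the case $s=k$ corresponding to $\rho_S\propto\Id$, where no rigidity at all is available. It is also the analogue of this very rigidity that breaks down for full AME states with $2k=N$, where the Fourier transform, and more generally Butson-type matrices (see \cref{32}), supply genuinely non-monomial local equivalences. Establishing the rigidity in the generality needed — ruling out every non-monomial local unitary between the two array subspaces — is the technical heart; the remaining bookkeeping (isolating the phase data, rebuilding the monomial) parallels the proof of \cref{verification} and is routine.
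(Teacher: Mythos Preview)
Your structural setup is correct, but the rigidity step you sketch does not go through, and you miss the simplification that makes it tractable.

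The claim ``every one-party reduction of $\rho_S$ is maximally mixed, so no $U_i$ can redistribute weight among the $d$ levels'' is vacuous: a maximally mixed state is invariant under \emph{every} unitary, so the relation $U_i\rho_{\{i\}}U_i^\dagger=\rho_{\{i\}}$ holds automatically and constrains $U_i$ not at all. The appeal to $\mathcal{J}_S$ admitting ``no proper product-type substructure'' is likewise not an argument, and your final paragraph concedes as much. Your intuition that $s>k$ is what supplies rigidity, and that this is exactly what fails at $2k=N$, is right; what is missing is the mechanism.

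The paper's route is different and concrete. First reduce to $|S|=k+1$: if a local unitary $U_{S'}$ carries $\rho_{S'}(\psi)$ to $\rho_{S'}(\phi)$ for some larger $S'$, then for every $(k+1)$-subset $S\subset S'$ the restriction $U_S$ does the same for $\rho_S$, so it suffices to show each such $U_S$ is monomial. Under the standing hypothesis $2k<N$ one has $|S^c|=N-k-1\ge k$, so by index unity the map $I\mapsto I_{S^c}$ is injective, every class $C$ is a singleton, and $\rho_S(\psi)$ is the \emph{diagonal} projector onto the $d^k$ computational basis vectors indexed by $\mathcal{J}_S$; likewise for $\phi$. The rigidity now comes from the minimum-distance property of an index-unity array: any two rows coincide on at most $k-1$ positions, so any two $y^i,y^j\in\mathcal{J}_S'$ differ on at least two of the $k+1$ coordinates. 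Expand $U_S\ket{x}=\sum_i\beta_i\ket{y^i}$ for $x\in\mathcal{J}_S$, where $\beta_i=\prod_{m=1}^{k+1} u^m_{x_m,y^i_m}$. If $\beta_i,\beta_j\neq0$ with, say, $y^i_1\neq y^j_1$, then every factor $u^m_{x_m,y^i_m}$ and $u^m_{x_m,y^j_m}$ is nonzero, hence the hybrid string $z=(y^i_1,y^j_2,\dots,y^j_{k+1})$ also has $\langle z|U_S|x\rangle=\prod_m u^m_{x_m,z_m}\neq0$. Then $\bra{z}\rho_S(\phi)\ket{z}\ge|\langle z|U_S|x\rangle|^2>0$, forcing $z\in\mathcal{J}_S'$; but $z$ agrees with $y^j$ on $k$ coordinates and $z\neq y^j$, contradicting the distance bound. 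So at most one $\beta_i$ is nonzero for each $x$, which forces every $U_m$ to be monomial. No separate ``matching'' step is needed afterwards: once each $U_m$ is monomial, $U_S$ \emph{is} the desired LM-equivalence.
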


Observe that \cref{prop1} is an immediate consequence of the statement above for subsystem $S$ of all parties, i.e. $|S|=N$. 
Without loss of generality it is enough to prove the statement of \cref{prop2} only for the smallest possible subsystems $S$, i.e. consisting of $k+1$ parties. 
Indeed, assume that the reduced states $\rho_{ S'} \left( \psi \right) $ and $\rho_{ S'} \left( \phi \right)$ are equivalent by a local unitary matrix $U $. 
Consider any subsystem $S \subseteq S'$ of $k+1$ parties. 
The local operator $U$ splits:
\begin{equation*}
U = U_S U_{S'\setminus S}
\end{equation*}
\noindent
where $U_S$ is the local operation on $S$ subsystem; and $U_{S'\setminus S}$ on $S' \setminus S$ subsystem equivalently. 
Since the (partial) trace is invariant under cycling permutations, we have:
\[
\rho_{S } \left( \psi \right) =  U_{ S} \Big( \rho_{S } \left( \phi  \right)  \Big)
\]
and by \cref{prop2}, $U_{ S}$ is a local monomial operation. 
Since the subsystem $S$ was chosen arbitrary, $U$ is a local monomial operator. 

Notice that the size $s>k$ of the subsystem $S$ in \cref{prop2} is the largest possible. 
Indeed, after taking the partial trace over larger subsystem, both states $\ket{\psi}$ and $\ket{\phi}$ become proportional to the identity, and hence any local unitary operation provides their equivalence. 

We introduce the following notation.  
Consider two LU-equivalent $k$-uniform states: $\ket{\psi}$, $\ket{\phi}$ of minimal support form. We make use of the decomposition into supports elements:
\[
\ket{\psi} =\sum_{i=1}^{d^k} \alpha_i \ket{\psi_i}
\]
where $\ket{\psi_i}$ are of the unity support. Moreover, we denote elements of $\ket{\psi_i}$ as follows:
\[
\ket{\psi_i} =\ket{x^i_1 \cdots x^i_N} 
\]
where $x^i_j =0,\ldots ,d-1$. 
We use the similar notation for the state $\ket{\phi}$:
\[
\ket{\phi} =\sum_{i=1}^{d^k} \alpha_i \ket{\phi_i}, 
\quad\quad 
\ket{\phi_i} =\ket{y^i_1 \cdots y^i_N} .
\]

\begin{lemma}
\label{lem}
For a partial trace over any subsystem $S$ of $|S| \geq k$ parties 
\[
\tr_S \ket{\psi}\bra{\psi} = \sum_{i=0}^{d^k} 
\ket{\widetilde{\psi_i}}\bra{\widetilde{\psi_i}}
\]
where $\ket{\widetilde{\psi_i}} =\tr_S \ket{\psi_i}$. Moreover, for any $i \neq j$, $\ket{\psi_i}$ and $\ket{\psi_j}$ coincides on at most $k-1$ positions.
\end{lemma}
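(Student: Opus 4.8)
The plan is to prove the two claims of \cref{lem} separately, both relying on the defining property that $\ket{\psi}$ is $k$-uniform and of minimal support, so that it corresponds to an orthogonal array of index unity $\oa{d^k,N,d,k}$. For the second claim, suppose for contradiction that two distinct support elements $\ket{\psi_i}$ and $\ket{\psi_j}$ agree on some $k$ positions, say on a subset $T \subseteq [N]$ with $|T| = k$. Then the rows of the associated OA indexed by $i$ and $j$ carry the same $k$-tuple of symbols in the columns of $T$. But the index-unity property means that \emph{every} $k$-tuple of symbols occurs exactly once among the $d^k$ rows when restricted to the columns of $T$; since $i \neq j$ this is a contradiction. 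Hence any two distinct support elements coincide on at most $k-1$ positions. (Equivalently: if they agreed on $k$ positions, the reduced density matrix $\rho_T(\psi)$ would fail to be proportional to the identity, contradicting $k$-uniformity.)

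For the first claim, I would compute the partial trace directly from the decomposition $\ket{\psi} = \sum_{i=1}^{d^k} \alpha_i \ket{\psi_i}$. Expanding,
\[
\tr_S \ket{\psi}\bra{\psi} = \sum_{i,j} \alpha_i \overline{\alpha_j}\, \tr_S\!\big( \ket{\psi_i}\bra{\psi_j} \big).
\]
Each $\ket{\psi_i} = \ket{x^i_1 \cdots x^i_N}$ is a computational basis vector, so $\tr_S \ket{\psi_i}\bra{\psi_j}$ is nonzero only when $\ket{\psi_i}$ and $\ket{\psi_j}$ agree on every position \emph{inside} $S$, in which case it equals $\ket{\widetilde{\psi_i}}\bra{\widetilde{\psi_j}}$ where $\ket{\widetilde{\psi_i}}$ is the restriction of $\ket{\psi_i}$ to the complement of $S$. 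The key point is that when $|S| \geq k$, agreement on all of $S$ forces agreement on at least $k$ positions, which by the second claim forces $i = j$. Therefore all cross terms vanish and
\[
\tr_S \ket{\psi}\bra{\psi} = \sum_{i=1}^{d^k} |\alpha_i|^2\, \ket{\widetilde{\psi_i}}\bra{\widetilde{\psi_i}}.
\]
Absorbing $|\alpha_i|^2$ into the (unnormalized) vectors — or simply noting that in the $k$-uniform setting all $|\alpha_i|$ are equal — gives the stated form $\sum_i \ket{\widetilde{\psi_i}}\bra{\widetilde{\psi_i}}$.

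The only genuinely delicate point is making sure the logical dependency runs in the right direction: the vanishing of the off-diagonal terms in the partial trace is precisely what requires the "coincide on at most $k-1$ positions" fact, so that second claim must be established first and then invoked. Everything else is bookkeeping with basis vectors. I would therefore present the combinatorial/OA observation first, then deduce the partial-trace formula as a corollary. I expect no serious obstacle; the main thing to be careful about is the case $|S| = k$ exactly, where agreement on $S$ means agreement on \emph{exactly} $k$ positions, which is still forbidden for $i \neq j$ by index unity, so the argument goes through unchanged.
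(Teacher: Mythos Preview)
Your proposal is correct and matches the paper's approach: the paper too first proves the ``at most $k-1$ coincidences'' claim by contradiction (via the $k$-uniformity/minimal-support argument you give as your parenthetical alternative), then states that the partial-trace formula follows immediately from that observation. Your explicit expansion of the cross terms is slightly more detailed than the paper's one-line conclusion, but the logic is identical.
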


\begin{proof}
For any subsystem $S$ such that $|S|=N-k$, $\tr_{S} \ket{\psi}\bra{\psi} = \Id_{d^k}$. 
Hence, for any $i \neq j$, vectors $\ket{\psi_i}$ and $\ket{\psi_j}$ coincide on at most $k-1$ positions. 
Indeed, suppose the contrary, i.e. they coincide on some $k$ positions. Then, by tracing out the rest of parties, we get $\tr_{N-k} \ket{\psi_i}\bra{\psi_i} = \tr_{N-k} \ket{\psi_j}\bra{\psi_j}$ and by the minimality of the support $\tr_{N-k} \ket{\psi}\bra{\psi} \neq \Id_{d^k}$. 
The statement of the lemma follows immediately form presented observation.
\end{proof}

\begin{proof}[Proof of \cref{prop2}]
As we already discussed, it is enough to prove \cref{prop2} for subsystems $S$ of $k+1$ parties. 
Without loss of generality, analyze the reduction to the subsystem $S$ of the first $k+1$ parties.
Consider a local unitary operation $U:= U_1\otimes \cdots \otimes U_{k+1}$ transforming $\rho_{S} \left( \psi \right)$ into $\rho_{S} \left( \phi \right)$. Unitary operators $U_i$ might be seen as the following change of basis:
\begin{equation}
\label{uff5}
\ket{\widehat{i_1}\cdots \widehat{i_{k+1}}} :=
\ket{U_1 ( {i_1})\cdots U_{k+1}( i_{k+1})} .
\end{equation}
From \cref{lem} we have:
\begin{equation}
\rho_{S} \left( \psi \right)= \sum_{i=1}^{d^k} 
\ket{x^i_1 \cdots x^i_{k+1}} \bra{x^i_1 \cdots x^i_{k+1}} ,
\end{equation}
and 
\begin{equation}
\label{uff}
\rho_{S} \left( \phi \right)= \sum_{i=1}^{d^k} 
\ket{y^i_1 \cdots y^i_{k+1}} \bra{y^i_1 \cdots y^i_{k+1}} .
\end{equation}
Observe that
\[
U \rho_{S} \left( \psi \right) U^{-1}= \sum_{i=1}^{d^k} 
\ket{\widehat{x^i_1} \cdots \widehat{x^i_{k+1}}} \bra{\widehat{x^i_1} \cdots \widehat{x^i_{k+1}}}  ,
\]
and since $U$ is the LU-equivalence, two expressions above are equal. Consequently, we have equality of the following spaces:
\begin{align*}
span &\Big\lbrace\ket{y^i_1 \cdots y^i_{k+1}} ; i =1,\ldots ,d^k\Big\rbrace = \\
&span \Big\lbrace \ket{\widehat{x^i_1} \cdots \widehat{x^i_{k+1}}} ; i =1,\ldots ,d^k\Big\rbrace .
\end{align*}
In general, each vector from the first space is a linear combination of vectors from the second space. We will show, however, that there is a one-to-one correspondence between vectors from both spaces; namely, for any index $i$ there exists an index $j_i$ such that:
\begin{align}
\label{uff4}
\ket{\widehat{x^i_1} \cdots \widehat{x^i_{k+1}}}  = \ket{y^{j_i}_1 \cdots y^{j_i}_{k+1}}
\end{align}
With this observation at hand, and by \cref{uff5},  the statement of \cref{prop1} follows immediately. 

What remains to show is that, indeed, \cref{uff4} holds. Consider the vector 
$\ket{\widehat{x^i_1} \cdots \widehat{x^i_{k+1}}}$ and present it as the following linear combination 
\begin{equation}
\label{uff2}
\ket{\widehat{x^1_1} \cdots \widehat{x^1_{k+1}}} = \sum_{i=1}^{d^k} \beta_i \ket{y^i_1 \cdots y^i_{k+1}}.
\end{equation}
On the other hand,
\begin{align}
\label{uff3}
\ket{\widehat{x^1_1} \cdots \widehat{x^1_{k+1}}} 
&=    \ket{U_1 ( x^1_1)\cdots U_{k+1}( x^1_{k+1})}  \\
&=\sum_{j_1 ,\ldots, j_{k+1} =0}^{d-1} u^1_{x^1_1 j_1} \cdots u^{k+1}_{x^1_{k+1}j_{k+1}} 
\ket{j_1 \cdots j_{k+1}}
\end{align}
where $u^k_{lm} = \left( U_k\right)_{lm} $, and hence
\[
\beta_i = u^1_{x^1_1  y^i_1 } \cdots u^{k+1}_{x^1_{k+1} y^i_{k+1}} .
\]

Suppose now, that for $i \neq j$, $\beta_i ,\beta_j \neq0$. Consequently,
\[
u^m_{x^1_m  y^i_m } \neq 0, u^m_{x^1_m  y^j_m } \neq 0,
\] 
for $m=1,\ldots, k+1$. By \cref{lem}, $\ket{y^i_1 \cdots y^i_{k+1}}$ and $\ket{y^j_1 \cdots y^j_{k+1}}$ differ on at least $2$ positions; without loss of generality suppose $y^i_1 \neq y^j_1 $. Observe that
\[
u^1_{x^1_1  y^i_1 }  \neq 0 ,u^2_{x^1_j  y^j_2 } \neq 0, \ldots, u^{k+1}_{x^1_{k+1} y^j_{k+1}} \neq 0
\]
and hence the expression
\[
\ket{y^i_1 y^j_2 \cdots y^j_{k+1}} \bra{y^i_1 y^j_2 \cdots y^j_{k+1}}
\]
appear on the right-hand side of \cref{uff} with non-zero coefficient, which is in contradiction to \cref{lem}.

Since there is at most one $\beta_i \neq 0$, the sum in \cref{uff2} collapse to the one term. Similar reasoning shows, that $\ket{\widehat{x^m_1} \cdots \widehat{x^m_{k+1}}} $ is in general equal to $\ket{y^i_1 \cdots y^i_{k+1}}$ for some $i=1,\ldots,d^k$, what should have been shown.
\end{proof}

Notice that the presented argument does not hold if $2k=N$. 
Indeed, the smallest non-trivial reduced system of $k$-uniform state consists of $k+1$ parties. 
The proof is based on \cref{lem} to justify that vectors $\ket{y^i_1 \cdots y^i_{k+1}}$ and $\ket{y^j_1 \cdots y^j_{k+1}}$ differ on at least $2$ positions. 
For $2k=N$, however, the argument of \cref{lem} might be used only for the trivial reduction to $k$ parties.

\section{The proof of \cref{verification}}
\label{verification2APP}

\begin{proof}[Proof of \cref{verification}] 
From \cref{prop1} the LU-equivalence between two states of minimal supports is a product of permutation and diagonal matrices. Suppose that the local permutation $\sigma$ was already applied to the state $\ket{\psi}$. Therefore, we may assume that $\ket{\psi}$ and $\ket{\psi '}$ are related only by diagonal operators.

Since both states $\ket{\psi}$ and $\ket{\psi '}$ are $k$-uniform states of minimal support related only by diagonal operators, they might be written in the following form:
\begin{align}
\ket{\psi}  &=\sum_{I \in\mathcal{I}} \omega_I \ket{I} , \\
\ket{\psi '}  &=\sum_{I \in \mathcal{I}} \omega_I ' \ket{I} \,
\end{align}
where $\mathcal{I} \subset [d]^n$ has dimension $|\mathcal{I} |= [d]^k$. 
Denote by
\[
D^\ell =\text{diag} (u^\ell_1 ,\ldots , u^\ell_d )
\]
the diagonal operators relating $\ket{\psi}$ and $\ket{\psi '}$. 
Clearly,
\begin{equation}
\label{eq-1}
\omega_I ' =u^1_{i_1} \cdots u^n_{i_n} \omega_I
\end{equation}
for any index $I={i_1} ,\ldots , {i_n}$. 
Denote by $U^\ell := \prod_{i \in [d]} u^\ell_i$ the product of all non-zero elements from the matrix $D^\ell$.

For simplicity, let us choose $S =\{1,\ldots ,k-1\}$ being the set of first $k-1$ indices. 
We shall show the statement with respect to the first matrix $D^1$.  
Consider any multi-index $I=i_2,\ldots ,i_{k-1}$. 
By multiplying adequate expressions from \cref{eq-1} by sides, one may obtain
\[
(W^S_{i,I}) '= \Big(u^1_i \Big)^d 
\Big(u^2_{i_2} \Big)^d \cdots \Big(u^{k-1}_{i_k-1} \Big)^d
 \Big( U^k\cdots U^n \Big)
 W^S_{i,I} .
\]
The fact that $U^k\cdots U^n $ appears on the right-hand side follows from the basic properties of OA related to the states $\ket{\psi}$ and $\ket{\psi '}$. 
Hence
\begin{equation}
\label{eq-2}
\Big(u^1_0 \Big)^d \dfrac{W^S_{0,I}}{(W^S_{0,I}) '} 
 =\cdots
= \Big(u^1_{d-1} \Big)^d \dfrac{W^S_{d-1, I}}{(W^S_{d-1,I}) '} .
\end{equation}
From this immediately follows that 
\[
D^1 =\omega_1 \text{diag} \Bigg( \sqrt[\leftroot{-3}\uproot{3}d]{\tfrac{(W^S_{0,I})'}{W^S_{0,I}}},
\ldots ,
\sqrt[\leftroot{-3}\uproot{3}d]{\tfrac{(W^S_{d-1,I})'}{W^S_{d-1,I}}} \Bigg),
\]
for some phase factor $\omega_1$. 

Since the multi-index $I$ is arbitrary, from \cref{eq-2} follows that
\begin{equation}
\tfrac{(W^S_{0,I})'}{W^S_{0,I}} =\tfrac{(W^S_{0,I'})'}{W^S_{0,I'}}
\label{promp}
\end{equation}
for any other multi-index $I'=i_2',\ldots ,i_{k-1}'$. 

Since we assumed that the local permutation $\sigma$ was already applied to the state $\ket{\psi}$, one has to consider the action of $\sigma$ on denominator in \cref{promp} in the general case. 
This shows the statement of \cref{verification} for the set $S $ of first $k-1$ indices. 
The same reasoning might be applied for other set $S$, and further for any indices: $2,\ldots, n$. The global phase is then a multiplication of obtained factors $\omega_i$. 
\end{proof}

\section{\cref{32} revisited}
\label{The proof Case II}

We discuss in details LU-equivalences of AME(2k,d) states with minimal support. In particular, proofs of \cref{prop1=,verification2,cor2} are presented. 
We begin with the necessary notation. 
For simplicity, all AME(2k,d) states considered in \cref{The proof Case II} are normalized to $\sqrt{d^k}$. 
In such a way, all terms in the computational basis of states with minimal support are normalized to one. 
For a local unitary operator $U:= U_1 \otimes \cdots \otimes U_\ell$ and a subset of indices $S \subseteq [\ell ]$, we define its $S$ part by:
\[
U_S :=\bigotimes_{i \in S} U_i.
\]

\begin{observation}
\label{Ob1}
Let $U:= U_1 \otimes \cdots \otimes U_\ell$ be a local unitary operator transforming a state $\rho (\psi ) \in \mathcal{H}^{\otimes \ell} $ onto $\rho (\psi ')$, i.e.
\[
U \Big(  \rho (\psi )\Big) =  \rho (\psi ' ) .
\]
Then, for any subset $S \subseteq [ \ell ]$ of indices: 
\begin{equation}
\label{eq7}
U_S  \Big(  \rho_S (\psi )\Big) =  \rho_S (\psi ' ) .
\end{equation}
Moreover, if 
\[
 \rho_S (\psi ) =\ketbra{\psi_1} + \cdots + \ketbra{\psi_k}, 
\]
\[
 \rho_S (\psi' ) =\ketbra{\psi_1'} + \cdots + \ketbra{\psi_k'}, 
\]
where vectors $\ket{\psi_i}$, respectively $\ket{\psi_i'}$, are orthogonal, i.e. $ \langle \psi_i  |  \psi_j \rangle = c_i \delta_{ij}$ and $ \langle \psi_i ' |  \psi_j '\rangle = c_i '\delta_{ij}$, then 
\begin{equation}
\label{eq8}
U_S  \ket{\psi_i} =\sum_{j=1}^k v_{ij}  \ket{\psi_j'} 
\end{equation}
for some elements $v_{ij}$, which form a unitary matric $V:= (v_{ij})$ iff all the vectors $\ket{\psi_i}$ and $\ket{\psi_i'}$ have the same norm.
\end{observation}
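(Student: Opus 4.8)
The plan is to establish the two assertions in turn, reducing the second to a short linear-algebra computation that rests on the first.

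For the first assertion I would make the tensor factorisation explicit: split $[\ell]=S\sqcup S^{c}$ and write $U=U_{S}\otimes U_{S^{c}}$. Since $\rho_{S}(\psi)=\tr_{S^{c}}\rho(\psi)$, I would apply $\tr_{S^{c}}$ to the identity $\rho(\psi')=(U_{S}\otimes U_{S^{c}})\,\rho(\psi)\,(U_{S}\otimes U_{S^{c}})^{\dagger}$ and use the cyclicity of the partial trace with respect to operators supported on $S^{c}$, which lets $U_{S^{c}}$ and $U_{S^{c}}^{\dagger}$ cancel under $\tr_{S^{c}}$. This yields $\rho_{S}(\psi')=U_{S}\,\bigl(\tr_{S^{c}}\rho(\psi)\bigr)\,U_{S}^{\dagger}=U_{S}\,\rho_{S}(\psi)\,U_{S}^{\dagger}$, i.e.\ \cref{eq7}; this step is routine once the factorisation is written down.

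For \cref{eq8} I would read \cref{eq7} as $\sum_{i}\ketbra{U_{S}\psi_{i}}=\sum_{j}\ketbra{\psi_{j}'}$. As $U_{S}$ is unitary, the vectors $U_{S}\ket{\psi_{i}}$ stay mutually orthogonal, so the support of the left-hand operator is $\operatorname{span}\{U_{S}\ket{\psi_{i}}\}$ and that of the right-hand operator is $\operatorname{span}\{\ket{\psi_{j}'}\}$; equal operators have equal supports, hence the two spans coincide. Therefore every $U_{S}\ket{\psi_{i}}$ lies in $\operatorname{span}\{\ket{\psi_{j}'}\}$ and expands as $U_{S}\ket{\psi_{i}}=\sum_{j}v_{ij}\ket{\psi_{j}'}$, which defines the coefficients $v_{ij}$ (uniquely on the nonzero $\ket{\psi_{j}'}$; degenerate terms with $c_{i}=0$ or $c_{j}'=0$ can be taken to contribute $0$, and the two families contain equally many such terms because $\rho_{S}(\psi)$ and $\rho_{S}(\psi')$ have equal rank).

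Finally, to decide when $V=(v_{ij})$ is unitary I would compute $\langle U_{S}\psi_{i}\,|\,U_{S}\psi_{i'}\rangle$ in two ways: from unitarity of $U_{S}$ it equals $\langle\psi_{i}|\psi_{i'}\rangle=c_{i}\delta_{ii'}$, while substituting the expansions and using $\langle\psi_{m}'|\psi_{n}'\rangle=c_{m}'\delta_{mn}$ gives $\sum_{m}\overline{v_{im}}\,v_{i'm}\,c_{m}'$. Writing $D=\operatorname{diag}(c_{1},\dots,c_{k})$ and $D'=\operatorname{diag}(c_{1}',\dots,c_{k}')$, this is exactly the identity $V D' V^{\dagger}=D$. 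If all the vectors $\ket{\psi_{i}}$ and $\ket{\psi_{j}'}$ share one common norm $r$, then $D=D'=r^{2}\Id$ and the relation collapses to $VV^{\dagger}=\Id$, so $V$ is unitary; conversely, unitarity of $V$ makes $D'=V^{\dagger}DV$ unitarily similar to $D$, which --- the norms inside each of the two families being already equal, as is automatic for the normalised reduced states of minimal-support $k$-uniform states, whose support vectors are norm-one product states --- forces $D=D'$, i.e.\ all the norms coincide. I expect this last equivalence to be the delicate point: one has to carry the squared norms $c_{i},c_{i}'$ along as the eigenvalues attached to the two orthogonal families, and it is precisely the equal-norm hypothesis that upgrades the relation $V D' V^{\dagger}=D$ to unitarity of $V$.
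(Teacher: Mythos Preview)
Your argument is correct and follows essentially the same route as the paper: \cref{eq7} via cyclicity of the partial trace on the $S^{c}$ factor, then \cref{eq8} by observing that $U_{S}$ carries the span of the $\ket{\psi_{i}}$ onto the span of the $\ket{\psi_{j}'}$ (the paper phrases this as a block structure after completing both families to bases, while you phrase it as equality of supports of the two density operators---the same content). Your Gram-matrix relation $V D' V^{\dagger}=D$ is a more explicit version of the paper's ``$V$ is a block of the unitary $U_{S}$ in orthonormal bases, hence unitary''; your careful remark that the converse of the ``iff'' needs the norms within each family to already agree is in fact more precise than the paper, which only argues the forward direction.
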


\begin{proof}
\cref{eq7} follows immediately from basic properties of partial trace and unitary operations. 

By imposing the orthogonality relations between vectors $\ket{\psi_i}$,  respectively $\ket{\psi_i'}$, one may extend both families (up to normalization of vectors $\ket{\psi_i}$ and $\ket{\psi_i'}$) into the basis of entire Hilbert space. 
Consider now the matrix $U_S$ in this basis. 
Since \cref{eq7} holds and $U_S$ is a unitary matrix, $U_S$ has a block structure, which transfers subspace spanned by vectors $\ket{\psi_i}$ onto the subspace spanned by vectors $\ket{\psi_i'}$, which implies \cref{eq8}. 

Assume now that vectors $\ket{\psi_i}$ and $\ket{\psi_i'}$ are normalized. 
Observe, that in the aforementioned basis, $V$ is a block matrix of $U_S$, and hence is a unitary matrix.
\end{proof}

Each AME($2k,d$) state $\ket{\psi}$ might be written in the following form
\[
\ket{\psi} = \sum_{I \in[d]^k} \ket{I} \otimes 
\ket{\phi_{I}} .
\]
where $I$ is multi-index $I=i_1 , \ldots , i_k $ which runs over the space $[d]^k$. 
If $\ket{\psi}$ is of minimal support, the vector $\ket{\phi_{I}} \in \mathcal{H}_d^{\otimes k}$  is separable in computational basis, i.e
\begin{equation}
\ket{\psi}  =\sum_{I \in[d]^k} \omega_I \ket{I} \otimes \ket{\phi_{I}^1} \otimes \cdots \otimes \ket{\phi_{I}^k},
\label{eq9}
\end{equation}
where vectors $\ket{\phi_{I}^j}$ are from the computational basis, i.e. $\ket{\phi_{I}^j} = \ket{0} ,\ldots ,\ket{d-1}$.

\begin{lemma}
\label{lem1}
Consider two AME states $\ket{\psi}$ and $\ket{\psi'}$ of the form
\begin{align*} 
\ket{\psi} &= \sum_{I =i_1,\ldots , i_k} \omega_I \ket{I} \otimes 
\ket{\phi_{I}} ,\\
\ket{\psi'} &= \sum_{I =i_1,\ldots , i_k} \omega_I' \ket{I} \otimes 
\ket{\phi_{I}'} .\\
\end{align*} 
which are local unitary equivalent by $U$. 
For any any multi-index $I=i_1 , \ldots , i_k $
\begin{align*}
U \Big( \omega_{I} \ket{\phi_{I}} \Big)=
\sum_{I' \in [d]^k} &
v^1_{i_1 i_1'} v^2_{i_1 i_1'} (i_1) \cdots v^k_{i_k i_k'} (i_1, \ldots, i_{k-1}) \\
&\omega_{I'}\ket{\phi_{I'}} ,
\end{align*}
where elements $v_{ij}^\ell (i_1, \ldots, i_{\ell-1}) $ forms a unitary matrices 
\[
V^\ell (i_1, \ldots, i_{\ell-1})  :=
 \Big(v_{ij}^\ell (i_1, \ldots, i_{\ell-1})  \Big)
 \]
for any $\ell =1,\ldots,k$ and indices $i_1, \ldots, i_{\ell-1}$. 
\end{lemma}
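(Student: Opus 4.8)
The plan is to establish the claim by peeling off the local unitaries $U_1,\dots,U_k$ acting on the first (``control'') register one at a time, reading their action from a nested family of marginals together with \cref{Ob1}. Write $T:=\{k+1,\dots,2k\}$ for the second (``target'') register and, for a subset $R$, $U_R:=\bigotimes_{i\in R}U_i$. I would begin with the standing facts forced by the hypotheses: tracing $\ket\psi$ over the control register gives $\rho_T(\psi)=\sum_{I}|\omega_I|^2\ketbra{\phi_I}$, which is proportional to $\Id_{d^k}$ because $\ket\psi$ is AME and $|T|=k$. Since $\ket\psi$ has minimal support, $I$ ranges over all of $[d]^k$; as a sum of rank-one projectors onto computational-basis product vectors has rank equal to the number of distinct such vectors, full rank of $\rho_T(\psi)$ forces the $\ket{\phi_I}$ to be pairwise distinct, hence an orthonormal basis of $\mathcal{H}_d^{\otimes k}$, and all $|\omega_I|$ equal (normalize them to $1$). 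The same holds for $\ket{\phi_I'}$ and $\omega_I'$; in particular the label maps $I\mapsto\phi_I$ and $I'\mapsto\phi_{I'}'$ are bijections of $[d]^k$.

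For $\ell=0,1,\dots,k$ I would consider the subsystem $S_\ell:=\{1,\dots,\ell\}\cup T$ and compute $\rho_{S_\ell}(\psi)$ by tracing out the control parties $\ell+1,\dots,k$. Splitting a multi-index as $I=(\vec a,\vec b)$, $\vec a\in[d]^\ell$, $\vec b\in[d]^{k-\ell}$, one obtains $\rho_{S_\ell}(\psi)=\sum_{\vec b}\ketbra{\xi^{(\ell)}_{\vec b}}$ with $\ket{\xi^{(\ell)}_{\vec b}}:=\sum_{\vec a}\omega_{(\vec a,\vec b)}\ket{\vec a}\otimes\ket{\phi_{(\vec a,\vec b)}}$, and bijectivity of $I\mapsto\phi_I$ makes these $d^{k-\ell}$ vectors pairwise orthogonal with a common norm; write the analogous $\rho_{S_\ell}(\psi')=\sum_{\vec b}\ketbra{\eta^{(\ell)}_{\vec b}}$ for $\psi'$. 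Since $U$ intertwines $\rho(\psi)$ and $\rho(\psi')$, so does $U_{S_\ell}$, and the ``equal norm'' clause of \cref{Ob1} supplies a unitary $V^{(\ell)}$ with $U_{S_\ell}\ket{\xi^{(\ell)}_{\vec b}}=\sum_{\vec b'}(V^{(\ell)})_{\vec b\vec b'}\ket{\eta^{(\ell)}_{\vec b'}}$. The extreme cases are the anchors: $\ell=0$ reads $U_T(\omega_I\ket{\phi_I})=\sum_{I'}(V^{(0)})_{II'}\,\omega_{I'}'\ket{\phi_{I'}'}$, which is exactly the statement once the structure of $V^{(0)}$ is known, while $\ell=k$ is the pure marginal $\rho_{S_k}(\psi)=\ketbra\psi$ and only fixes a global phase.

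The core step relates levels $\ell$ and $\ell+1$. Using $U_{S_{\ell+1}}=U_{\ell+1}\otimes U_{S_\ell}$ together with the regrouping $\ket{\xi^{(\ell+1)}_{\vec c}}=\sum_{a\in[d]}\ket{a}\otimes\ket{\xi^{(\ell)}_{(a,\vec c)}}$ (party $\ell+1$ pulled out), I would expand $U_{S_{\ell+1}}\ket{\xi^{(\ell+1)}_{\vec c}}$ two ways: directly through \cref{Ob1} at level $\ell+1$, and by inserting the already-known action of $U_{S_\ell}$. Comparing coefficients in the orthonormal basis $\{\ket{a}_{\ell+1}\otimes\ket{\vec a_0}_{1\dots\ell}\otimes\ket{\phi_L'}_T\}$, injectivity of $I'\mapsto\phi_{I'}'$ collapses the double sum to a single term (basis vectors outside the range of the bijection contribute $0$ on both sides), and invertibility of $U_{\ell+1}$ forces
\[
(V^{(\ell)})_{(b_1,\vec c)(b_1',\vec f)}=\overline{(u^{\ell+1})_{b_1 b_1'}}\,(V^{(\ell+1)})_{\vec c\vec f},
\]
$u^{\ell+1}$ denoting the entries of $U_{\ell+1}$. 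Unfolding this recursion for $\ell=0,\dots,k-1$ (bottoming out at the trivial phase at $\ell=k$) gives $(V^{(0)})_{IJ}=\prod_{\ell=1}^{k}v^\ell_{i_\ell j_\ell}(i_1,\dots,i_{\ell-1})$ with each factor matrix $V^\ell(i_1,\dots,i_{\ell-1})$ unitary — here in fact each equals $\overline{U_\ell}$, so the indicated dependence on the earlier indices is vacuous, but the displayed form is what the later arguments use — and substituting into the $\ell=0$ anchor yields the asserted formula.

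The main obstacle I anticipate is the index bookkeeping in the core step: tracking the two bijections $I\mapsto\phi_I$, $I'\mapsto\phi_{I'}'$ so that the coefficient comparison is carried out in the correct orthonormal basis, verifying the consistency of the out-of-range basis vectors, and confirming that each peeled-off block is an honest unitary rather than merely a well-defined matrix. As a cross-check — and, in the minimal-support setting, a shortcut — one may instead equate $U\ket\psi=\ket{\psi'}$ directly, take the partial inner product against $\ket{I'}$ on the control register to get $\omega_{I'}'\ket{\phi_{I'}'}=\sum_I\omega_I\big(\prod_m(u^m)_{i_m i_m'}\big)U_T\ket{\phi_I}$, and invert this linear system using that $\{\omega_I U_T\ket{\phi_I}\}$ is an orthonormal basis and that the coefficient matrix is $U_1^{T}\otimes\cdots\otimes U_k^{T}$; this immediately returns $U_T(\omega_I\ket{\phi_I})=\sum_{I'}\big(\prod_m\overline{(u^m)_{i_m i_m'}}\big)\omega_{I'}'\ket{\phi_{I'}'}$, in agreement with the recursion.
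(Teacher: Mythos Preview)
Your proposal is correct, and its two parts stand in different relation to the paper's argument.

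The recursive half of your proof is a mirror image of the paper's: the paper traces out parties $1,2,\dots,k$ in order, so that at step $\ell$ it works on the subsystem $\{1,\dots,\ell\}^c$ and obtains matrices $V^\ell(i_1,\dots,i_{\ell-1})$ by applying \cref{Ob1} to an $i_1,\dots,i_{\ell-1}$--dependent pair of bases (because the ``primed'' vectors at step $\ell$ are built from $V^1,\dots,V^{\ell-1}$). You instead trace out parties $\ell+1,\dots,k$, keep $S_\ell=\{1,\dots,\ell\}\cup T$, and recurse downward. The mechanics are the same, and both routes prove the lemma.

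Your direct ``cross-check'' at the end, however, is a genuinely shorter argument than either recursion: projecting $U\ket\psi=\ket{\psi'}$ onto $\bra{I'}$ on the control register and inverting the unitary $U_1\otimes\cdots\otimes U_k$ immediately gives
\[
U_T\bigl(\omega_I\ket{\phi_I}\bigr)=\sum_{I'}\prod_{m=1}^k\overline{(U_m)_{i_m' i_m}}\,\omega_{I'}'\ket{\phi_{I'}'},
\]
so one may take $v^\ell_{i_\ell i_\ell'}=\overline{(U_\ell)_{i_\ell' i_\ell}}$, i.e.\ $V^\ell=U_\ell^\dagger$ (not quite $\overline{U_\ell}$, but the distinction is only a transposition and does not affect unitarity). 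This is in fact \emph{stronger} than what the paper extracts from its recursion: the paper leaves the $V^\ell$ depending on $i_1,\dots,i_{\ell-1}$ and flags this dependence as ``the main obstruction for obtaining more general results,'' whereas your argument shows the dependence can be taken trivial. The paper's downstream use of the lemma (in the proof of \cref{cor2}) fixes a single $I$ anyway, so it does not exploit the dependence; your sharper form loses nothing there and clarifies that the coefficients in \cref{eq16} are simply entries of $U_{k+1},\dots,U_{2k}$.
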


Notice that structure constants $v_{ij}^\ell (i_1, \ldots, i_{\ell-1}) $ depending on indices $i_1, \ldots, i_{\ell-1}$, which in fact is the main obstruction for obtaining more general results as those presented in this section. 

\begin{proof}
We shall use \cref{Ob1} repetitively $k$ times, by tracing out parties $1,\ldots,k$ respectively. In fact, the order of the procedure does not matter. In each step, the orthogonality of adequate vectors is fulfilled by relations $ \langle \phi_{I'} |  \phi_{I}\rangle = \delta_{I,I'}$. We present the first two step of the procedure in a more detailed way.

We know that $U \Big( \ket{\psi} \Big) = \ket{\psi'}$. Consider the partial traces over the first subsystem in both vectors $\ket{\psi}$ and $\ket{\psi'}$:
\begin{align*} 
\rho_{1^c} (\psi ) &= \sum_{i_1=0}^{d-1} \ketbra{\psi_{i_1}}, \\
\rho_{1^c} (\psi' ) &= \sum_{i_1=0}^{d-1} \ketbra{\psi_{i_1}'}, 
\end{align*} 
where
\begin{align} 
\label{eq11}
\ket{\psi_{i_1}} &= \sum_{I =i_2,\ldots, i_k} \omega_{i_1,I} \ket{I}\otimes \ket{\phi_{i_1, I}} ,
\end{align} 
and similarly:
\begin{align*} 
\ket{\psi_{i_1}'} &= \sum_{I =i_2,\ldots, i_k} \omega_{i_1,I }' \ket{I}\otimes \ket{\phi_{i_1, I}'} .
\end{align*}
From \cref{Ob1}, follows that
\begin{equation}
\label{eq10}
U_{1^c} \big(\ket{\psi_{i_1}} \Big) = 
\sum_{i_1'=0}^{d-1} v^1_{i_1 i_1'} \ket{\psi_{i_1'}'} ,
\end{equation}
and the elements $v^1_{i_1 i_1'}$ form the unitary matrix $V^1 := (v^1_{i_1 i_1'} )$.

We shall consider vectors $\ket{\psi_{i_1}} $ separately. 
For an arbitrary index ${i_1} $, consider a partial trace over second party of $\ket{\psi_{i_1}}$. From \cref{eq11}:
\begin{align*} 
\rho_{2^c} (\psi_{i_1} ) &= \sum_{i_2=0}^{d-1} \ketbra{\psi_{i_1, i_2}}, 
\end{align*} 
where
\begin{align*} 
\ket{\psi_{i_1,i_2}} &= \sum_{I =i_3,\ldots, i_k} \omega_{i_1,i_2,I} \ket{I}\otimes \ket{\phi_{i_1, i_2, I}}.
\end{align*} 
The partial trace over the second party of the right-hand side of \cref{eq10} is equal to:
\begin{equation*}
\sum_{i_2=0}^{d-1} \ketbra{\psi_{i_1, i_2}'}, 
\end{equation*}
where
\begin{equation*}
  \ket{\psi_{i_1, i_2}'}=  
\sum_{i_1'} v^1_{i_1 i_1'}   \sum_{I =i_3,\ldots, i_k} \omega_{i_1,i_2,I}'  \ket{I}\otimes \ket{\phi_{i_1, i_2 ,I}'} ,
\end{equation*}
and hence by \cref{Ob1} applied to the both sides of \cref{eq10}, we have:
\begin{equation}
\label{eq12}
U_{\{1,2\}^c} \Big(\ket{\psi_{i_1, i_2}} \Big) = 
\sum_{i_2'=0}^{d-1} v^2_{i_2 i_2'} (i_1) \ket{\psi_{i_1 ,i_2}'} ,
\end{equation}
for elements $v^2_{i_2 i_2'} (i_1)$, which form a unitary matrix.
Notice that the matrix $V^2 (i_1) := \big(v^2_{i_2 i_2'} (i_1) \big)$ is in particular dependent on the chosen index $i_1$.

We repeat presented procedure for arbitrary pair of indices $i_1$ and $i_2$; then $i_1$, $i_2$, and $i_3$; and in general $k$ times up to $i_1,\ldots,i_k$. 
Finally, we obtain
\begin{equation}
\label{eq120}
U_{\{1,\ldots,k \}^c} \Big(\ket{\psi_{i_1, \ldots,i_k}} \Big) = 
\sum_{i_k'=0}^{d-1} v^k_{i_k i_k'} (i_1 ,\ldots,i_{k-1}) \ket{\psi_{i_1 ,\ldots,i_{k}}'} 
\end{equation}
where elements $v^k_{i_k i_k'} (i_1,\ldots , i_{k-1})$, which form a unitary matrix. 
Notice that $\ket{\psi_{i_1, \ldots,i_k}} =\omega_I \ket{\phi_{i_1, \ldots,i_k}}$. 
On the other hand 
\begin{align*}
  \ket{\psi_{i_1, \ldots, i_k}'}= 
\sum_{i_1' ,\ldots ,i_{k-1}'} &
v^1_{i_1 i_1'} \cdots v^{k-1}_{i_{k-1} i_{k-1}'} (i_1, \ldots, i_{k-2}) \\
& \omega_{i_1,\ldots, i_{k}}'  \ket{\phi_{i_1, \ldots, i_k }'}.
\end{align*}
By the analysis of the recursion. Substitution of this formula to \cref{eq120} proves the proposition.
\end{proof}

\begin{corollary}
\label{cor1}
For two AME states of minimal support:
\begin{align*} 
\ket{\psi} &= \sum_{I =i_1,\ldots , i_k} \omega_I \ket{I} \otimes 
\ket{{\phi_{I}^1}}  \otimes  \cdots\otimes \ket{{\phi_{I}^k}}  ,\\
\ket{\psi'} &= \sum_{I =i_1,\ldots , i_k} \omega_I' \ket{I} \otimes 
\ket{{\phi_{I}^1}'}  \otimes  \cdots\otimes \ket{{\phi_{I}^k}'}  ,\\
\end{align*} 
which are local unitary equivalent by $U$, the equivalence is of the following form:  
\begin{align}
\label{eq13}
U_{[k]} \Big( \omega_{I}\ket{I}  \Big)=
\sum_{I'=i_1',\ldots, i_k'} &
v^1_{{\phi_{I}^1} {\phi_{I'}^1}'}  
v^2_{{\phi_{I}^2} {\phi_{I'}^2}'} ({\phi_{I}^1}) \cdots \\
& v^k_{{\phi_{I}^k} {\phi_{I'}^k}'} ({\phi_{I}^1}, \ldots ,{\phi_{I}^{k-1}})
\omega_{I'} \ket{I'} .\nonumber 
\end{align}
\end{corollary}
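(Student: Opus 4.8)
The plan is to obtain \cref{cor1} directly from \cref{lem1} by interchanging the two halves of the system. By \cref{hcc} a minimal-support AME($2k$,$d$) state corresponds to a $k$-MOLH($d$), equivalently to an index-unity orthogonal array on $2k$ columns, so any $k$ of those columns form a full factorial; in particular the map $I=(i_1,\ldots,i_k)\mapsto(\phi_{I}^1,\ldots,\phi_{I}^k)$ is a bijection of $[d]^k$, and the same holds for $\ket{\psi'}$. Consequently $\ket{\psi}$ can be re-read with the last $k$ parties as the ``outer'' multi-index: putting $J:=(\phi_{I}^1,\ldots,\phi_{I}^k)$ and letting $I(J)$ denote its unique preimage,
\[
\ket{\psi}=\sum_{J\in[d]^k}\omega_{I(J)}\,\ket{i_1(J)}\otimes\cdots\otimes\ket{i_k(J)}\otimes\ket{J},
\]
which is precisely the form \cref{eq9} with the two blocks of $k$ parties swapped: the inner factor is separable in the computational basis and $J\mapsto(i_1(J),\ldots,i_k(J))$ is a bijection, exactly as \cref{lem1} requires.

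With this re-reading in place I would apply \cref{lem1} to $\ket{\psi}$, $\ket{\psi'}$ and $U$ with the roles of the parties $\{1,\ldots,k\}$ and $\{k+1,\ldots,2k\}$ exchanged. Tracing out the parties $k+1,\ldots,2k$ one at a time as in the proof of \cref{lem1}, after removing party $k+\ell$ and fixing $\phi_{I}^1=b_1,\ldots,\phi_{I}^\ell=b_\ell$ the resulting vectors are mutually orthogonal (the first-block computational string $(i_1,\ldots,i_k)$ already determines $I$, hence $(b_1,\ldots,b_\ell)$) and share a common norm, since fixing $\ell\le k$ columns of an index-unity array of strength $k$ leaves exactly $d^{\,k-\ell}$ rows. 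Thus \cref{Ob1} applies at every step and produces unitary matrices $V^\ell(b_1,\ldots,b_{\ell-1})=\big(v^\ell_{bb'}(b_1,\ldots,b_{\ell-1})\big)$; unwinding the recursion exactly as in \cref{lem1}, and using $\{k+1,\ldots,2k\}^{c}=[k]$, gives
\[
U_{[k]}\big(\omega_{I}\ket{I}\big)=\sum_{I'}v^1_{{\phi_{I}^1}{\phi_{I'}^1}'}\,v^2_{{\phi_{I}^2}{\phi_{I'}^2}'}(\phi_{I}^1)\cdots v^k_{{\phi_{I}^k}{\phi_{I'}^k}'}(\phi_{I}^1,\ldots,\phi_{I}^{k-1})\,\omega_{I'}\ket{I'},
\]
which is \cref{eq13}.

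The only step that requires genuine justification is the legitimacy of the block swap, namely that treating the last $k$ parties as the outer index again yields a state of the form \cref{eq9}; as explained this is nothing but the column-permutation symmetry of an index-unity orthogonal array and involves no real computation. Everything else is a transcription of the argument already carried out for \cref{lem1}, with the symbols $i_\ell$ and $\phi_{I}^\ell$ swapping roles, so I expect no further obstacle.
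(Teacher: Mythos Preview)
Your proposal is correct and follows exactly the approach the paper takes: the paper's proof is the single sentence ``It follows immediately from \cref{lem1} applied to the second half of indices,'' and what you have written is precisely that, with the details of the block swap (via the index-unity orthogonal array / \cref{hcc}) and the orthogonality/equal-norm checks for \cref{Ob1} made explicit. There is nothing to change.
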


\begin{proof}
It follows immediately form \cref{lem1} applied to the second half of indices.
\end{proof}

So far obtained statement are rather technical. 
We shall demonstrate their effectiveness. 

\begin{proof}[Proof of \cref{cor2}]
Fix a multi-index $I=i_1, \ldots ,i_k$. By the definition,
\begin{equation}
\label{eq14}
U_{[k]} \Big( \ket{I}  \Big)= 
\sum_{I'=i_1',\ldots, i_k'} 
u^1_{i_1 i_1'} \cdots u^k_{i_k i_k'}
\ket{I'} .
\end{equation}
On the other hand, $U_{[k]} $ might be expressed in the form \cref{eq13}. 
For simplicity of the proof, we use the following notation:
\begin{align*}
u^\ell_{i_\ell'} :=& u^\ell_{i_\ell i_\ell'} \\
v^\ell_{{\phi_{I'}^\ell} '} :=&v^\ell_{\phi_{I}^\ell {\phi_{I'}^\ell} '}  
( \phi_I^1 ,\ldots , \phi_I^{\ell -1} )
\end{align*}
which is correctly defined once the index $I=i_1, \ldots ,i_k$ is fixed. 
Furthermore, we define:
\begin{align*}
\ell_{I'} :=& 
u^1_{i_1'}\cdots u^k_{i_k'} \\
p_{I'} :=&
v^1_{i_1'}\cdots v^k_{i_k'} ,
\end{align*}
for any multi-index $I'=i_1', \ldots ,i_k'$.

By comparison of \cref{eq14} and \cref{eq13}, we have:
\begin{equation}
\label{eq16}
 \ell_{I'} =p_{\phi_{I'} '}  \cdot\dfrac{\omega_{I'}'}{\omega_I},
\end{equation}
where $\phi_{I'} ':= ({\phi_{I'}^1}',\ldots, {\phi_{I'}^k}'  )$ for any $I'=i_1', \ldots ,i_k'$. 

Consider now the mutually orthogonal Latin hypercube MOLH($d$) (see \cref{hcc} for details):
\[
L(i_1' \ldots i_k') := \big({\phi_{I'}^1}',\ldots, {\phi_{I'}^k}' \big) .
\]
On the one hand, the set of indices $I'=i_1', \ldots ,i_k'$ for which $ \ell_{I'} \neq 0$ forms a rectangle $S:=S_1 \times \cdots \times S_k \subseteq [d]^k$. 
Indeed, it follows from the product form of  $\ell_{I'} $. 
On the other hand, the set of indices $I''=i_1'', \ldots ,i_k''$ for which $ p_{I''} \neq 0$ also forms a rectangle $S':=S_1' \times \cdots \times S_k' \subseteq [d]^k$, which follows from the product form of  $p_{I''} $. From equality in \cref{eq16} follows that $L$ maps $S$ onto $S'$. Form \cref{ob3}, $S$ and $S'$ are hypercubes and $L_{ \vert S}$ is a MOLH($s$), where 
$
s:= |S_1|=\cdots=|S_k|= |S_1'|=\cdots=|S_k'| .
$
This proves the second statement of \cref{cor2}. 

So far, we showed that for any fixed multi-index $I=i_1, \ldots ,i_k$, the relevant rows of matrices $U_1,\ldots,U_k$, i.e. vectors
\[
\big( u^1_{i_1 i} \big)_{i=0}^{d-1} , \ldots , \big( u^k_{i_k i} \big)_{i=0}^{d-1} 
\]
have the same number $s$ of non-zero elements. Observe, that it is also true for any other multi-index $I'=i_1', \ldots ,i_k'$. Indeed, the analogous reasoning, for other multi-index $I'_0=i_1, i_2', \ldots ,i_k'$ ($I'$ and $I'_0$ are equal on the first position) ensures us that the vectors
\[
\big( u^1_{i_1 i} \big)_{i=0}^{d-1} , \big( u^2_{i_2' i} \big)_{i=0}^{d-1} \ldots , \big( u^k_{i_k' i} \big)_{i=0}^{d-1}
\]
have the same number of non-zero elements, equal to $s$. 
From here, one can deduce it for arbitrary $I'=i_1', \ldots ,i_k'$. 
Since the inverse of unitary matrix is its conjugate transpose, the matrices $U_1^\dagger,\ldots,U_k^\dagger,\ldots$ provide the local unitary equivalence between $\ket{\psi'}$ and $\ket{\psi}$. 
Reasoning similar to the above proves that those matrices have the same number of non-zero elements in each row. This is equivalent to the fact that $U_1,\ldots,U_k$ have the same number of non-zero elements in each column. 
This proves the first statement of \cref{cor2} for matrices $U_1,\ldots,U_k$ (without stating the equality of element's norms). 
Observe, that by taking another set of indices $S \subset [n]$ one may extend this reasoning to all matrices $U_1,\ldots,U_k, \ldots, U_{2k}$. 

In fact, more detailed analysis of matrices $U_1,\ldots,U_k$ might be performed. 
Once more, fix $I=i_1, \ldots ,i_k$, and keep the notation introduced before in the proof. 
For each $\bar{I}'=i_1' ,\ldots , i_{k-1}' \in S_1 \times \cdots\times S_{k-1}$, we have:
\begin{equation}
\label{eq20}
\prod_{i_k' \in S_k} \ell_{\bar{I}' i_k'} = \big(u^1_{i_1'}\big)^s \cdots \big(u^{k-1}_{i_{k-1}'}\big)^s C_1
\end{equation}
where $C_1=\prod_{i_k' \in S_k} u^k_{i_k'}$. 
By \cref{eq16}, the left-hand side of \cref{eq20} is equal to
\begin{equation}
\label{eq21}
\prod_{i_k' \in S_k} p_{\phi_{\bar{I}' i_k'} '} 
\cdot \dfrac{\omega_{\bar{I}' i_k'}'}{\omega_I} =
\dfrac{1}{\omega_I} W^{[k-1]}_{\bar{I}'} 
\prod_{i_k' \in S_k} 
v^1_{L^1_{\bar{I}' i_k'}} \cdots v^k_{L^k_{\bar{I}' i_k'}} .
\end{equation}
Here, we use the notation of $W^{[k-1]}_{\bar{I}'}$ introduced in \cref{IIIA} with the slight modification. Namely, the product:
\[
W^{[k-1]}_{\bar{I}'}:= 
\prod_{i_k' \in S_k} \omega_{\bar{I}' i_k'}
\] 
runs only over all non-zero elements $\omega_{\bar{I}' i_k'}$, which is exactly $s$. 
One of the basic properties of MOLH is that in each row and on each position, all elements appear exactly once (see \cref{MOLH} for details). Hence, \cref{eq21} is equal to
\begin{equation}
\label{eq22}
\dfrac{1}{\omega_I} W^{[k-1]}_{\bar{I}'} 
C_2, \quad\quad\text{where}\quad 
C_2 =
\prod_{I'' \in S'} 
v^1_{i_1''} \cdots v^k_{i_k''} .
\end{equation}
\cref{eq20,eq21,eq22} combines to the following:
\begin{equation}
\label{eq23}
\big(u^1_{i_1'}\big)^s \cdots \big(u^{k-1}_{i_{k-1}'}\big)^s
\dfrac{1}{W^{[k-1]}_{\bar{I}'} }
=
\dfrac{1}{\omega_I} 
\dfrac{C_2}{C_1}
\end{equation}
for each $\bar{I}'=i_1' ,\ldots , i_{k-1}' \in S_1 \times \cdots\times S_{k-1}$, where constants $C_1$ and $C_2$ are independent on $\bar{I}'$. 
From \cref{eq23}, one can deduce the proportions of non-zero elements in rows of $U_1$. 
Indeed, choose a multi-index $\widetilde{I}' \in S_2 \times \cdots \times S_{k-1}$ and two indices $i_1' , i_1''\in S_1$. From \cref{eq23} applied to $\bar{I}' =i_1' \widetilde{I}'$ and $\bar{I}' =i_1'' \widetilde{I}'$ follows that:
\begin{equation}
\label{eq24}
\big(u^1_{i_1'}\big)^s 
\dfrac{1}{W^{[k-1]}_{i_1' \widetilde{I}'} }
=
\big(u^1_{i_1''}\big)^s 
\dfrac{1}{W^{[k-1]}_{i_1'' \widetilde{I}'} } .
\end{equation}
Since $W^{[k-1]}_{i_1' \widetilde{I}'}$ and $W^{[k-1]}_{i_1'' \widetilde{I}'}$ have the same norms, there is following equality: $|u^1_{i_1'}|=|u^1_{i_1''}|$. 
Those, however, under introduced notation denote elements of local unitary matrix: $u_{i_1 ,i_1'}$ and $u_{i_1 ,i_1''}$. 
We conclude that in $i_1$th row of the matrix $U_1$ there are exactly $s$ non-zero elements, all having the same norm. 
Since $U_1$ is a unitary matrix, in particular, it preserves the norms, all non-zero elements form the $i_1$th row have norm equal to $1/\sqrt{s}$. 
Analogous reasoning might be performed with respect to each row of matrix $U_1$ and further to each matrix $U_i$, $i=1,\ldots,2k$. 
Therefore, we conclude the equality of element's norms in the first statement of \cref{cor2}.

In fact, \cref{eq24} is the last general result concerning the characteristic of local unitary matrices $U_i$. 
We restrict now to the case when $\omega_I \equiv \omega_I ' \equiv1$ for all multi-indices $I=i_1, \ldots i_k$. 
Obviously, $W^{[k-1]}_{i_1' \widetilde{I}'} =1 $ for all multi-indices $i_1' \widetilde{I}' \in S_1 \times \cdots\times S_{k-1}$. 
Thus, from \cref{eq24} follows that all non-zero entries of $i_1$th row have not only equal norms but also equal $s$th powers. 
In other words, all non-zero entries of $i_1$th row are $s$th roots of unity up to some scaling complex number $w_{i_1}$. 
Analogous reasoning might be performed with respect to each row of matrix $U_1$. 
In such a way we obtain scaling factors $w_{i}$ for $i=1,\ldots,d$. 
Observe that matrix $U_1$ multiplied by the diagonal matrix
\[
\sqrt{s} \; \text{diag} \Big( w_1 ,\ldots  ,w_d \Big)
\]
consists only of zeros and $s$th roots of unity. Similarly, one can show the same property for local matrices $U_2,\ldots,U_{2k}$. 
This proves the third statement of \cref{cor2}.
\end{proof}

We conjecture that the matrices $U_1,\ldots,U_k, \ldots, U_{2k}$ from \cref{cor2} have the block structure:
\[
U_\ell=
\begin{bmatrix}
S_{\ell, 1} & 0 & \cdots& 0\\
0& S_{\ell,2} &  \cdots& 0\\
\vdots & \vdots & \ddots & \vdots\\
0& 0&  \cdots& S_{\ell,d/s}
\end{bmatrix}
,
\]
where matrices $S_{i,j} $ are $s \times s$ unitary matrices with all entries of the same norm. 
Unfortunately, coefficients $v_{ij}^\ell (i_1, \ldots, i_{\ell-1}) $ depend on indices $i_1, \ldots, i_{\ell-1}$, which impose selection of the index $I$ at the beginning of the proof above. 
This enables us to deduce the general block structure of matrices $U_i$. 
Notice that not all unitary matrices having the same number of non-zero elements in each row and column of the same norm each are necessarily of the block structure.

Concluding the block structure of matrices $U_\ell$ is is not within our reach yet. Nevertheless, in two specific cases when $s=1$ or $s=d$ the block structure is obvious. 
Interestingly, those two values of $s$ are the only possibilities for most of small dimensional AME(2k,d) states, see \cref{small}. 
This follows from the requirement of appropriate dimensional MOLH extension. 

\begin{proof}[Proof of \cref{prop1=,verification2}]
Suppose that the matrix $U:=U_1 \otimes\ldots\otimes U_{2k}$ provides an LU-equivalence between two AME(2k,d) states with minimal support. 
Assumption on $k$ and $d$ being small enough are equivalent to the fact that the only possible values of $s$ are simply $1$ and $d$. 

In the first case when, $s=1$, matrices $U_i$ are, by the definition, monomial matrices.
This gives the second possibility in \cref{prop1=}. 
Each monomial matrix is a product of permutation and diagonal matrix, hence the form of \cref{AMEeq2} in \cref{verification2} follows. 
Similarly to the proof of \cref{verification} presented in \cref{verification2APP}, it may be shown that the form of diagonal matrices $D_i$ from \cref{AMEeq2} are exactly as it is indicated in \cref{verification2}.

The second case, when $s=d$, is new and goes beyond the analysis performed so far. 
We investigate matrix $U_1$.
Let us recall the last general formula in our analysis, namely \cref{eq24}. 
By descrambling the notation introduced in the proof of \cref{cor2}, \cref{eq24} takes the following form:
\begin{equation}
\label{Descrambling}
\big(u^1_{i j}\big)^d 
\dfrac{1}{W^{[k-1]}_{j  I} }
=
\big(u^1_{i j'}\big)^d 
\dfrac{1}{W^{[k-1]}_{j'I} } ,
\end{equation}
where $i\in [d]$ and $ j,j'\in [d]$ are arbitrary indices, and $I\in [d]^{k-2}$ is arbitrary multi-index. 
This equation describes the proportion of $d$th powers of elements in $i$th row of the matrix $U_1$. 
Observe that they are independent on the index $i$. 
Indeed, multiply the matrix $U_1$ on the right-hand side by the following diagonal matrix: 
\[
 \overleftarrow{D}_1=
\text{diag} \Bigg( \sqrt[\leftroot{-3}\uproot{3}d]{(W^{[k-1]}_{0,I})},
\ldots ,
\sqrt[\leftroot{-3}\uproot{3}d]{(W^{[k-1]}_{d-1,I})} \Bigg) .
\]
Observe that the entries of $ \bar{U}_1 := U_1 \overleftarrow{D}_1 $ satisfies 
\begin{equation*}
\big(\bar{u}^1_{i j}\big)^d 
=
\big(\bar{u}^1_{i j'}\big)^d ,
\end{equation*}
for any number $i$ indexing the rows and any pair of indices $j,j' \in [d]$. 
As we already discussed while proving \cref{cor2}, the hermitian-conjugate matrix $U^\dagger$ provides the reverse LU-equivalence. 
One can analyze the proportions of elements in rows of $U^\dagger_1$ in the same way as we did for $U_1$. 
The analogue of \cref{Descrambling} yields the following conclusions on the columns of matrix $U_1$:
\begin{equation}
\label{Descrambling2}
\big(u^1_{i j}\big)^d 
\dfrac{1}{(W^{[k-1]}_{i I} )'}
=
\big(u^1_{i' j}\big)^d 
\dfrac{1}{(W^{[k-1]}_{i'I} )'} ,
\end{equation}
where $i,i', j\in [d]$ are arbitrary indices; and $I\in [d]^{k-2}$ is arbitrary multi-index. 
Therefore, by multiplying the matrix $\bar{U}_1$ from the left-hand side by the diagonal matrix
\[
 \overrightarrow{D}_i =
\text{diag} \Bigg( \sqrt[\leftroot{-3}\uproot{3}d]{(W^{[k-1]}_{0,I})'},
\ldots ,
\sqrt[\leftroot{-3}\uproot{3}d]{(W^{[k-1]}_{d-1,I})'} \Bigg) ,
\]
we obtain the matrix
\[
\widetilde{U}_1 :=  \overrightarrow{D}_i U_1  \overleftarrow{D}_1
\]
with the following property:
\[
\big(\widetilde{u}^1_{i j}\big)^d 
=
\big(\widetilde{u}^1_{i' j'}\big)^d ,
\]
for any indices $i,i'j,j' \in [d]$. 
Up to some global factor $\omega_1$, all entries of the matrix $\widetilde{U}_1 $ are $d$th roots of unity. 
By the definition $\omega_1 \widetilde{U}_1 $ is a Butson-type matrix. 

We have shown that under the assumption $s=d$, the statement in \cref{AMEeq1} of \cref{verification2} holds for the matrix $U_1$ and the set $S $ of the consecutive next $k-2$ indices. 
The same reasoning might be applied for other matrix $U_i$ and set $S$. 
The global phase in \cref{AMEeq1} is then a multiplication of obtained factors $\omega_i$. 

What is left for the analysis is the necessary condition for existence of LU-equivalence.
Consider \cref{Descrambling}. 
Since $I\in [d]^{k-2}$ is arbitrary multi-index and the ratio of two matrix elements: $u^1_{i j} /u^1_{i j'} $ is constant, we immediately obtain:
\[
\dfrac{W^{[k-1]}_{jI}}{W^{[k-1]}_{j'I}} =
\dfrac{W^{[k-1]}_{jI'}}{W^{[k-1]}_{j'I'}}
\] 
for any multi-indices $I,I'\in [d]^{k-2}$. 
Similarly, from \cref{Descrambling2}, we have:
\[
\dfrac{(W^{[k-1]}_{jI})'}{(W^{[k-1]}_{j'I})'} =
\dfrac{(W^{[k-1]}_{jI'})'}{(W^{[k-1]}_{j'I'})'} .
\]
This ends the proof of \cref{verification2}. 
Obviously, \cref{prop1=} is an immediate consequence of \cref{verification2}.
\end{proof}

\section{The proof of \cref{AME55}}
\label{AppD}

We begin this section manifesting the problem of LU- verification of given two AME states. 
AME states are $k$-uniform states which saturates the singleton bound on the uniformity $k$. 
In general, for given two $k$-uniform states, one can compare ranks of reduced density matrices in order to exclude a local equivalence between \cite{RanksReducedMatrices}. 
We illustrate  this phenomenon on the following example of two $1$-uniform states of four qubits:
\[
\Ket{\psi_1} =\dfrac{1}{\sqrt{2}}\sum_{i=0}^{1} \Ket{i,i,i,i}, \quad 
\Ket{\psi_2} =\dfrac{1}{2}\sum_{i,k=0}^{1} \Ket{i,k,k,i+k} .
\]
Observe that $\text{rank} \rho_{12} ( \psi_1 )\neq \text{rank} \rho_{12} ( \psi_2 ) $, hence the states $\Ket{\psi_1}$ and $\Ket{\psi_1}$ are not LU-equivalent. 
Nevertheless, this simple argument is never conclusive for both states being AME. 
Indeed, all reduced density matrices of AME state $\ket{\psi}$ have precisely determined ranks: $\text{rank} \rho_S (\psi ) =\text{min}\{|S| ,|S^c|\}$.

Our initial attempt for showing that states $\ket{\text{AME(5,d)}}$ and $\ket{\text{AME(5,d)'}}$
presented in \cref{ex3} and \cref{ex5} are not locally equivalent was to reduce this problem to smaller subsystems. 
We investigated the reduced density matrices: 
\begin{align*}
\rho_{345} \Big( &\text{AME(5,d)'}\Big)   \nonumber \\
&= \sum_{i,j=0}^{d-1} \ketbra{i+j, i+2j, i+3j}
\end{align*}
\noindent
and
\begin{align*}
\rho_{345} \Big( &\text{AME(5,d)} \Big) \\
&= \sum_{i,j=0}^{d-1}  \Bigg( \sum_{k,k'=0}^{d-1}  
\omega^{(i +3j )(k-k')}
 \ket{i+j, k +i+2j,  k}  \\
&  \;\;\;\,\,\,\,\,\,\,\,\,\,\,\,\,\,\,\,\quad\quad\quad
\bra{i+j, k' +i+2j, k'} \Bigg) ,
\end{align*}
where $\omega$ is $d$th root of unity. 
Even though $\rho_{345}( \text{AME(5,d)'} )$ and $\rho_{345} ( \text{AME(5,d)}) $ have the same rank, we attempt to show that they are not LU-equivalent. 
Surprisingly, it turned out that they are. 
We present both reduced density matrices for a local dimension $d=3$ on \cref{fig1} and \cref{fig2} respectively.

\begin{figure}
 .
\]
provides a local equivalence between $\rho_{345}( \text{AME(5,5)'} )$ and $\rho_{345} ( \text{AME(5,5)}) $. 
In order to provide the general formula, for any odd local dimension $d$, we introduce the following recursive construction of two $(d \times d)$-dimensional matrices:
\[
W=\left( w_{ij}\right)\text{  and  } 
V =\left( v_{ij}\right)
\]
with coefficients in GF($d$):
\begin{enumerate}[(a)]
\item $ w_{00} =0$, $v_{00} =0$;
\item $w_{0(j+1)}$ is defined by the formula: $ w_{0 (j+1)} =w_{0j} +2j$; similarly, $v_{0 (j+1)} =v_{0j} -2j$ (definition of firsts rows);
\item $ w_{ij} :=  w_{(i-1) (j-1)}$, similarly, $v_{ij} := {v}_{(i-2) (j-1)}$ (definition of succeeding rows).
\end{enumerate}
Notice that in order to define rows of the matrix $V$ correctly, we impose $2 \nmid d$. 
The matrices $\widetilde{U}_4$ and $\widetilde{U}_5 $ are of the following form
\[
\widetilde{U}_4 =\left( \omega^{w_{ij}} \right)  \text{  and  } 
\widetilde{U}_5 =\left( \omega^{v_{ij}}\right) .
\]
This construction overlap with the aforementioned constructions in dimensions $d=3,5$. 
In fact, there is a close formula for entries of matrices $W$ and $V$ (and hence for matrices $\widetilde{U}_4$ and $\widetilde{U}_5$) given in terms of triangular numbers:
\[
{w}_{ij} = 2t_{j-i-1}, \quad {v}_{ij} =
\left\lbrace 
\begin{matrix}
-2t_{j-i/2-1} & \text{ for } 2\mid i\\
-2t_{j-(i+d)/2-1} & \text{ for } 2\nmid i\\
\end{matrix}
\right. 
\]
where $t_i = 0, 1, 3, 6, 10, 15,\ldots $ are consecutive triangular numbers defined as
\begin{equation}
\label{formula}
t_k =\sum_{i=0}^k i =\dfrac{(k+1)k}{2}.
\end{equation}

\begin{lemma}
\label{LastLEmma}
For any odd local dimension $d$, the matrices $\widetilde{U}_4$ and $\widetilde{U}_5 $ provide the LU-equivalence between $\rho_{345}( \text{AME(5,d)'} )$ and $\rho_{345} ( \text{AME(5,d)}) $, i.e.
\[
\rho_{345}( \text{AME(5,d)'} ) = \Id \otimes \widetilde{U}_4\otimes \widetilde{U}_5 
\Bigg(\rho_{345} ( \text{AME(5,d)}) \Bigg).
\]
\end{lemma}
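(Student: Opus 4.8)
\emph{Proof strategy.} Since both states are AME, their reductions to three parties have rank $d^2$ and are, up to the common normalisation $1/d^2$, projectors onto $d^2$-dimensional subspaces; hence the asserted identity is equivalent to the statement that $U:=\Id\otimes\widetilde U_4\otimes\widetilde U_5$ maps $\operatorname{supp}\rho_{345}(\text{AME(5,d)})$ onto $\operatorname{supp}\rho_{345}(\text{AME(5,d)'})$. The plan is to compute both reduced states by tracing out the first two parties, and then to prove this support identity block by block. For the minimal-support state one gets $\rho_{345}(\text{AME(5,d)'})\propto\sum_{a,t}\ketbra{a,a+t,a+2t}$, a diagonal projector whose support splits, according to the value $a$ of the third party, into the span of the basis vectors $\ket{b,c}$ of parties $4,5$ for which $(a,b,c)$ is an arithmetic progression. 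For the other state one gets a block-diagonal operator $\rho_{345}(\text{AME(5,d)})\propto\sum_a\ketbra a\otimes\sum_\ell\ketbra{\mu_{a,\ell}}$ with $\ket{\mu_{a,\ell}}=\sum_k\omega^{(2\ell+a)k}\ket{\ell+a+k}\otimes\ket{k}$; a direct computation shows that for fixed $a$ the $d$ vectors $\ket{\mu_{a,\ell}}$ are pairwise orthogonal and of equal norm, so block $a$ is proportional to the projector onto their span. As the $\Id$ factor respects the decomposition over the third party, it suffices to show, for each $a$, that $\widetilde U_4\otimes\widetilde U_5$ maps $\operatorname{span}\{\ket{\mu_{a,\ell}}\}_\ell$ onto $\operatorname{span}\{\ket{b,c}:(a,b,c)\ \text{arithmetic}\}$.

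The heart of the argument is a Gauss-sum computation. For fixed $a,\ell$, expand $(\widetilde U_4\otimes\widetilde U_5)\ket{\mu_{a,\ell}}$ in the computational basis of parties $4,5$: the coefficient of $\ket{r,s}$ is $\tfrac1{d^2}\sum_k\omega^{E_{r,s}(k)}$, where $E_{r,s}(k)$ is the sum of the phase $(2\ell+a)k$ of $\ket{\mu_{a,\ell}}$ and the two entry-exponents of $\widetilde U_4,\widetilde U_5$, which by construction are quadratic polynomials in the index differences $(\ell+a+k)-r$ and $k-s$ (the triangular-number exponents, read in $\mathbb Z_d$; this already uses that $d$ is odd, so that $2^{-1}\bmod d$ exists). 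Expanding, $E_{r,s}(k)$ becomes a degree-$2$ polynomial in $k$ whose $k^2$-coefficient is the difference of the $k^2$-coefficients contributed by $\widetilde U_4$ and by $\widetilde U_5$ — and these cancel. Thus $E_{r,s}(k)$ is affine in $k$, so $\sum_k\omega^{E_{r,s}(k)}$ is a complete geometric sum over $\mathbb Z_d$: it vanishes unless the coefficient of $k$ in $E_{r,s}$ is $\equiv 0\pmod d$, in which case it equals $d$ times a phase.

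It remains to identify the surviving locus. Computing the $k$-coefficient of $E_{r,s}(k)$, one checks that it vanishes exactly when $s$ is a fixed affine function of $r$ and $a$ that does not depend on $\ell$, and that the resulting set $\{(r,s)\}$ is precisely $\{(b,c):(a,b,c)\ \text{arithmetic}\}$. Hence $(\widetilde U_4\otimes\widetilde U_5)\ket{\mu_{a,\ell}}$ is a phase-weighted superposition of exactly those basis vectors $\ket{b,c}$ with $(a,b,c)$ arithmetic; letting $\ell$ range gives $\widetilde U_4\otimes\widetilde U_5\bigl(\operatorname{span}\{\ket{\mu_{a,\ell}}\}_\ell\bigr)\subseteq\operatorname{span}\{\ket{b,c}:(a,b,c)\ \text{arithmetic}\}$, and since $\widetilde U_4\otimes\widetilde U_5$ is a scalar multiple of a unitary acting between two $d$-dimensional spaces, the inclusion is an equality. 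Reassembling the blocks, together with the fact that both reduced states are maximally mixed on their supports, gives $U\rho_{345}(\text{AME(5,d)})U^\dagger=\rho_{345}(\text{AME(5,d)'})$, which is the lemma.

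The one genuinely delicate point is the cancellation of the $k^2$-terms in the second step: it is exactly what the triangular-number definition of $\widetilde U_4$ and $\widetilde U_5$ is engineered to produce, it fails for even $d$ (where $2^{-1}$ does not exist), and pushing it through requires careful arithmetic in $\mathbb Z_d$ — including the two branches (even/odd row index) in the definition of the $v_{ij}$. The cleanest route is to first derive, by induction from the recursion, the quadratic closed forms for the exponents $w_{ij}$ and $v_{ij}$, after which both the cancellation and the identification of the surviving locus are short algebraic checks; everything else — the reductions, the block structure, the orthogonality of the $\ket{\mu_{a,\ell}}$, and the normalisation bookkeeping — is routine.
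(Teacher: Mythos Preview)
Your proposal is correct and follows essentially the same route as the paper. Both arguments hinge on the identical computation: expanding the action of $\widetilde U_4\otimes\widetilde U_5$ using the triangular-number exponents, observing that the quadratic-in-$k$ contributions from $w_{ij}$ and $v_{ij}$ cancel (this is the ``elementary transformations'' step in the paper), and then collapsing the resulting complete geometric sum over $\mathbb Z_d$ to a Kronecker delta. Your only genuine addition is the preliminary reduction to equality of supports via the observation that both reduced states are scalar multiples of rank-$d^2$ projectors, which cleanly separates the normalisation bookkeeping from the core exponent computation; the paper instead tracks matrix elements $\ket{s,k+s+j,k}\bra{s,k'+s+j',k'}$ directly and recovers the phase $\omega^{(s+2j)(k-k')}$ on the nose. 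Note also that the paper runs the map in the opposite direction (applying $\Id\otimes\widetilde U_4\otimes\widetilde U_5$ to the diagonal state $\rho_{345}(\text{AME(5,d)}')$ and landing on $\rho_{345}(\text{AME(5,d)})$), which is of course equivalent.
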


\begin{proof}
Observe that:
\begin{align*}
\sum_{s,j =0}^{d-1} &\ket{s,s+j,s+2j}\bra{s,s+j,s+2j} \xmapsto{\Id_3 \otimes\widetilde{U}_4 \otimes\widetilde{U}_5 } \\
&\begin{aligned} \sum_{s=0}^{d-1}\sum_{\substack{m,m',k, \\ k' =0}}^{d-1} 
\Big( \sum_{j=0}^{d-1} &\omega^{\big(
w_{(s+j) m} \overline{w_{(s+j) m'}} 
v_{(s+2j) k} \overline{v_{(s+2j) k'}} \big)} \\
&\ket{s,m,k}\bra{s,m',k'} \Big) .
\end{aligned}\\
\end{align*}
We examine the coefficient by 
\[
\ket{s,k+s+j,k}\bra{s,k'+s +j',k'} 
\]
in the expression above:
\begin{align}
\label{xxx}
\dfrac{1}{d}\sum_{i=0}^{d-1} &
\omega^{\big( 
w_{(s+i) (k+j+s)} \overline{w}_{(s+i) (k'+j'+s)} 
v_{(s+2i) k} \overline{v}_{(s+2i) k'} \big)}=\nonumber\\
&\dfrac{1}{d}\sum_{i=0}^{d-1} 
\omega^{2 \big( 
t_{k+i-j-1} -t_{k'+j'-i-1} -
t_{k-s/2-i-1} t_{k' -s/2-i-1} \big)} ,
\end{align}
where we assumed $ 2\mid s$; argument for the opposite case, where $ 2\nmid s$, is similar to the one we present. Using \cref{formula}, after elementary transformations \cref{xxx} is equal to:
\begin{align*}
\dfrac{1}{d} &\omega^{\big( s(k-k') +2jk-2j'k' \big)}
\omega^{\big( j(j-1) -j'(j'-1) \big)}
\underbrace{\sum_{i=0}^{d-1} \omega^{\big( 2i(j-j')\big)}}_{d \delta_{j j'}} =\\
&\omega^{(s+2j)(k-k')},
\end{align*}
what remained to be shown.
\end{proof}

\begin{figure}

\caption{\label{fig3} The matrix $\Id \otimes\widetilde{U_4}\otimes\widetilde{U_4}$ transforms $\rho_{cde} \left( \psi \right)$ onto $\rho_{cde} \left( \phi \right)$. The figure above presents how does it act on the first block of \cref{fig2}.}
\end{figure}

\begin{proof}[Proof of \cref{AME55}]
We show the statement by contradiction. 
Assume that states $\ket{\text{AME(5,d)}}$ and $\ket{\text{AME(5,d)'}}$ are LU-equivalent by some unitary matrices: 
\begin{equation*}
U_1 \otimes U_2 \otimes U_3 \otimes U_4 \otimes U_5=: U_{12} \otimes U_{345}. 
\end{equation*}
We keep "$\otimes$" in our notation in order to distinguish it from the matrix multiplication. 
Since the (partial) trace is invariant under cycling permutations, we have:
\begin{align*}
&\rho_{345} \big( \text{AME(5,d)'}\big) \nonumber \\
&=\tr_{12}  \ketbra{\text{AME(5,d)'}}  \nonumber \\
&=\tr_{12} \left( U_{12}\otimes U_{345} \ketbra{\text{AME(5,d)}}  U_{12}^{-1} \otimes U_{345}^{-1} \right)   \nonumber \\
&= U_{345} \Big(
\tr_{12} \left(U_{12} \ketbra{\text{AME(5,d)}}  U_{12}^{-1} \right) 
\Big) U_{345}^{-1}  \nonumber \\
&= U_{345} \Big(
\tr_{12} \left( \ketbra{\text{AME(5,d)}}   \right) 
\Big) U_{345}^{-1}    \nonumber \\
&= U_{345} \Big(
\rho_{345} \left( \text{AME(5,d)}\right)
\Big) U_{345}^{-1} .  \nonumber \\
\end{align*}
\noindent
Hence, the operator $ U_{345} := U_3 \otimes U_4 \otimes U_5$ provides the local equivalence between $\rho_{345} (\text{AME(5,d)} )$ and $\rho_{345} (\text{AME(5,d)'})$. 

Notice that in \cref{LastLEmma}, we pointed out that the earlier constructed matrices $\widetilde{U}_4$ and $\widetilde{U}_5 $ provide the LU-equivalence between $\rho_{345}( \text{AME(5,d)'} )$ and $\rho_{345} ( \text{AME(5,d)}) $ for any odd local dimension $d$. 
Precisely:
\[
\rho_{345}( \text{AME(5,d)} ) = \Id \otimes \widetilde{U}_4\otimes \widetilde{U}_5 
\Bigg(\rho_{345} ( \text{AME(5,d)'}) \Bigg).
\]
Therefore from \cref{prop1} we conclude that 
\[
 U_3 =  M_3  ,\quad
 U_4=  M_4 \widetilde{U}_4 ,\quad
U_5 =M_5  \widetilde{U}_5 ,
\]
for some monomial matrices $M_3, M_4, M_5$. 
Indeed, $U_{345} (\Id \otimes \widetilde{U}_4\otimes \widetilde{U})^{-1}$ constitutes an automorphism of $\rho_{345} ( \text{AME(5,d)'}) $, and hence by \cref{prop1}, is a tensor product of monomial matrices. 
We shall prove that such restriction on matrices $U_3,U_4,U_5$ leads to a contradiction.

To sum up the discussion so far, LU-equivalence between states $\ket{\text{AME(5,d)}}$ and $\ket{\text{AME(5,d)'}}$ has the following form:
\begin{equation*}
U_1 \otimes U_2  \otimes M_3 \otimes M_4 \widetilde{U}_4 \otimes M_5 \widetilde{U}_5
\end{equation*}
where $U_i$ are arbitrary unitary matrices, while $M_i$ are product of diagonal and permutation matrices. 
Therefore
\begin{equation}
\label{star}
\ket{\text{AME(5,d)}} = \Big( U_1 \otimes U_2  \Big)  \ket{i,j}  \otimes M_3 \ket{i+j} \otimes  B_{ij}
\end{equation}
where
\[
B_{ij} := \Big( M_4 \widetilde{U}_4 \otimes M_5 \widetilde{U}_5 \Big)   
 \ket{i+2j,i+3j}
\]
Observe that $B _{ij}$ are linearly independent. Indeed, they are unitary transformed linearly independent vectors $\ket{i+2j,i+3j}$. 

We shall show that matrices $U_1,U_2$ are monomial matrices. 
Suppose for simplicity, that $M_3 =\Id$. 
Recall that $\ket{\text{AME(5,d)}}$ has the following form:
\begin{equation}
\label{aux2}
\ket{\text{AME(5,d)}} =  \Ket{i,j,i+j} \otimes   C_{ij} ,
\end{equation}
where
\[
C_{ij} = \omega^{(i+3j)k}  \Ket{k+2j,k} .
\]
We compare this expression with \cref{star}. 
Suppose now, that in some column of the matrix $U_1$, there are at least two non-zero elements: $u_{lk}^1, u_{l'k}^1$ ($l\neq l'$); consider some non-zero element $u_{nm}^2$ of the matrix $U_2$. 
Observe, that it leads to the following expressions 
\[
\ket{l,n} \otimes \ket{k+m} \otimes B_{km}  \quad\text{and} \quad
\ket{l',n} \otimes \ket{k+m} \otimes B_{km}
\]
in \cref{star}. 
Clearly, there is an additional contribution from other non-zero elements of matrices $U_1$ and $U_2$. 
Since $B_{ij}$ are linearly independent, there are the following terms
\[
\ket{l,n} \otimes \ket{k+m} \otimes D_{ln}  \quad\text{and} \quad
\ket{l',n} \otimes \ket{k+m} \otimes D_{l'n}
\]
in \cref{star}, where $D_{ln}$ and $D_{l'n}$ are some non-zero elements. 
Observe, that such terms might appears in \cref{aux2} only if $l+n =k+m$ and $l'+n =k+m$, which is contradictory to $l\neq l'$.
Similarly, one can show that none of the matrix $U_1$ columns have two non-zero elements. 
We have shown that, indeed, matrices $U_1$ and $U_2$ are monomial under the assumption $M_3 =\Id$. 
Nevertheless, the assumption $M_3 =\Id$ is not essential here, the similar argument might be given for arbitrary monomial matrix $M_3$. 
Hence $U_1$ and $U_2$ are monomial matrices in general. 

Observe that $\text{supp} (B_{ij} )=d^2$. Indeed, 
\[
\text{supp} \Big(\widetilde{U}_4 \otimes \widetilde{U}_5 \ket{i+2j,i+3j} \big) =d^2 ,
\]
and the monomial operators $M_4 ,M_5$ do not change the support. 
Since $U_1$ and $ U_2 $ are monomial matrices, the support of the right-hand side in \cref{star} is equal to $d^4$. 
This is contradictory to the fact that $ \text{supp} (\ket{\text{AME(5,d)}}  )=d^3$.
\end{proof}

We have shown that two families of AME(5,d) states are not LU-/SLOCC-equivalent. 
Even though only a special family of states is considered here, analysis of the proof of \cref{AME55} reveals the general method of LU-verification of AME and $k$-uniform states where one of them is written with minimal support. 
Firstly, the formula for LU-equivalence between reduced $(k+1)$-dimensional systems should be provided. 
Secondly, based on \cref{prop1}, one can classify such equivalences between those reduced states. 
Finally, it should be show that none of such equivalences can be extended to the local equivalence of initial states.

\bibliography{Physics.bib}
\end{document}